\documentclass{article}
\usepackage{graphicx} % Required for inserting images
\usepackage[numbers,sort,compress]{natbib}
\usepackage{pslatex}
\usepackage{amsfonts,color,morefloats}
\usepackage{amssymb,amsmath,latexsym,amsthm}
\usepackage{stmaryrd}
\usepackage{cases}
\usepackage{hyperref}
\usepackage{color}
\usepackage{makecell}
\usepackage{bm}
\usepackage{comment}
\usepackage{enumitem}

\newtheorem{theorem}{Theorem}
\newtheorem{lemma}{Lemma}
\newtheorem{proposition}{Proposition}
\newtheorem{corollary}{Corollary}

\newtheorem{definition}{Definition}

\newtheorem{result}{Result}
\newtheorem{remark}{Remark}

\newcommand{\F}{\mathbb{{F}}}
\newcommand{\C}{\mathbb{{C}}}
\newcommand{\Q}{\mathbb{{Q}}}
\newcommand{\Fp}{\F_p}
\newcommand{\Fq}{\F_q}
\newcommand{\FQ}{\F_Q}

\newcommand{\Ft}{\F_2}
\newcommand{\Fqm}{\F_{q^m}}

\newcommand{\al}{{\alpha}}
\newcommand{\be}{{\beta}}
\newcommand{\ga}{{\gamma}}
\newcommand{\la}{{\lambda}}
\newcommand{\om}{{\omega}}
\newcommand{\sig}{{\sigma}}

\newcommand{\spa}{\text{span}}
\newcommand{\supp}{\text{Supp}}
\newcommand{\zp}{{\zeta_p}}

\newcommand{\lcm}{{\mathrm{lcm}}}

\newcommand{\cC}{{\mathcal{C}}}

\newcommand{\cR}{{\mathcal{R}}}
\newcommand{\cS}{{\mathcal{S}}}

\newcommand{\es}{\emptyset}
\newcommand{\lc}{\lceil}
\newcommand{\ol}{\overline}
\newcommand{\rc}{\rceil}

\newcommand{\sm}{\setminus}
\newcommand{\td}{\tilde}
\newcommand{\Tr}{{\mathrm{Tr}}}
\newcommand{\Trqp}{{\Tr_{q/p}}}
\newcommand{\Trqt}{{\Tr_{q/2}}}
\newcommand{\TrQt}{{\Tr_{Q/2}}}

\newcommand{\ba}{{\bf a}}
\newcommand{\bc}{{\bf c}}
\newcommand{\bh}{{\bf h}}
\newcommand{\bs}{{\bf s}}

\newcommand{\bv}{{\bf v}}
\newcommand{\bw}{{\bf w}}
\newcommand{\bx}{{\bf x}}
\newcommand{\by}{{\bf y}}
\newcommand{\bz}{{\bf 0}}

\newcommand{\wt}{\widetilde}
\newcommand{\wth}{\text{wt}_H}

\newcommand{\ud}{\mathrm{d}}

\newcommand{\qbinom}[2]{%
  \left[ \genfrac{}{}{0pt}{}{#1}{#2} \right]%
}

\title{The generalized covering radii of Melas codes}
\author{
Shuxing Li\thanks{Department of Mathematical Sciences, University of Delaware, Newark, DE 19716, USA (email: shuxingl@udel.edu). The work of Shuxing Li was supported by U.S. National Science
Foundation under Grant DMS-2452236 and the University of Delaware Research Foundation Strategic Initiative (UDRF-SI) program.}
\and
Maosheng Xiong\thanks{Department of Mathematics, The Hong Kong University of Science and Technology, Hong Kong (email: mamsxiong@ust.hk). The work of Maosheng Xiong was supported by the Research Grants Council (RGC) of Hong Kong under Grant 16306626.}
}

\date{}

\begin{document}

\maketitle

\begin{abstract}
The generalized covering radii have recently emerged as fundamental parameters of linear codes with applications to database linear querying. In this paper, we study the generalized covering radii $\rho_t(M(m,q))$ of Melas codes $M(m,q)$ over any finite field $\mathbb{F}_q$. We determine $\rho_2(M(m,q))$ for all $q$, and for a general $t \ge 3$, we prove that $\rho_t(M(m,q)) \in \left\{2t,2t+1\right\}$ for $q \in \{2,3\}$ and $\rho_t(M(m,q))=2t$ for $q \ge 4$ whenever $m$ is sufficiently large. These results extend recent work on the covering radius of Melas codes.  
\end{abstract}

\section{Introduction} \label{sec-intro}
The notion of generalized covering radii was recently introduced by Elimelech, Firer, and Schwartz as a fundamental property of linear codes \cite{EFS21}. It was originally motivated by applications to database linear querying, including private information-retrieval protocols, where the goal is to reduce access complexity. Since then, generalized covering radii have attracted considerable attention. In particular, extensive work has examined the generalized covering radii of specific families of codes, including Reed-Muller codes \cite{EWS22}, binary primitive double-error-correcting BCH codes \cite{OO26,XY26,YS25}, binary primitive triple-error-correcting BCH codes \cite{EZ26,OO26b}, and certain binary cyclic codes \cite{LZM+26}. Moreover, a geometric approach to generalized covering radii of linear codes has been presented in \cite{AMNT26}.

Let $\Fq$ be the finite field of $q$ elements. For a positive integer $n$, we use $[n]$ to denote the set of consecutive positive integers $\{1,2,\ldots,n\}$. Recall that the Hamming weight of $\bx=(x_1,x_2,\ldots,x_n) \in \Fq^n$ is defined to be
$$
\wth(\bx) \triangleq |\{ i \in [n] \mid x_i \ne 0 \}|
$$
and the Hamming distance between $\bx$ and $\by$ in $\Fq^n$ is defined to be
$$
\ud_H(\bx,\by) \triangleq \wth(\bx-\by).
$$
An $[n,k,d]_q$ linear code $\cC$ is a $k$-dimensional vector subspace of the ambient space $\Fq^n$ over $\Fq$, such that its minimum distance $d$ is given by 
$$
d=\min_{\substack{ \bx, \by \in \cC \\ \bx \ne \by}} \ud_H(\bx,\by).
$$
When the minimum distance is not specified, we refer to $\mathcal{C}$ simply as an $[n,k]_q$ linear code. While the minimum distance describes the distance among the vectors within the linear code $\cC$, the \emph{covering radius} of the linear code $\cC$ reflects the distance between the code $\cC$ and the ambient space $\Fq^n$. Indeed, the covering radius $\rho_1(\cC)$ of the linear code $\cC$ is defined as
$$
\rho_1(\cC) \triangleq \max_{\bx \in \Fq^n} \min_{\bc \in \cC} \ud_H(\bx,\bc).
$$
Equivalently, the covering radius of the linear code $\cC$ can be defined in the following way.

\begin{definition}[covering radius]
\label{def-cr}
Let $\cC$ be an $[n,k]_q$ linear code with a parity-check matrix $H$ having columns $\bh_1,\ldots,\bh_n \in \Fq^{n-k}$. The covering radius $\rho_1(\cC)$ of $\cC$ is the smallest nonnegative integer $r$ such that for each $\bs \in \Fq^{n-k}$, there exists a subset $I  \subseteq [n]$ with $|I| \le r$ satisfying $\bs \in \langle \bh_{i} \mid i \in I \rangle$.
\end{definition}

Similar to the minimum distance, the covering radius is a fundamental metric of linear codes that measures how well the codewords are spread throughout the ambient space. There has been a significant body of research regarding the covering radius, see for instance \cite{BLP98,CHLL97,CKMS85,CLLM97} and the references therein.

Motivated by the study of database linear querying, the notion of covering radius of linear code has been generalized recently in \cite{EFS21} by Elimelech, Firer, and Schwartz. 

\begin{definition}[generalized covering radius of linear codes {\cite[Definition 1]{EFS21}}]
\label{def-gcr1}
Let $\cC$ be an $[n,k]_q$ linear code with a parity-check matrix $H$ having $n$ columns $\bh_1, \bh_2, \ldots, \bh_n \in \Fq^{n-k}$. The $t$-th  generalized covering radius $\rho_t(\cC)$ of $\cC$, is the smallest nonnegative integer $r$ such that for any $t$ vectors $\bs_1, \bs_2, \ldots, \bs_t \in \Fq^{n-k}$, there exists a subset $I  \subseteq [n]$ with $|I| \le r$ satisfying $\{\bs_1,\ldots,\bs_t\} \subseteq \langle \bh_{i} \mid i \in I \rangle$. 
\end{definition}   

Consider a database whose queries are linear combinations of the columns of a matrix $H$. Regard $H$ as the parity-check matrix of a linear code $\cC$. The $t$-th generalized covering radius $\rho_t(\cC) = r$ has a natural interpretation in this setting: given any batch of $t$ queries from users, it suffices to access at most $r$ columns of $H$ in order to answer all $t$ queries. Thus, the generalized covering radius serves as a measure of access complexity.

\begin{remark}\label{rem-gcr}
Let $\cC$ be an $[n,k]_q$ linear code.
\begin{enumerate}[label=(\arabic*)]
\item It is remarked in \cite[Lemma 2]{EFS21} that the generalized covering radius $\rho_t(\cC)$ does not depend on the specific choice of the parity-check matrix $H$ in Definition \ref{def-gcr1}. 
\item By Definitions \ref{def-cr}, \ref{def-gcr1}, the first generalized covering radius $\rho_1(\cC)$ is precisely the covering radius of the linear code $\cC$. Moreover, the following monotonicity holds:
$$
0 \le \rho_1(\cC) \le \rho_2(\cC) \le \cdots \le \rho_{n-k}(\cC)=n-k.
$$
Note that $\rho_t(\cC)=0$ for some $1 \le t \le n-k$ if and only if $\cC$ is the ambient space $\Fq^n$. For simplicity, we set $\rho_t(\cC)=n-k$ for $t \ge n-k$. 
\item By Definition \ref{def-gcr1} and \cite[Proposition 15]{EFS21}, given an $[n,k]_q$ linear code $\cC$, for $t_1, t_2 \ge 1$ with $t_1+t_2 \le n-k$, we have 
\begin{equation}
\label{eqn-additive}
\rho_{t_1+t_2}(\cC) \le \rho_{t_1}(\cC)+\rho_{t_2}(\cC).
\end{equation}
In particular, for $1 \le t \le n-k$ and $s \mid t$, we have
$$
\rho_t(\cC) \le \min \left\{ \frac{t}{s} \rho_s(\cC), n-k \right\}.
$$
\end{enumerate}
\end{remark}

In this paper, we focus on the well-known family of Melas codes \cite{Melas60}. Let $M(m,q)$ denote the $q$-ary Melas code of length $q^m-1$. The covering radii of Melas code $M(m,q)$ have been determined for all possible $(m,q)$ pairs, see \cite{SHOS22} and the references therein. We are able to completely determine the second generalized covering radii of Melas code $M(m,q)$ for all possible $(m,q)$ pairs.

\begin{theorem}
\label{thm-seccr}
Let $m \ge 1$ and $q$ be a prime power. Then the following holds
\begin{equation}
\label{eqn-seccr}
\rho_2(M(m,q))=
\begin{cases}
                     1 & \mbox{if $q=2$, $m=1$} \\
                     2 & \mbox{if $q=2$, $m=2$} \\
                     5 & \mbox{if $q=2$, $m=3$} \\
                     6 & \mbox{if $q=2$, $m=4$} \\ 
                     5 & \mbox{if $q=2$, $m \ge 5$} \\
                     1 & \mbox{if $q=3$, $m=1$} \\
                     4  & \mbox{if $q=3$, $m=2$} \\
                     5  & \mbox{if $q=3$, $m \ge 3$} \\
                     2 & \mbox{if $q \ge 4$ and $m = 1$}\\
                     4 & \mbox{if $q \ge 4$ and $m \ge 2$}
\end{cases}
\end{equation}
\end{theorem}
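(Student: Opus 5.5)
The plan is to translate $\rho_2$ into a statement about sums of values of the parity-check columns. I take the standard parity-check matrix of $M(m,q)$, with columns $\bh_x=(x,x^{-1})\in\Fqm^2$ for $x\in\Fqm^*$, viewed as vectors of $\Fq^{2m}$. By Definition \ref{def-gcr1}, $\rho_2\le r$ means the following. For every pair $\bs_1,\bs_2$ there is a set $I$ of size at most $r$ such that both $\bs_1$ and $\bs_2$ are $\Fq$-combinations of $\{\bh_x: x\in I\}$. Equivalently, for every $(a_1,b_1),(a_2,b_2)\in\Fqm^2$ we need $r$ elements $x_1,\dots,x_r$ and two coefficient vectors $\la,\mu\in\Fq^r$ with
\begin{equation*}
\sum_i \la_i x_i=a_1,\quad \sum_i \la_i x_i^{-1}=b_1,\quad \sum_i \mu_i x_i=a_2,\quad \sum_i\mu_i x_i^{-1}=b_2 .
\end{equation*}
The small cases ($m\le 2$, and $q=2$ with $m\le 4$) have tiny codes: $m=1$ gives a very short code, and $m=2,3,4$ for $q=2$ are explicit codes of length $3$, $7$, $15$. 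I will settle these by direct computation or by hand, using the codimension bound $\rho_t\le n-k$ and exhibiting explicit bad pairs. So the real work is the generic range.

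\textbf{Lower bounds.} For $q\ge4$ I use $\rho_2\ge \rho_1+1$ whenever the pair $\bs_1,\bs_2$ can be chosen independently enough, together with the known values $\rho_1(M(m,q))$ from \cite{SHOS22}. A cleaner approach is a counting argument. Four columns span at most a $4$-dimensional space, and a pair $\{\bs_1,\bs_2\}$ generically spans a $2$-dimensional space. So $\rho_2\ge 3$ is forced by counting: the number of $2$-dimensional subspaces of $\Fq^{2m}$ grows like $q^{4m}$, while $3$-subsets of columns number only about $q^{3m}$. The bound $\ge4$ follows from the same counting plus a check that three columns cannot cover generic $2$-spaces. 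For $q\in\{2,3\}$ I want the lower bound $5$. Here I plan to find an explicit pair with $\bs_1=(a,0)$ and $\bs_2=(0,b)$, or similar, and show by a weight or trace argument that no $4$ columns suffice. For $q=2$ the coefficients are all $1$, so $\bs_1,\bs_2,\bs_1+\bs_2$ must all be subset sums of the $x_i$. Choosing, for example, $\bs_1+\bs_2$ of a special form where each of the three requires weight $\ge 2$ (or $\ge3$) then forces at least $5$ columns by inclusion–exclusion on the supports.

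\textbf{Upper bounds.} Let $\rho_1=\rho_1(M(m,q))$ (known to be $2$ for $q\ge4$ and $m$ large, and $3$ for $q\in\{2,3\}$, and so on). Write $\bs_2$ as a combination of columns of weight at most $\rho_1$. I then need to cover $\bs_1$ using only a few extra columns, exploiting that one column of $\bs_2$'s support may be reused; this gives $4$ from $2+2$ and $5$ from $3+2$. The key tool is that for given $(a,b)$ the number of solutions of $x+y=a$, $x^{-1}+y^{-1}=b$ is controlled: the system is equivalent to $xy=a/b$, i.e.\ a quadratic. Allowing a free column $z$ from the other support reduces the problem to the solvability of a curve equation over $\Fqm$. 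For that I would use the Weil bound to show a solution exists when $m$ is large, with the moderate $m$ handled by computer. Concretely, for $q\ge4$ I seek $x_1,x_2,x_3,x_4$ and coefficients so that $\bs_1\in\langle \bh_{x_1},\bh_{x_2}\rangle$ and $\bs_2\in\langle \bh_{x_3},\bh_{x_4}\rangle$. This is direct from $\rho_1=2$ for $q\ge4$, $m\ge2$, which I expect is among the known results; then $\rho_2\le 2\rho_1=4$ by \eqref{eqn-additive}.

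\textbf{Main obstacle.} For $q\in\{2,3\}$ we only have $\rho_1=3$, so \eqref{eqn-additive} gives $6$, and beating it to $5$ is the hardest step; the exceptional value $6$ at $(q,m)=(2,4)$ shows this is delicate. I would first try to cover $\bs_1$ with three columns $x_1,x_2,x_3$ where one coordinate, say $x_3$, is a free parameter. This yields a one-parameter family of representations of $\bs_1$, and likewise a one-parameter family for $\bs_2$. I then require the two families to share a column, which gives a curve in two variables over $\Fqm$. I would show this curve has an absolutely irreducible component and apply the Weil bound to get a rational point once $m$ exceeds an explicit threshold, with the remaining $m$ checked by computer.
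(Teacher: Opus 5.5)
Your upper-bound plan follows the paper's route. For $q\ge4$ it is $\rho_2\le 2\rho_1=4$. For $q\in\{2,3\}$ it is a shared column $x$ plus, for each $\bs_i$, two columns cut out by a quadratic whose solvability is controlled by Weil-type character sum bounds; this is valid for $m\ge8$ resp.\ $m\ge6$, with smaller $m$ computed. Your inclusion--exclusion for $q=2$ is exactly the bound $d_2(\cC_1^{m,2})=5$ from Propositions~\ref{prop-supercode} and~\ref{prop-GHW}, provided all three nonzero vectors of $\langle\bs_1,\bs_2\rangle$ need at least three columns. Taking $\bs_i=(0,b_i)$ achieves this; with your pair $(a,0),(0,b)$ you additionally need $ab\ne1$ and $\Tr(1/(ab))=1$.

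The genuine gaps are in the lower bounds for $q\ge4$ and for $q=3$.

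\textbf{The case $q\ge4$.} Counting (Theorem~\ref{thm-asymlb} with $t=2$) contradicts $\rho_2\le3$ only when $q^m\ge q^6/6$. For $m\le4$ it yields nothing for any $q\ge4$, and for $m=5$ nothing once $q\ge7$. Since $q$ is unbounded, these cases cannot be handed to a computer, and $\rho_2\ge\rho_1+1$ gives only $3$. Your ``check that three columns cannot cover generic $2$-spaces'' is precisely the missing argument. The paper's Theorem~\ref{thm-gcv2lb} supplies it uniformly in $q$ and $m\ge2$. Take $\bs_i=(0,\be_i)$ with $\be_2/\be_1\notin\Fq$. Any representation $\bx_i$ lies in $\cC_1^{m,q}$, so it has weight at least $2$. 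If the union of supports has size at most $3$, eliminate a shared column via $\bx_2-\la\bx_1$; this either gives $\be_2=\la\be_1$ outright or a weight-$2$ relation. The first-row relations then force every column used to have first coordinate in a single coset $\al^{\ell}\Fq^*$. Hence $\be_1,\be_2\in\al^{-\ell}\Fq^*$, a contradiction.

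\textbf{The case $q=3$.} No weight or inclusion--exclusion argument can reach $5$. Four columns can carry a $2$-dimensional coefficient space in $\F_3^4$ all of whose nonzero vectors have weight $3$ (the $[4,2,3]_3$ tetracode). So even if all four projective points of $\langle\bs_1,\bs_2\rangle$ need three columns, a support of size $4$ is not excluded; correspondingly, the supercode bound only gives $d_2(\wt{\cC_1^{m,3}})=4$. The missing ingredient is arithmetic, and your ``weight or trace argument'' does not supply it.

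The paper takes $\bs_1=(0,1)$ and $\bs_2=(\ga,0)$ with $\ga(\ga+1)$, $\ga(\ga-1)$ and $\ga^2+1$ all nonsquares in $\F_{3^m}$. It proves such $\ga$ exists for $m\ge4$ (Proposition~\ref{prop-gamma}) by evaluating the relevant quadratic character sums through point counts on two elliptic curves over $\F_3$. The first two conditions, via the two-column solvability criterion of Lemma~\ref{lem-simsys} and Corollary~\ref{cor-simsys}, rule out every configuration except the tetracode one. That remaining case is excluded by an explicit elimination reducing to $X^2-\frac{\ga}{\al_1\al_4}X-1=0$, whose discriminant $\ga^2+1$ is a nonsquare (Proposition~\ref{prop-q3}). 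Without an argument of this kind, $\rho_2(M(m,3))\ge5$ remains unproved for infinitely many $m$, which no finite computation can cover.
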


Theorem \ref{thm-seccr} is derived from a general approach yielding almost matching lower and upper bounds on the $t$-th generalized covering radius of $M(m,q)$ when $m$ is sufficiently large compared with $t$. Specifically, a combinatorial technique first proposed in \cite[Theorem IV.1]{XY26} leads to the following lower bound.

\begin{theorem}
\label{thm-asymlb}
Let $q$ be a prime power and $m \ge 1$. For $1 \le t \le m$ and $(m,q) \notin \{(2,2), (1,2), (1,3)\}$, if 
$$	
q^m \ge \frac{q^{t(2t-1)}}{(2t-1)!},
$$
then $\rho_t(M(m,q)) \geq 2t$. In particular, if $m \ge t(2t-1)$ and $(m,q) \notin \{(2,2), (1,2), (1,3)\}$, then $\rho_t(M(m,q)) \geq 2t$.
\end{theorem}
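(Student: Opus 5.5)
Melas code M(m,q): the cyclic code of length n=q^m-1 whose zeros are α and α^{-1}, where α is a primitive element of F_{q^m}. Its parity-check matrix over F_{q^m} has columns (α^i, α^{-i}). Over F_q, the redundancy is 2m, so each syndrome lives in \F_{q^m}^2.

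The plan is a counting argument. We exhibit more t-tuples of syndromes than can be covered by the spans of all column subsets of size 2t-1. Every set of at most 2t-1 columns lies inside some set of exactly 2t-1 columns, so it suffices to handle subsets I with |I| = 2t-1. For such an I, the span V_I = \langle \bh_i \mid i\in I\rangle has F_q-dimension at most 2t-1. The number of t-tuples (\bs_1,\ldots,\bs_t) with all entries in V_I is therefore at most q^{t(2t-1)}. Summing over all I gives at most
$$\binom{q^m-1}{2t-1} q^{t(2t-1)} < \frac{q^{m(2t-1)}}{(2t-1)!}\, q^{t(2t-1)}$$
covered tuples.

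The total number of t-tuples is q^{2mt}. So a naive comparison requires q^{2mt} > q^{m(2t-1)} q^{t(2t-1)}/(2t-1)!, that is, q^m (2t-1)! > q^{t(2t-1)}. This is almost exactly the hypothesis. Equality, or the strict-versus-nonstrict issue, is handled by the strictness of the inequality \binom{n}{2t-1} < n^{2t-1}/(2t-1)! when 2t-1\ge 2, together with n=q^m-1<q^m. When 2t-1=1 we have t=1, and then the hypothesis reads q^m\ge q. In that case we need \rho_1\ge 2, which holds because the n columns span at most n(q-1)+1 < q^{2m} syndromes (the zero syndrome plus nonzero multiples of the n columns). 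This is exactly where small cases like (1,2),(1,3) fail and must be excluded: for m=1 and q\le 3 one checks \rho_1=1. The point of the combinatorial technique from \cite[Theorem IV.1]{XY26} is probably a refinement: many tuples are degenerate, for instance tuples spanning a space of dimension less than t, and counting only tuples whose span has full dimension t gives a slightly better bound. I would use the crude version above unless the boundary cases demand more.

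Key steps in order:
\begin{enumerate}
\item Fix the parity-check description and note that the redundancy is 2m. Here I need that the minimal polynomials of α and α^{-1} are distinct and of degree m. This fails precisely for (m,q)=(2,2), since then α^{-1}=α^2 is conjugate to α, and for m=1 and q\le 3, since then α^{-1}=α and the redundancy drops. These are the excluded cases.
\item Bound the number of covered tuples as above.
\item Compare with q^{2mt}.
\end{enumerate}

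The main obstacle is step 1 and the boundary equality in step 3. For the redundancy, α and α^{-1}=α^{q^j} would require q^j\equiv -1 \pmod{q^m-1} for some j<m. This forces q^j+1 \ge q^m-1, which holds only in the small excluded cases or when m=1 and q-1\mid 2. Once the redundancy 2m is secured, the counting is routine.

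The "in particular" clause follows because m\ge t(2t-1) gives q^m\ge q^{t(2t-1)}\ge q^{t(2t-1)}/(2t-1)!.
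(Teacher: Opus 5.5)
Your proposal is correct and is essentially the paper's proof: the same union bound $q^{2tm}\le\binom{q^m-1}{2t-1}q^{t(2t-1)}<q^{m(2t-1)}q^{t(2t-1)}/(2t-1)!$ over $(2t-1)$-subsets of columns, with the excluded pairs $(m,q)$ being exactly those where the redundancy drops below $2m$. Your separate treatment of $t=1$ is harmless but unnecessary, because $\binom{q^m-1}{1}=q^m-1<q^m$ already gives the strict inequality there.
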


On the other hand, a careful character sum analysis establishes the following upper bound.

\begin{theorem}
\label{thm-asymub}
Let $q$ be a prime power and $m \ge 1$. For $1 \le t \le m$, if 
$$	
m \ge 2(t+1) \log_{q}2+2\log_qt,
$$
then $\rho_t(M(m,q)) \le 2t+1$. 
\end{theorem}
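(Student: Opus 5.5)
Throughout, write $Q=q^m$ and use the standard description of the Melas code. $M(m,q)$ is the cyclic code of length $Q-1$ whose zeros are $\alpha$ and $\alpha^{-1}$ for a primitive $\alpha\in\F_Q$. After expanding $\F_Q$ over $\Fq$, it therefore has a parity-check matrix whose columns are $\bh_x=(x,x^{-1})\in\F_Q^2\cong\Fq^{2m}$, one for each $x\in\F_Q^*$. By Remark~\ref{rem-gcr}(1) we may use this matrix.

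The statement then reduces to the following. For any $t$ targets $(a_i,b_i)\in\F_Q^2$, $1\le i\le t$, we must find at most $2t+1$ nonzero elements of $\F_Q$ whose columns span all targets over $\Fq$. I will use a fixed combinatorial shape: one shared element $z$, plus a private pair $x_i,y_i$ for each target. We then require
\[
x_i+y_i=a_i-z,\qquad x_i^{-1}+y_i^{-1}=b_i-z^{-1},\qquad i=1,\dots,t .
\]
This uses $2t+1$ columns. Put $A_i=a_i-z$ and $B_i=b_i-z^{-1}$. If $A_iB_i\neq0$, the $i$-th pair of equations says exactly that $x_i,y_i$ are the roots of
\[
T^2-A_iT+A_i/B_i .
\]
These roots are automatically nonzero. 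Hence it suffices to find $z\in\F_Q^*$ such that all $t$ quadratics split in $\F_Q$ and $A_iB_i\ne0$. Repeated roots are harmless, since they only reduce the number of columns used.

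To count good $z$, I split by the parity of $q$.

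\emph{Case $q$ odd.} Let $\eta$ be the quadratic character of $\F_Q$. The $i$-th quadratic splits iff $\eta(\Delta_i)\ne-1$, where $\Delta_i=A_i^2-4A_i/B_i$. Multiply by the square $B_i^2z^2$; up to a constant this gives a polynomial $f_i(z)=z\,(a_iz-1)\,(bz\text{-stuff})$ of degree at most $4$. Explicitly, $\eta(\Delta_i)=\eta\big(A_iB_i(A_iB_i-4)\big)$, and after clearing $z$ it is $\eta$ of a polynomial of degree $\le4$. The number of good $z$ is then at least
\[
\frac{1}{2^t}\sum_{z}\prod_{i=1}^t\bigl(1+\eta(f_i(z))\bigr)-O(t),
\]
where the $O(t)$ term accounts for zeros and poles. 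Expanding the product, the main term is $Q/2^t$. Each of the $2^t-1$ cross terms is a sum $\sum_z\eta\bigl(\prod_{i\in S}f_i(z)\bigr)$, which by Weil's bound is at most $(4|S|)\sqrt Q$ in absolute value, provided $\prod_{i\in S}f_i$ is not a constant times a square.

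\emph{Case $q$ even.} The criterion becomes $\Tr_{Q/2}\bigl(1/(A_iB_i)\bigr)=0$, i.e. $\Tr(C/A^2)=0$ with $C=A_i/B_i$. I use additive characters instead, $\frac12\bigl(1+(-1)^{\Tr(g_i(z))}\bigr)$, where $g_i$ is a rational function. Weil's bound for rational functions then controls each cross term by $O(t)\sqrt Q$, provided $\sum_{i\in S}g_i$ is not of the form $h^2+h+c$.

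In both cases the good count is at least roughly $Q/2^t-2^t\cdot O(t)\sqrt Q$. This is positive as soon as $\sqrt Q\gtrsim 2^{t+1}t$, which is exactly the hypothesis $m\ge2(t+1)\log_q2+2\log_qt$. With care in the constants, I would aim for exactly that condition.

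The main obstacle is the non-degeneracy needed to apply Weil's bound. We must show that $\prod_{i\in S}f_i$ is not a square up to a constant (respectively, that $\sum_{i\in S}g_i$ is not an Artin--Schreier image) for every nonempty $S$. This can fail when targets coincide or are degenerate, e.g. $a_i=0$, $b_i=0$, or $(a_i,b_i)$ proportional over $\Fq$ to another target. My first approach is a preprocessing step that removes such configurations. Zero targets and targets lying in the $\Fq$-span of the others can be dropped, which only lowers $t$. A target with $a_ib_i=0$ or with $a_ib_i$ of special form is covered by a direct small construction, using Fq-scalars as coefficients. Once these are removed, I would verify that the $f_i$ have a simple root ($z=a_i^{-1}$ or $z=b_i$) not shared with the other $f_j$. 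This makes each product non-square by tracking the pole and zero at that distinct point. Finally, the excluded $z$ (namely $z=0$, $z=a_i$, $z=b_i^{-1}$, and the zeros of $A_iB_i-4$) number $O(t)$, and they are absorbed into the error term.
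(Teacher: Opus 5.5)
Your skeleton is the paper's: the same $2t+1$ configuration (one shared column plus a private pair per target), the same splitting criteria, and the same expansion of $\prod_i(1+\chi_i)$ with Weil bounds on the cross terms. The criteria are $\eta\bigl((z-a_i)(b_iz-1)((z-a_i)(b_iz-1)+4z)\bigr)\neq-1$ for $q$ odd and $\Tr\bigl(z/((b_iz+1)(z+a_i))\bigr)=0$ for $q$ even.

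The gap is the step you yourself flag as the main obstacle, and the route you sketch through it fails. The roots of $f_i$ are $a_i$, $b_i^{-1}$ and the two roots of $(z-a_i)(b_iz-1)+4z$. They are not $a_i^{-1}$ and $b_i$, and $f_i$ has no factor $z$. Nothing in your preprocessing stops all of these from being roots of other $f_j$. For example, take $\F_q$-independent targets with every $a_ib_i\neq0$, with $a_2=a_1$, $b_3=b_1$, and $a_4,a_5$ equal to the two roots of the quadratic factor of $f_1$ (choose $a_1,b_1$ so that this quadratic splits over $\F_Q$). So the claim that each $f_i$ has an unshared simple root is false in general. The real obstruction is a multiplicative relation among the $f_i$, or an Artin--Schreier relation among the $g_i$ for $q$ even. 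That is not the same thing as $\F_q$-linear dependence of the targets. Your ``direct small construction'' for $a_ib_i=0$ or ``special form'' is also never given. As written, the non-degeneracy hypothesis of Weil's bound is never established.

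The missing idea is that non-degeneracy is not needed: degenerate cross terms are nonnegative, so they can simply be dropped from the lower bound. Such terms do occur; for instance $a_ib_i=1$ gives $f_i=b_i^2(z-a_i)^2(z+a_i)^2$. What makes this work is normalizing so that every $b_i\neq0$. Replace $\bs_i$ by $\bs_j+\bs_i$ for a fixed $j$ with $b_j\neq0$. If all $b_i=0$, swap coordinates, which corresponds to $x\mapsto x^{-1}$ on columns. Nothing special is needed when $a_i=0$.

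For $q$ odd, each $f_i$ then has leading coefficient $b_i^2$. So if $\prod_{i\in S}f_i$ is a square in $\ol{\F_Q}[X]$, a Frobenius argument gives $\prod_{i\in S}f_i=c\,g^2$ with $c$ a nonzero square in $\F_Q$ and $g\in\F_Q[X]$. Hence $\eta\bigl(\prod_{i\in S}f_i(z)\bigr)\in\{0,1\}$ and that cross term is $\ge0$.

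For $q$ even, $b_i\neq0$ gives $g_i(\infty)=0$. So if $\sum_{i\in S}g_i=h^2+h+c$, then $h$ has no pole at $\infty$, and $c=h(\infty)^2+h(\infty)$ has trace $0$. That cross term therefore equals $Q-|E|\ge0$. Otherwise, Lemma~\ref{lem-Weilrational}(2) reduces to odd pole orders so that part (1) applies.

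Weil is then used only on the remaining terms, with bounds $(4|S|-1)\sqrt Q$ for $q$ odd and $1+(4|S|-2)\sqrt Q$ for $q$ even. The resulting slack $(2^t-1)\sqrt Q$ absorbs the $O(t2^t)$ boundary terms, and $\sqrt Q\ge t2^{t+1}$ gives positivity. Note the total error is about $t2^{t+1}\sqrt Q$ against the main term $Q$ before dividing by $2^t$. It is not ``$2^t\cdot O(t)\sqrt Q$ against $Q/2^t$'', which would need $\sqrt Q\gtrsim 4^t$. For $q$ odd and $t=1$ the count is tight, and the paper instead invokes the known value of $\rho_1$.
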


Theorems \ref{thm-asymlb}, \ref{thm-asymub} can be applied to establish the following range of $\rho_t(M(m,q))$ with $t \ge 3$,  tailored for the three cases: $q=2$, $q=3$, or $q \ge 4$. We use $[a,b]$ to denote the interval that includes integers $a$ and $b$. 

\begin{theorem}
\label{thm-asymcr}
\begin{enumerate}[label=(\arabic*)]
\item Let $m \ge 3$ and $3 \le t \le m$. Then 
$$
\rho_t(M(m,2)) \in 
\begin{cases}
\left[t+\lc \log_2 (t+1) \rc,\min\{3t,2m\}\right] & \mbox{if $3 \le m < 2(t+1)+2\log_2 t$ } \\
\left[t+\lc \log_2 (t+1) \rc,2t+1\right] & \mbox{if $2(t+1)+2\log_2 t \le m < t(2t-1)$} \\
\left[2t,2t+1\right] & \mbox{if $m \ge t(2t-1)$}
\end{cases}
$$
\item Let $m \ge 3$ and $3 \le t \le m$. Then 
$$
\rho_t(M(m,3)) \in 
\begin{cases}
\left[t+\lc \log_3 (2t+3) \rc,\min\{3t,2m\}\right] & \mbox{if $3 \le m < 2(t+1)\log_3 2+2\log_3 t$ } \\
\left[t+\lc \log_3 (2t+3) \rc,2t+1\right] & \mbox{if $2(t+1)\log_3 2+2\log_3 t \le m < t(2t-1)$} \\
\left[2t,2t+1\right] & \mbox{if $m \ge t(2t-1)$}
\end{cases}
$$
\item Let $q \ge 4$ be a prime power. Let $m \ge 3$ and $3 \le t \le m$. Then 
$$
\rho_t(M(m,q))  
\begin{cases}
\in \left[t+\lc \log_q (t+1) \rc,2t\right] & \mbox{if $3 \le m < t(2t-1)$ } \\
= 2t & \mbox{if $m \ge t(2t-1)$}
\end{cases}
$$
\end{enumerate}
\end{theorem}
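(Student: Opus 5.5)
The plan is to assemble the ranges in Theorem \ref{thm-asymcr} from four inputs: the lower bound of Theorem \ref{thm-asymlb}, the upper bound of Theorem \ref{thm-asymub}, the value of $\rho_2(M(m,q))$ from Theorem \ref{thm-seccr} combined with subadditivity \eqref{eqn-additive}, and a sphere-covering style counting bound for the remaining lower bounds.

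We recall that $M(m,q)$ has redundancy $n-k=2m$, since its defining zeros are $\alpha$ and $\alpha^{-1}$, each contributing a conjugacy class of size $m$. Thus $\rho_t\le 2m$ is trivial by Remark \ref{rem-gcr}(2).

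\emph{Upper bounds.} For $q\in\{2,3\}$ and $m\ge3$, Theorem \ref{thm-seccr} gives $\rho_1\le\rho_2=5$. We write $t=2a+b$ with $b\in\{0,1\}$ and use subadditivity \eqref{eqn-additive}:
\begin{itemize}
\item if $t$ is even, $\rho_t\le (t/2)\cdot5\le 3t$;
\item if $t$ is odd, $\rho_t\le 5a+\rho_1$.
\end{itemize}
The odd case needs the covering radius, known from \cite{SHOS22} to be $\rho_1(M(m,q))=3$ for $q\in\{2,3\}$, $m\ge3$ (it is in any case at most $\rho_2$). This gives $\rho_t\le 5a+3\le 3t$. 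Hence $\rho_t\le\min\{3t,2m\}$ in the first regime. A cleaner route is $\rho_t\le t\rho_1=3t$ directly, which is what I would use.

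In the second and third regimes, $m\ge 2(t+1)\log_q2+2\log_q t$, so Theorem \ref{thm-asymub} gives $\rho_t\le 2t+1$. For $q\ge4$, Theorem \ref{thm-seccr} gives $\rho_2=4$, and the covering radius is $2$ for $q\ge4$ (from \cite{SHOS22}). Then $\rho_t\le t\rho_1=2t$ for all $m\ge3$.

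\emph{Lower bounds.} When $m\ge t(2t-1)$, Theorem \ref{thm-asymlb} gives $\rho_t\ge2t$. Combined with the upper bounds, this yields $[2t,2t+1]$ for $q\in\{2,3\}$ and equality $2t$ for $q\ge4$. For smaller $m$ I would prove a counting bound. If $\rho_t\le r$, then every $t$-dimensional subspace of $\Fq^{2m}$ must lie in some span of $r$ columns, so
\[
\binom{n}{r}\qbinom{r}{t}_q\ \ge\ \qbinom{2m}{t}_q .
\]
Alternatively, one can use the refinement from \cite{EFS21}: when $r=t+j$, at most about $n^{j}$ ways times the subspaces containing $t$ chosen columns suffice. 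The cleaner form is the following. For $\rho_t=t+j$, every $t$-subspace is contained in the span of $t+j$ columns. Counting the $t$-dimensional $\bs$-tuples in a $(t+j)$-span gives $q^{t(t+j)}$ against $q^{2mt}$, which forces $n^{t+j}\gtrsim q^{t(2m-t-j)}$. That is too crude, so I would instead use the known lemma (\cite[Theorem 24]{EFS21} style) $\rho_t(\cC)\ge t+\rho'$, where $\rho'$ is the least $j$ with $\sum_{i\le j}\binom{n-?}{}\dots$. Concretely, the bound we expect is $q^{j}\ge$ (number of nonzero syndromes per extra coordinate). For $q=2$ this gives $t+\lc\log_2(t+1)\rc$, matching the Hamming-type bound that $t$ queries plus $j$ extra columns require $2^j\ge t+1$. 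For $q=3$, the count of $\pm$ combinations gives $3^j\ge 2t+3$.

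The main obstacle is pinning down exactly which general lower bound produces $t+\lc\log_q(t+1)\rc$, and $t+\lc\log_3(2t+3)\rc$ for $q=3$. I would try to locate it in \cite{EFS21} as a generalized sphere-covering bound applied with $n=q^m-1$ and redundancy $2m$. If it is not stated there, I would derive it via a direct counting of $(t+j)$-column spans against tuples of syndromes, checking that the logarithmic terms come out as stated for $m\ge3$.
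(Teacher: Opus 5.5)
\textbf{There is a genuine gap: the logarithmic lower bounds for $m<t(2t-1)$ are not proved.} The rest of your plan is how the paper proceeds:
\begin{itemize}
\item $\rho_t\le t\rho_1$ with $\rho_1=3$ for $q\in\{2,3\}$ and $\rho_1=2$ for $q\ge4$ (Result~\ref{res-cr});
\item the trivial bound $\rho_t\le 2m$;
\item Theorem~\ref{thm-asymub} for $\rho_t\le 2t+1$;
\item Theorem~\ref{thm-asymlb} for $\rho_t\ge 2t$ when $m\ge t(2t-1)$.
\end{itemize}
Your aside $\rho_1\le\rho_2=5$ is false at $(m,q)=(4,2)$, where $\rho_2=6$. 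You discard it in favour of $t\rho_1$, so it does no harm.

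The missing piece is the lower bound $t+\lc\log_q(t+1)\rc$, and $t+\lc\log_3(2t+3)\rc$ for $q=3$, in the regimes $m<t(2t-1)$. These bounds are needed in all three parts, and you leave them as ``the main obstacle.'' The relations $2^j\ge t+1$ and $3^j\ge 2t+3$ are read off from the statement, not derived.

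A sphere-covering count cannot supply them. The inequality $\binom{n}{t+j}\qbinom{t+j}{t}_q\ge\qbinom{2m}{t}_q$ is essentially the argument of Theorem~\ref{thm-asymlb}. At $m=t$ it rules out spans of $t+j$ columns only while $(t+j)!\gtrsim q^{2tj}$. That yields roughly $\rho_t\ge t+\tfrac12\log_q t$, about half the claimed logarithmic term. For example, with $q=2$ and $t=m=20$ the count gives only $\rho_t\ge 22$, whereas the theorem asserts $\rho_t\ge 25$.

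The missing idea is the Generalized Supercode Lemma (Proposition~\ref{prop-supercode}) applied to $M(m,q)\subsetneq\cC_1^{m,q}$:
\begin{itemize}
\item The dimensions differ by $m\ge t$, so $\rho_t(M(m,q))\ge d_t(\cC_1^{m,q})$.
\item The dual has weight hierarchy $q^m-q^{m-r}$. By Wei duality, the hierarchy of $\cC_1^{m,q}$ is $[q^m-1]$ with the powers of $q$ removed.
\item Hence $d_t(\cC_1^{m,q})\ge t+\lc\log_q(t+1)\rc$ (Proposition~\ref{prop-GHW}(1), Theorem~\ref{thm-GHWlb}(1)).
\end{itemize}
For $q=3$ there is an extra step. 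Columns $j$ and $j+\frac{Q-1}{2}$ of $H_{1,-1}^{m,3}$ are negatives of each other, so $\rho_t(M(m,3))=\rho_t(\wt{M(m,3)})$. Then apply the same lemma with the ternary Hamming code $\wt{\cC_1^{m,3}}\supsetneq\wt{M(m,3)}$. Its weight hierarchy omits exactly the values $\frac{3^j+1}{2}$, which gives $t+\lc\log_3(2t+3)\rc$ (Proposition~\ref{prop-GHW}(2), Theorem~\ref{thm-GHWlb}(2)).
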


Theorem \ref{thm-asymcr} indicates that as long as $m \ge t(2t-1)$, we have either $\rho_t(M(m,q)) \in [2t,2t+1]$ for $q \in \{2,3\}$, or $\rho_t(M(m,q)) =2t$ for $q \ge 4$. We note that for $m \ge 8$, the second generalized covering radius $\rho_2(M(m,2))$ has been determined in \cite[Theorem 16]{LZM+26}.

The remainder of this paper is organized as follows. Section \ref{sec-prelim} reviews the necessary background, including generalized Hamming weights and Melas codes, and presents several auxiliary technical results. Section \ref{sec-bound} establishes crucial lower and upper bounds on the generalized covering radii of Melas codes. Sections \ref{sec-qge4}, \ref{sec-qeq3}, and \ref{sec-qeq2} then study the generalized covering radii of $M(m,q)$ for the cases $q \ge 4$, $q=3$, and $q=2$, respectively. Finally, Section \ref{sec-conclusion} concludes the paper.

\section{Preliminaries} \label{sec-prelim}
\subsection{Some auxiliary results}

Let $q$ be a power of an odd prime $p$. A \emph{multiplicative quadratic character} of $\Fq$ is a group homomorphism $\eta: \Fq^* \mapsto \{\pm1\} \subset \C^*$ satisfying
$$
\eta(x)=\begin{cases}
            1 & \mbox{$x$ is a nonzero square in $\Fq$} \\
            -1 & \mbox{$x$ is a nonsquare in $\Fq$}
        \end{cases} 
$$
Conventionally, the domain of $\eta$ can be extended to $\F_q$ by defining $\eta(0)=0 \in \C$. 

\begin{lemma}[Weil's bound for quadratic characters {\cite[Theorem 5.41]{LN97}}]
\label{lem-Weilquad}
Let $\eta$ be a quadratic character of $\Fq$ with $q$ odd, and let $f \in \Fq[x]$ be a polynomial that is not a square of a polynomial in $\overline{\Fq}[x]$, where $\overline{\Fq}$ denotes the algebraic closure of $\Fq$. Let $d$ be the number of distinct roots of $f$ in $\overline{\Fq}$. Then
$$
\big|\sum_{x\in\Fq}\eta(f(x))\big|\le(d-1)\sqrt{q}.
$$
\end{lemma}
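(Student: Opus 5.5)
Since this is a classical bound quoted from \cite[Theorem 5.41]{LN97}, I would prove it by the standard $L$-function route and import the Riemann hypothesis for curves over finite fields as a black box.

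\textbf{Step 1: reduce $f$.} Write $f = c\, g\, h^2$ with $c \in \Fq^*$, $g$ monic and squarefree, and $h$ monic. The hypothesis that $f$ is not a square in $\overline{\Fq}[x]$ forces $\deg g \ge 1$. Then $\eta(f(x)) = \eta(c)\eta(g(x))$ for every $x$ with $h(x) \ne 0$, while at roots of $h$ that are not roots of $g$ the value is $0$. So $\sum_x \eta(f(x))$ differs from $\eta(c)\sum_x \eta(g(x))$ only by terms coming from those extra roots. It is cleanest to carry $f$ itself, or equivalently the set $Z$ of its $d$ distinct roots, through the argument, so that these excluded points are built in.

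\textbf{Step 2: build the $L$-function.} For each $k \ge 1$ put $S_k = \sum_{x \in \F_{q^k}} \eta_k(f(x))$, where $\eta_k = \eta \circ N_{\F_{q^k}/\Fq}$ is the quadratic character of $\F_{q^k}$, and form
$$
L(T) = \exp\Big(\sum_{k \ge 1} S_k T^k / k\Big).
$$
Define $\lambda(u) = \eta(\mathrm{Res}(u,f))$ for monic $u \in \Fq[x]$. This is completely multiplicative in $u$, depends only on $u$ modulo $\mathrm{rad}(f)$, and vanishes exactly when $\gcd(u,f) \ne 1$. Grouping the points of $\F_{q^k}$ by their minimal polynomials over $\Fq$ gives the Euler product $L(T) = \sum_{u \text{ monic}} \lambda(u) T^{\deg u}$. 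A character-sum orthogonality argument over residues modulo $\mathrm{rad}(f)$ then shows that $\sum_{\deg u = n} \lambda(u) = 0$ for $n \ge d$. Here we use that $\lambda$ is nontrivial, which is exactly the non-square hypothesis. Consequently $L(T)$ is a polynomial of degree at most $d-1$.

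\textbf{Step 3: conclude from the roots.} Factor $L(T) = \prod_{j=1}^{d-1}(1 - \omega_j T)$, allowing some $\omega_j = 0$. Comparing coefficients of $T$ in $\log L(T)$ gives $S_1 = -\sum_j \omega_j$. It therefore suffices to show $|\omega_j| \le \sqrt q$ for every $j$, after which the triangle inequality gives $|S_1| \le (d-1)\sqrt q$.

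\textbf{Main obstacle.} Step 3 is the hard part. The bound $|\omega_j| \le \sqrt q$ is Weil's Riemann hypothesis, applied to the hyperelliptic curve $y^2 = g(x)$. Its numerator zeta polynomial is $L(T)$ up to factors $(1 \pm \sqrt{q}\,T)$ or $(1 \pm T)$ coming from points at infinity and from the roots of $h$, and these factors only contribute reciprocal roots of modulus $\sqrt q$ or $1$. I would simply quote Weil's theorem, or alternatively its elementary proof by the Stepanov--Bombieri method. A self-contained version would run Stepanov's auxiliary-polynomial argument to get $|S_k| \le C q^{k/2}$ for all $k$. From $S_k = -\sum_j \omega_j^k$ this forces $|\omega_j| \le \sqrt q$, by considering the radius of convergence of $\sum_k S_k T^k$.
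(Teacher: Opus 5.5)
The paper gives no argument of its own for this lemma; it simply quotes \cite[Theorem 5.41]{LN97}, and your outline (Euler product for $\lambda(u)=\eta(\mathrm{Res}(u,f))$, polynomiality of $L(T)$ of degree at most $d-1$, then Weil's Riemann hypothesis or the Stepanov--Bombieri method for the reciprocal roots) is the standard proof behind that citation, so it is correct in substance. One slip needs repairing: in fact $\lambda(u)=\eta\bigl((-1)^{\deg f}c\bigr)^{\deg u}\chi(u)$ with $\chi$ a Dirichlet character modulo $\mathrm{rad}(f)$, so $\lambda$ does not depend only on $u \bmod \mathrm{rad}(f)$ (take $f=x$, $q\equiv 3 \pmod{4}$, and compare $u=x+1$ with $u=x^2+1$); the sign factor is constant in each degree, so $\sum_{\deg u=n}\lambda(u)=0$ for $n\ge d$ still holds, because $\chi$ (rather than $\lambda$) is nonprincipal exactly when some irreducible factor of $f$ has odd multiplicity, and likewise the comparison with $y^2=g(x)$ only costs the substitution $T\mapsto\eta(c)T$ and extra factors with reciprocal roots of modulus $1$ (namely $1-\eta(c)T$ when $\deg g$ is even, and $1\mp T^{\deg P}$ for irreducible $P\mid h$ with $P\nmid g$), with no extra $1\pm\sqrt{q}\,T$ factors.
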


Let $\zp:=\exp\left(2 \pi \sqrt{-1}/p\right)$ be the complex primitive $p$-th root of unity. A \emph{canonical additive character} of $\Fq$ is a group homomorphism $\varphi: \Fq \mapsto \C^*$ defined by
$$
\varphi(x)=\zeta_p^{\Trqp(x)},  \forall x \in \Fq,
$$
where $\Trqp$ is the trace function from $\Fq$ to $\Fp$. The following Weil-type bound for rational functions follows from \cite{CP06}. For a monic irreducible polynomial $u \in \FQ[X]$, let $v_u$ denote the $u$-adic valuation on $\FQ(X)$, so that $v_u(g)$ is the exponent of $u$ in the factorization of $g \in \FQ(X)$ into a fraction of irreducible polynomials. The rational function $g \in \FQ(X)$ has a pole of order $r$ at $u$ when $v_u(g)=-r<0$, which is to say that $g$ has a pole of order $r$ at every root of $u$ in $\ol{\FQ}$.

\begin{lemma}[Weil's bound for rational functions without a pole at $\infty$]
\label{lem-Weilrational}
Let $Q=q^m$ with $q$ being a power of $2$. Let $f\in\FQ(X)$ have no pole at $\infty$, namely,
$$
f(X)=c_0+\frac{v(X)}{w(X)},\qquad c_0\in\FQ,\quad \deg v<\deg w,\quad \gcd(v,w)=1,
$$
and let $w(X)=\prod_{j=1}^{k}w_j(X)^{\,m_j}$ with $w_1(X),\dots,w_k(X)\in\FQ[X]$ being pairwise distinct, monic, and irreducible over $\FQ$. 
\begin{enumerate}[label=(\arabic*)]
\item Set
$$
L=\sum_{j=1}^{k}(m_j+1)\deg w_j
$$
and let $S$ be the set of poles of $f$ in $\FQ$. Assume that
\begin{enumerate}
\item[(1a)] $f$ is nonconstant
\item[(1b)] $m_j$ is odd for every $1\le j\le k$.
\end{enumerate}
Then
$$
\Big|\sum_{x\in\FQ\sm S}\varphi\big(f(x)\big)\Big|\ \le\ 1+(L-2)\sqrt Q .
$$
\item There exist $t \in \FQ[X]$ and integers $s_1,\dots,s_k \ge 0$ such that
$$
h=\frac{t(X)}{\prod_{j=1}^{k}w_j(X)^{s_j}}, \qquad \deg t<\sum_{j=1}^{k}s_j\deg w_j,
$$
where $h$ has no pole at $\infty$ and no pole at any monic irreducible other than $w_1,\dots,w_k$, such that
$$
\tilde f=f-(h^{2}+h)=\tilde c_0+\frac{\tilde v(X)}{\tilde w(X)}, \qquad \td{c_0} \in \FQ, \quad \deg \tilde v<\deg \tilde w, \quad \gcd(\tilde v,\tilde w)=1
$$
with $\tilde w(X)=\prod_{j=1}^{k}w_j(X)^{\tilde m_j}$ and with each $\tilde m_j \le m_j$ either $0$ or odd. Moreover, for every $x \in \FQ$ that is not a pole of $f$, we have
$$
\varphi\big(f(x)\big)=\varphi\big(\tilde f(x)\big).
$$
\end{enumerate}
\end{lemma}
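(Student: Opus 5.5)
The plan is to deduce part (1) from the Hasse--Weil bound applied to the Artin--Schreier curve
$$
C_f:\ Y^2+Y=f(X)
$$
over $\FQ$. Part (2) is an elementary reduction of even pole orders. The key identity linking the two is $\varphi(z^2+z)=1$ for all $z\in\FQ$: we have $\Tr(z^2)=\Tr(z)$, so $\Tr(z^2+z)=0$ in characteristic $2$.

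\textbf{Part (1).} First we compute the genus of $C_f$.
\begin{itemize}
\item Since $f$ is nonconstant and every pole order $m_j$ is odd, $f$ is not of the form $h^2+h+c$. Hence $C_f$ is an irreducible degree-$2$ Artin--Schreier cover of $\mathbb{P}^1$.
\item By the standard ramification formula for Artin--Schreier extensions, a place of degree $\deg w_j$ at which $f$ has a pole of odd order $m_j$ is totally ramified with different exponent $m_j+1$.
\item The place at $\infty$ is unramified, because $f$ has no pole there.
\item Riemann--Hurwitz then gives $2g-2=2(-2)+\sum_j(m_j+1)\deg w_j$, that is, $2g=L-2$.
\end{itemize}
Next we count rational points on the smooth model.
\begin{itemize}
\item For $x\in\FQ\sm S$, the number of $y\in\FQ$ with $y^2+y=f(x)$ is $1+\varphi(f(x))$, since solvability is equivalent to $\Tr(f(x))=0$. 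This contributes $Q-|S|+\sum_{x\in\FQ\sm S}\varphi(f(x))$ points.
\item Each rational pole $x\in S$ is totally ramified and contributes exactly one point, for a total of $|S|$.
\item The unramified place $\infty$ splits or is inert according to the value $f(\infty)=c_0$. It contributes $1+\varphi(c_0)$ points.
\end{itemize}
Summing, $\#C_f(\FQ)=Q+1+\sum_{x\notin S}\varphi(f(x))+\varphi(c_0)$. The Weil bound $|\#C_f(\FQ)-Q-1|\le 2g\sqrt Q=(L-2)\sqrt Q$ then gives
$$
\Big|\sum_{x\notin S}\varphi(f(x))\Big|\le 1+(L-2)\sqrt Q,
$$
since $|\varphi(c_0)|=1$. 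If the paper prefers, this is exactly the content of the bound in \cite{CP06}, and one can simply specialize it.

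\textbf{Part (2).} We argue by induction on the total pole order $\sum_j m_j\deg w_j$.
\begin{enumerate}
\item Write the partial fraction expansion $f=c_0+\sum_j\sum_{i\le m_j}a_{ji}/w_j^{i}$ with $\deg a_{ji}<\deg w_j$.
\item Suppose some $m_j=2r>0$ is even. The residue field $\FQ[X]/(w_j)$ is a finite field of characteristic $2$, hence perfect. So there is $b$ with $\deg b<\deg w_j$ and $b^2\equiv a_{j,2r}\pmod{w_j}$.
\item Put $h_1=b/w_j^{r}$, which has no pole at $\infty$ and poles only at $w_j$. Then $h_1^2+h_1=b^2/w_j^{2r}+b/w_j^{r}$. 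Since $w_j\mid b^2-a_{j,2r}$, the rational function $f-(h_1^2+h_1)$ has pole order $<2r$ at $w_j$ and unchanged poles elsewhere.
\item Degree considerations ($\deg b^2<2r\deg w_j$) show that no pole at $\infty$ is created and that the constant term stays in $\FQ$.
\item Iterate and sum the $h$'s. The sum $h$ is again a rational function with denominator $\prod_j w_j^{s_j}$ and $\deg t<\sum_j s_j\deg w_j$. Moreover $h^2+h$ differs from the sum of the individual $h_i^2+h_i$ by cross terms $h_ih_k+h_kh_i=0$ in characteristic $2$.
\item The process terminates with every $\tilde m_j$ either $0$ or odd, and $\tilde m_j\le m_j$.
\end{enumerate}
Finally, for $x$ not a pole of $f$, $x$ is not a pole of $h$, because $h$ only has poles at roots of $w_j$ with $m_j>0$. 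Therefore $\varphi(f(x))=\varphi(\tilde f(x))\varphi(h(x)^2+h(x))=\varphi(\tilde f(x))$.

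The main obstacle is the genus and ramification computation in part (1): verifying that odd pole order $m_j$ gives different exponent exactly $m_j+1$, and that the point count at $\infty$ produces the extra $1$. I would handle it by citing the standard Artin--Schreier ramification theory, or directly the bound in \cite{CP06}, rather than rederiving it.
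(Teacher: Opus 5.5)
Your proof is correct. Part (2) follows the paper's argument: induction on $\sum_j m_j\deg w_j$, removing the leading coefficient at an even-order pole $w_j$ with $h_j=b/w_j^{s}$ where $b^2\equiv a \pmod{w_j}$, and then summing the $h_j$'s. Your observation that $h$ vanishes at $\infty$ also shows $\tilde c_0=c_0$.

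Part (1) is where you genuinely differ. The paper specializes \cite[Theorem~1.1]{CP06}. That theorem writes the sum as $\sum_{j=1}^{L-1}\omega_j$, where $L-2$ of the $\omega_j$ have modulus $\sqrt Q$ and one has modulus $1$. You instead rederive this from the Hasse--Weil bound for the Artin--Schreier curve $Y^2+Y=f(X)$.

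Your route costs a genus computation, namely $2g=L-2$ from the different exponent $m_j+1$ at odd-order poles, plus a careful count of places. In return it is self-contained given standard function field theory, and it explains both constants. The quantity $L-2$ is exactly $2g$. The extra $1$ is the contribution $\varphi(c_0)$ of the unramified place at $\infty$, so the term of modulus $1$ is $-\varphi(c_0)$. The citation route is shorter and also handles multiplicative twists, which are not needed here.

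One point you should state explicitly: Weil's bound requires $C_f$ to be absolutely irreducible. That means you need $f\ne h^2+h+c$ for $h\in\overline{\FQ}(X)$, not just for $h\in\FQ(X)$. Your parity argument already gives this over $\overline{\FQ}$. Each $w_j$ is separable, so every root of $w_j$ is a pole of $f$ of odd order $m_j$, while $h^2+h+c$ has only even-order poles.
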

\begin{proof}
Part (1) is a special case of \cite[Theorem~1.1]{CP06}. In the notation of \cite{CP06}:
\begin{enumerate}
\item[(a)] $M=0$ as $c_0 \in \FQ$. 
\item[(b)] The required conditions on $f$ follow from (1a) and (1b).  
\item[(c)] Take $g$ to be the constant polynomial $g(X) \equiv 1$, thus $U=0$. 
\item[(d)] Take the multiplicative character $\chi$ to be the principal character.
\end{enumerate}
Then by \cite[Theorem~1.1]{CP06}, we have
$$
\sum_{x\in\FQ\sm S}\varphi\big(f(x)\big)=\sum_{j=1}^{L-1} \omega_j^n,
$$
where among $1 \le j \le L-1$, there are exactly $L-2$ indices $j$ satisfying $|\omega_j|=q^{\frac{1}{2}}$ and one $j$ such that $|\omega_j|=1$. Consequently, the result follows. 

Part (2) is the characteristic $2$ case of the reduction in \cite[pp.~273--274]{CP06}.  We note that $f$ satisfies $v_{w_j}(f)=-m_j$ for $1 \le j \le k$, and $v_u(f) \ge 0$ for every other monic irreducible polynomial $u$. 

We use induction on $\deg w=\sum_j m_j \deg w_j$. If $\deg w=1$, then $w$ is an irreducible polynomial and it suffices to choose $h=0$. If every $m_j$ is odd, we can take $h=0$. Otherwise, fix some $1 \le j \le k$ with $m_j=2s$ even, and let $a \in \FQ[X]$, $\deg a<\deg w_j$, be the leading partial-fraction coefficient of $f$ at $w_j$. Therefore, $v_{w_j}\big(f-a/w_j^{m_j}\big)>-m_j$. Since $\FQ[X]/(w_j)$ is a finite field of characteristic $2$, there is $b \in \FQ[X]$ with $\deg b<\deg w_j$ and $b^{2} \equiv a \pmod{w_j}$. Put $h_j=b/w_j^{s}$. As $\deg b<\deg w_j \le s\deg w_j$, the function $h_j$ has no pole at $\infty$, $v_{w_j}(h_j)=-s>-m_j$, and $v_u(h_j) \ge 0$ for every monic irreducible $u \ne w_j$. Moreover
$w_j \mid a-b^{2}$, so $v_{w_j}\big(f-h_j^{2}\big)>-m_j$. Therefore $f-(h_j^{2}+h_j)$ has no pole at $\infty$, satisfies $v_{w_j}>-m_j$, and has the same valuation as $f$ at every other monic irreducible.

Therefore $f_1=f-(h_j^{2}+h_j)$ has no pole at $\infty$. Its poles lie among $w_1,\dots,w_k$ and $v_{w_i}(f_1)=v_{w_i}(f)=-m_i$ for $i \ne j$. Writing $m_i'=-\min\{v_{w_i}(f_1),0\}$, we thus have $m_i'=m_i$
for $i \ne j$ and $m_j'<m_j$, so $f_1$ satisfies the induction hypothesis with $\sum_i m_i'\deg w_i<\deg w$. Consequently, there exists $h' \in \FQ(x)$ such that 
$$
\tilde f=f_1-(h'^2+h')=\tilde c_0+\frac{\tilde v(X)}{\tilde w(X)}, \qquad \tilde w(X)=\prod_{j=1}^{k}w_j(X)^{\tilde m_j},
$$
satisfying the required conditions. Note that $\tilde f=f_1-(h'^2+h')=f-(h_j^{2}+h_j)-(h'^2+h')=f-((h_j+h')^2+(h_j+h'))$. Setting $h=h_j+h'$, the reduction from $f$ to $\tilde f$ is established. 

Finally, as every pole of $h$ is a pole of $f$, so $h(x)$ is defined at each $x \in \FQ$ that is not a pole of $f$. Since $\Tr(h(x)^{2}+h(x))=0$ for every $x \in \FQ$ that is not a pole of $f$, we have $\varphi\big(f(x)\big)=\varphi\big(\tilde f(x)\big)$.
\end{proof}

\begin{remark}
\label{rem-Weilrational}
Lemma \ref{lem-Weilrational}(2) indicates that condition (1b) does not constitute an essential obstruction. If $f$ fails condition (1b), then the reduction from $f$ to $\tilde f$ guarantees that $\tilde f$ satisfies condition (1b) and  $\varphi(\tilde f(x))$ agree with $\varphi(f(x))$ for every $x \in \FQ$ that is not a pole of $f$. It is noteworthy that the reduced function $\tilde f$ may be constant, which leads to the degenerate case $f=h^{2}+h+c_0$ with $c_0 \in \FQ$. This degenerate case requires separate treatment later.
\end{remark}

The following well-known lemma describes when a quadratic polynomial over $\Fq$ has $\Fq$-solutions. 

\begin{lemma}
\label{lem-quadeqn}
Consider the quadratic equation 
\begin{equation}
\label{eqn-quad}
x^2+ax+b=0
\end{equation}
with $a,b\in\Fq$.
\begin{enumerate}[label=(\arabic*)]
\item {\rm (\cite[Corollary 3.79]{LN97})} If $q$ is even and $a \ne 0$, then Equation \eqref{eqn-quad} has $\Fq$-solutions if and only if $\Trqt\left(\frac{b}{a^2}\right)=0$. 
\item {\rm (\cite[p. 130]{LN97})} If $q$ is odd, set $\Delta=a^2-4b$. Suppose Equation \eqref{eqn-quad} has $N$ distinct $\Fq$-solutions. Then
$$
N=\begin{cases}
     0 & \mbox{if $\eta(\Delta)=-1$,} \\
     1 & \mbox{if $\eta(\Delta)=0$,} \\
     2 & \mbox{if $\eta(\Delta)=1$.}
  \end{cases} 
$$
\end{enumerate}
\end{lemma}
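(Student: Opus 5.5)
The statement immediately above is Lemma \ref{lem-quadeqn}, the solvability criterion for $x^2+ax+b=0$ over $\Fq$. Both parts are cited from \cite{LN97}, but I would prove them directly by transforming the equation into a standard form, treating even and odd $q$ separately.

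For part (1), with $q$ even and $a \ne 0$, I will substitute $x = a y$. Dividing by $a^2$ turns the equation into $y^2 + y + c = 0$ with $c = b/a^2$, and this is a bijective change of variables, so the two equations have equally many $\Fq$-solutions. The map $L(y)=y^2+y$ is $\F_2$-linear on $\Fq$, and its kernel is $\{0,1\}$. Hence its image is an $\F_2$-subspace of index $2$, i.e.\ a hyperplane.

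Next, $\Trqt(y^2+y)=\Trqt(y)^2+\Trqt(y)=0$, since $\Trqt(y)\in\F_2$. Therefore the image of $L$ lies in $\ker \Trqt$. The trace is onto $\F_2$ (it is a nonzero polynomial of degree $q/2<q$), so $\ker\Trqt$ is also a hyperplane, and the two hyperplanes coincide. Since $c$ and $-c$ agree in characteristic $2$, the equation is solvable if and only if $\Trqt(b/a^2)=0$. No step here should be an obstacle; the only point needing care is the index count that identifies the image of $L$ with $\ker\Trqt$.

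For part (2), with $q$ odd, I will complete the square: $x^2+ax+b=(x+a/2)^2-\Delta/4$. Setting $z=x+a/2$, which is again a bijection, the equation becomes $z^2=\Delta/4$. Squaring is $2$-to-$1$ on $\Fq^*$, so $z^2=u$ has exactly $1+\eta(u)$ solutions for every $u \in \Fq$, using the convention $\eta(0)=0$. Since $\eta(1/4)=1$, we get $N=1+\eta(\Delta)$, which gives the three cases stated in the lemma.
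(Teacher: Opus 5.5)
Your proof is correct: the substitution $x=ay$ plus the observation that $\{y^2+y : y\in\Fq\}$ and $\ker\Trqt$ are $\F_2$-subspaces of the same index $2$, one contained in the other, settles (1), and completing the square with $\eta(1/4)=1$ gives $N=1+\eta(\Delta)$ in (2). The paper gives no argument of its own here; it only cites Lidl--Niederreiter, and your proof is the standard textbook reasoning behind those references (the index count is the usual additive Hilbert 90 argument), so nothing is missing.
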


We will also need the following lemma describing the number of rational points on an elliptic curve.

\begin{lemma}[{\cite[Theorem 4.12]{W08}}]
\label{lem-elliptic}
Let $E$ be an elliptic curve over $\Fq$ and $\#E(\Fq)$ be the number of $\Fq$-rational points on $E$. Let $\#E(\Fq)=q+1-a$. Write $X^2-aX+q=(X-\al)(X-\be)$. Then $\al+\be=a$, $\al\be=q$, and 
$$
\#E(\F_{q^m})=q^m+1-(\al^m+\be^m)
$$
for each $m \ge 1$.
\end{lemma}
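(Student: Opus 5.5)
This lemma is cited directly from \cite[Theorem 4.12]{W08}, so my plan is to recall the standard argument via the Frobenius endomorphism rather than reprove anything new. Let $\phi$ be the $q$-power Frobenius on $E$, acting on points over $\ol{\Fq}$. The points of $E(\F_{q^m})$ are exactly the fixed points of $\phi^m$, that is, $E(\F_{q^m})=\ker(\phi^m-1)$. Since $\phi^m-1$ is separable (its differential is $-1$), we get $\#E(\F_{q^m})=\deg(\phi^m-1)$.

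The key intermediate fact is that $\phi$ satisfies $\phi^2-a\phi+q=0$ in $\mathrm{End}(E)$, where $a=q+1-\#E(\Fq)=q+1-\deg(\phi-1)$. To prove it, I would use that the degree map is a positive definite quadratic form on $\mathrm{End}(E)$, so that $\deg(r\phi-s)=r^2q-rsa+s^2$ for integers $r,s$. Alternatively, I would compute the action of $\phi$ on the Tate module $T_\ell(E)$ for $\ell\ne p$, where $\det\phi_\ell=\deg\phi=q$ and $\mathrm{tr}\,\phi_\ell=1+\det\phi_\ell-\det(\phi_\ell-1)=a$. Cayley--Hamilton then gives $\phi_\ell^2-a\phi_\ell+q=0$, and faithfulness of the $\ell$-adic representation lifts this to $\mathrm{End}(E)$.

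Granting this, $\phi_\ell$ has eigenvalues $\al,\be$, the roots of $X^2-aX+q$, so $\al+\be=a$ and $\al\be=q$ trivially. Then $\phi_\ell^m$ has eigenvalues $\al^m,\be^m$, and
$$
\#E(\F_{q^m})=\deg(\phi^m-1)=\det(\phi_\ell^m-1)=(\al^m-1)(\be^m-1)=q^m+1-(\al^m+\be^m).
$$

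The main obstacle is establishing $\deg(\psi)=\det(\psi_\ell)$ together with separability of $\phi^m-1$. I would take both as standard facts about elliptic curves. Alternatively, I would avoid the Tate module by proving that $s_m=\al^m+\be^m$ satisfies the recurrence $s_{m+1}=as_m-qs_{m-1}$, and checking that $\deg(\phi^m-1)$ satisfies the same recurrence using the quadratic relation. Since the paper treats this lemma as a black box, I would simply cite it.
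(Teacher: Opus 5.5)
Your proposal is correct. The paper gives no proof of this lemma; it is quoted from Washington's Theorem~4.12, so citing it is exactly what the paper does. The sketch you add, however, is the $\ell$-adic argument, which is not how the cited source proves it.

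Washington first shows $\phi^2-a\phi+q=0$ by Cayley--Hamilton on $E[n]$ for infinitely many $n$ prime to $q$, using $\det\phi_n\equiv\deg\phi$ and $\det(\phi_n-1)\equiv\#E(\Fq)\pmod n$. He also notes that $a$ is the \emph{unique} integer with this property: two such integers $a\ne k$ would give $(a-k)\phi=0$, which is impossible since $\phi$ is surjective and $E[a-k]$ is finite. Next, $X^{2m}-(\al^m+\be^m)X^m+q^m=(X^m-\al^m)(X^m-\be^m)$ is divisible by the monic $X^2-aX+q$ in $\mathbb{Z}[X]$. Evaluating at $\phi$ gives $(\phi^m)^2-(\al^m+\be^m)\phi^m+q^m=0$. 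Applying uniqueness to $\phi^m$, the Frobenius of $E/\F_{q^m}$, yields $\al^m+\be^m=q^m+1-\#E(\F_{q^m})$.

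That route stays at the level of finite torsion. It never has to identify eigenvalues in $\overline{\mathbb{Q}}_\ell$ with the complex roots $\al,\be$; this is a harmless point in your version, since $\al^m+\be^m$ is an integer polynomial in $a$ and $q$. Your route is shorter once $\deg\psi=\det\psi_\ell$ is granted, and it exhibits $\al,\be$ directly as Frobenius eigenvalues.

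There is one slip in your alternative. The recurrence $s_{m+1}=as_m-qs_{m-1}$ is satisfied by $q^m+1-\deg(\phi^m-1)$, not by $\deg(\phi^m-1)$ itself. For $a=0$ it would force $\#E(\F_{q^3})=-q\,\#E(\Fq)<0$.
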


\subsection{Generalized Hamming weights and generalized Supercode Lemma}

For $\bc=(c_1,c_2,\ldots,c_n) \in \Fq^n$, we define the \emph{support} of $\bc$ to be
$$
\supp(\bc)=\left\{ i \in [n] \mid c_i \ne 0 \right\}.
$$
For a subset $D \subset \Fq^n$, define the support of $D$ as
$$
\supp(D)=\bigcup_{\bc \in D} \supp(\bc).
$$
Below, we describe the notion of generalized Hamming weights of a linear code, which was introduced in \cite{Wei91} as an extension of the minimum distance.

\begin{definition}[generalized Hamming weights]
\label{def-ghw}
Let $\cC$ be an $[n,k]_q$ linear code. For any positive integer $1 \le r \le k$, the $r$-th generalized Hamming weight of $\cC$, denoted by $d_r(\cC)$, is defined as the minimum support size of an $r$-dimensional subcode of $\cC$:
$$
d_r(\cC)=\min \left\{ \left|\supp(D)\right| \mid D \le \cC, \dim_{\Fq}(D)=r \right\}.
$$
\end{definition}

\begin{remark}\label{rem-ghw}
Let $\cC$ be an $[n,k,d]_q$ linear code.
\begin{enumerate}[label=(\arabic*)]
\item For $1 \le r \le k$, the $r$-th generalized Hamming weight $d_r(\cC)$ is the smallest support size among all $r$-dimensional subspaces of $\cC$ over $\Fq$. In particular, the first generalized Hamming weight $d_1(\cC)$ is exactly the minimum distance $d$ of $\cC$.
\item Similar to the generalized covering radius, the following monotonicity holds:
$$
1 \le d = d_1(\cC) < d_2 (\cC) < \cdots < d_k(\cC) \le n.
$$
\end{enumerate}
\end{remark}

We will need the following extension of the Supercode Lemma \cite[Proposition 1]{CKMS85}, in which generalized Hamming weights provide lower bounds on the generalized covering radii. 

\begin{proposition}[Generalized Supercode Lemma {\cite[Lemma III.1]{XY26}}] \label{prop-supercode}
Let $q$ be a prime power. Let $\cC$ and $\cC'$ be linear codes over $\Fq$ such that $\cC \subsetneq \cC' \subseteq \mathbb{F}_q^n$. For any $r \ge 1$, if $\dim(\cC') - \dim(\cC) \ge r$, then
$$
\rho_r(\cC) \ge d_r(\cC,\cC') \ge d_r(\cC').
$$
Here $d_r(\cC')$ is the $r$-th generalized Hamming weight of $\cC'$, $\rho_r(\cC)$ is the $r$-th generalized covering radius of $\cC$, and $d_r(\cC,\cC')$ is defined as
$$
d_r(\cC,\cC') = \max_{(\mathbf{c}'_1, \dots, \mathbf{c}'_r) \in \mathcal{T}_r} \min_{(\mathbf{c}_1, \dots, \mathbf{c}_r) \in C^r} \left| \bigcup_{i=1}^r \supp(\mathbf{c}'_i - \mathbf{c}_i) \right|,
$$
where $\mathcal{T}_r$ is the set of all $r$-tuples $(\mathbf{c}'_1, \dots, \mathbf{c}'_r) \in (\cC')^r$ that are linearly independent modulo $\cC$.
\end{proposition}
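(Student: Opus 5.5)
The plan is to recast $\rho_r(\cC)$ in coset form, directly in terms of vectors of $\Fq^n$, and then to compare it with the two quantities in the statement by restricting the choice of vectors. Fix a parity-check matrix $H$ of $\cC$ with columns $\bh_1,\ldots,\bh_n$; by Remark \ref{rem-gcr}(1) the choice does not matter.

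\textbf{Step 1: coset formulation.} I claim that
$$
\rho_r(\cC)=\max_{\bx_1,\ldots,\bx_r\in\Fq^n}\ \min_{\bc_1,\ldots,\bc_r\in\cC}\Big|\bigcup_{i=1}^r\supp(\bx_i-\bc_i)\Big|.
$$
Every syndrome has the form $\bs_i=H\bx_i$ for some $\bx_i\in\Fq^n$, because $H$ has full row rank. Moreover $\bs_i\in\langle \bh_j\mid j\in I\rangle$ holds if and only if there is a $\by_i$ with $\supp(\by_i)\subseteq I$ and $H\by_i=H\bx_i$. The last condition says exactly that $\by_i=\bx_i-\bc_i$ for some $\bc_i\in\cC$.

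So for a fixed tuple $(\bs_1,\ldots,\bs_t)$, the smallest admissible $|I|$ is the minimum over $\bc_i\in\cC$ of $|\bigcup_i\supp(\bx_i-\bc_i)|$. Taking the maximum over all tuples of syndromes, Definition \ref{def-gcr1} gives the displayed formula.

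\textbf{Step 2: first inequality.} The formula in Step 1 maximises over all $r$-tuples in $\Fq^n$. Restricting to tuples in $\mathcal{T}_r\subseteq(\cC')^r$ can only decrease the maximum, which gives $\rho_r(\cC)\ge d_r(\cC,\cC')$. The hypothesis $\dim\cC'-\dim\cC\ge r$ guarantees that $\mathcal{T}_r\neq\emptyset$: take $\bc'_1,\ldots,\bc'_r$ whose images form $r$ independent vectors in $\cC'/\cC$. Hence $d_r(\cC,\cC')$ is well defined.

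\textbf{Step 3: second inequality.} Pick any $(\bc'_1,\ldots,\bc'_r)\in\mathcal{T}_r$ and any $\bc_1,\ldots,\bc_r\in\cC$, and set $\bv_i=\bc'_i-\bc_i\in\cC'$.

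The $\bv_i$ are linearly independent over $\Fq$. Indeed, suppose $\sum a_i\bv_i=\bz$. Then $\sum a_i\bc'_i=\sum a_i\bc_i\in\cC$, and independence of the $\bc'_i$ modulo $\cC$ forces every $a_i=0$.

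Hence $D=\langle\bv_1,\ldots,\bv_r\rangle$ is an $r$-dimensional subcode of $\cC'$. Its support satisfies $\supp(D)=\bigcup_i\supp(\bv_i)$, since every vector of $D$ is supported inside this union and each $\bv_i$ lies in $D$. Therefore
$$
\Big|\bigcup_i\supp(\bc'_i-\bc_i)\Big|\ge d_r(\cC').
$$
This holds for every choice of the $\bc_i$ and every tuple in $\mathcal{T}_r$, so it holds for the max–min as well, giving $d_r(\cC,\cC')\ge d_r(\cC')$.

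The only step needing real care is the equivalence in Step 1 between the syndrome-span definition and the coset formulation; the other two steps are direct.
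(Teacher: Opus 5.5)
Your proof is correct and complete. Step~1 is sound because $H$ has full row rank and $H\by_i=H\bx_i$ exactly when $\bx_i-\by_i\in\cC$. Step~2 is a restriction of the maximum to $\mathcal{T}_r$, which is nonempty by the dimension hypothesis. Step~3 correctly shows that the vectors $\bc'_i-\bc_i$ are linearly independent and span an $r$-dimensional subcode of $\cC'$ whose support is their union.

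The paper gives no proof of this proposition; it simply imports it from \cite[Lemma III.1]{XY26}, so there is no internal argument to compare against. Your route is the natural one. You use the coset (union-of-supports) characterization of $\rho_r(\cC)$, restrict to tuples in $\mathcal{T}_r$, and then observe that differences $\bc'_i-\bc_i$ stay independent modulo $\cC$. This is the direct generalization of the classical Supercode Lemma argument of \cite{CKMS85}, which handles the case $r=1$.

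One small remark: your formula in Step~1 involves only $\cC$, and its derivation works for any parity-check matrix. So Step~1 already proves that $\rho_r(\cC)$ is independent of the choice of $H$, and the appeal to Remark~\ref{rem-gcr}(1) is not needed.
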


In this next subsection, we will provide detailed descriptions of Melas codes and their supercodes that lead to lower bounds on the generalized covering radii of Melas codes in view of the Generalized Supercode Lemma.

\subsection{Irreducible primitive cyclic codes and Melas codes}

Now we fix some notation that will be used throughout the rest of the paper. Let $q$ be a prime power and $m \ge 1$. Set $Q=q^m$. Let $\al$ be a primitive element of $\FQ$. For $0 \le i \le Q-2$, denote the minimal polynomial of $\al^i$ over $\Fq$ by $m_{\al^i}(x)$.  An \emph{irreducible primitive cyclic code} over $\Fq$ with one single zero $\al$, denoted by $\cC_{1}^{m,q}$, has generator polynomial $m_{\al}(x)$. Regard both $\Fqm$ and $\Fq^m$ as $m$-dimensional vector space over $\Fq$. Let 
$$
\phi: \Fqm \rightarrow \Fq^m
$$ 
be an $\Fq$-linear isomorphism. Let
$$
H_1^{m,q}=
\begin{bmatrix}
1 & \al & \al^{2} & \cdots & \al^{Q-2}
\end{bmatrix}.
$$
Then the parity-check matrix of $\cC_1^{m,q}$ is
$$
\phi(H_1^{m,q})=
\begin{bmatrix}
\phi(1) & \phi(\al) & \phi(\al^{2}) & \cdots & \phi(\al^{Q-2})
\end{bmatrix}.
$$
Moreover, $\cC_1^{m,q}$ is a $[q^m-1,q^m-1-m]_q$ code. When $q=2$, $\cC_1^{m,2}$ is the binary Hamming code \cite[Section 1.8]{HP03}. For odd prime power $q$, set 
$$
\wt{H_1^{m,q}}=
\begin{bmatrix}
1 & \al & \al^{2} & \cdots & \al^{\frac{Q-3}{2}}
\end{bmatrix}.
$$
Let $\wt{\cC_1^{m,q}}$ denote the $[\frac{q^m-1}{2},\frac{q^m-1}{2}-m]_q$ code with parity-check matrix
$$
\phi(\wt{H_1^{m,q}})=
\begin{bmatrix}
\phi(1) & \phi(\al) & \phi(\al^{2}) & \cdots & \phi(\al^{\frac{Q-3}{2}})
\end{bmatrix}.
$$
When $q=3$, $\wt{\cC_1^{m,3}}$ is the ternary Hamming code \cite[Section 1.8]{HP03}.

In order to apply the Generalized Supercode Lemma, we need the following result which describes the generalized Hamming weight of irreducible primitive cyclic codes and related codes.

\begin{proposition}
\label{prop-GHW}
\begin{enumerate}[label=(\arabic*)]
\item Let $m \ge 2$. Then
$$
\left\{d_r(\cC_1^{m,q}) \mid 1 \le r \le q^m-1-m\right\}=\left\{ 1 \le i \le q^m-1 \mid i \notin \{ 1,q,q^2,\ldots,q^{m-1}\} \right\}.
$$
Therefore, for $2 \le s \le q^m-1$, if $s$ is not a power of $q$, then
$$
d_r(\cC_1^{m,q})=s,
$$ 
where $r=s-\lc \log_q s \rc$. Consequently, for $1 \le r \le q^m-1-m$, we have
$$
r+\lc \log_q (r+1) \rc \le d_r(\cC_1^{m,q}) \le r+\lc \log_q (r+m) \rc.
$$
\item Let $m \ge 2$. Then 
$$
\{d_r(\wt{\cC_1^{m,3}}) \mid 1 \le r \le \frac{3^m-1}{2}-m\}=\{ 1 \le i \le \frac{3^m-1}{2} \mid i \notin \{ \frac{3^j+1}{2} \mid 0 \le j \le m-1 \}\}.
$$
Therefore, for $3 \le s \le \frac{3^m-1}{2}$, if $2s-1$ is not a power of $3$, then
$$
d_r(\wt{\cC_1^{m,3}})=s,
$$ 
where $r=s-\lc \log_3 (2s-1) \rc$. Consequently, for $1 \le r \le \frac{3^m-1}{2}-m$, we have
$$
r+\lc \log_3 (2r+3) \rc \le d_r(\wt{\cC_1^{m,3}}) \le r+\lc \log_3 (2r+2m-1) \rc.
$$
\end{enumerate}
\end{proposition}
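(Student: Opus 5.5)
The plan is to use Wei's duality theorem for generalized Hamming weights \cite{Wei91}: for an $[n,k]_q$ code $\cC$,
$$
\{d_r(\cC)\mid 1\le r\le k\}\ \text{and}\ \{n+1-d_r(\cC^\perp)\mid 1\le r\le n-k\}
$$
partition $[n]$. This reduces the problem to the generalized Hamming weights of the duals, which are simplex-type codes.

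For part (1), $\cC=\cC_1^{m,q}$ and $n=q^m-1$. The dual $(\cC_1^{m,q})^\perp$ is the $m$-dimensional code generated by $\phi(H_1^{m,q})$, whose columns run over all nonzero vectors of $\Fq^m$. An $r$-dimensional subcode corresponds to a choice of $r$ independent linear functionals. Its support is the set of columns not lying in the common kernel, which is a subspace of dimension $m-r$. Therefore every $r$-dimensional subcode has support exactly $q^m-q^{m-r}$, so $d_r(\cC^\perp)=q^m-q^{m-r}$ and $n+1-d_r(\cC^\perp)=q^{m-r}$ for $1\le r\le m$. Wei's duality then gives the set $\{1\le i\le q^m-1\}\sm\{1,q,\ldots,q^{m-1}\}$ of values of $d_r(\cC)$.

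Since $d_r$ is strictly increasing (Remark \ref{rem-ghw}), $d_r(\cC)=s$ exactly when $s$ is the $r$-th element of this set. Thus $r=s-\#\{0\le j\le m-1\mid q^j<s\}$. For $s\le q^m-1$ not a power of $q$, this count is $k:=\lc\log_q s\rc$, which gives $r=s-\lc\log_q s\rc$. For the bounds, write $d_r=s=r+k$, where $k$ is the number of powers $q^j$ (with $j\le m-1$) below $s$.
\begin{itemize}
\item Lower bound: $q^k>s$, so $q^k\ge s+1\ge r+1$, which gives $k\ge\lc\log_q(r+1)\rc$. If $k=m$, the bound holds trivially, since $q^m\ge r+m+1$.
\item Upper bound: $q^{k-1}<s=r+k\le r+m$, so $k-1<\log_q(r+m)$ and hence $k\le\lc\log_q(r+m)\rc$.
\end{itemize}

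For part (2), the dual of $\wt{\cC_1^{m,3}}$ is the ternary simplex code, with one column for each projective point of $\F_3^m$. Here $n=(3^m-1)/2$. An $r$-dimensional subcode misses exactly the $(3^{m-r}-1)/2$ projective points of an $(m-r)$-dimensional subspace. Hence $d_r(\cC^\perp)=n-(3^{m-r}-1)/2$ and $n+1-d_r(\cC^\perp)=(3^{m-r}+1)/2$. Wei's duality yields the stated set. The counting step becomes
$$
\#\left\{j\ \middle|\ \tfrac{3^j+1}{2}<s\right\}=\#\{j\mid 3^j<2s-1\}=\lc\log_3(2s-1)\rc
$$
when $2s-1$ is not a power of $3$; the restriction $s\ge3$ avoids the degenerate small values $1,2$. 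The bounds follow by the same argument with $k=\lc\log_3(2s-1)\rc$:
\begin{itemize}
\item from $3^k>2s-1$ we get $3^k\ge 2s+1\ge 2r+3$;
\item from $3^{k-1}<2s-1=2r+2k-1\le 2r+2m-1$ we get the upper bound.
\end{itemize}

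The only real obstacle is the clean identification of $d_r$ of the simplex-type duals, namely that every $r$-dimensional subcode has the same support size. The argument via the common kernel handles this directly. The rest is careful bookkeeping with ceilings, including the edge cases where $s$ is close to $q^m-1$.
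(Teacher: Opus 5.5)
Your proof is correct and takes the same route as the paper: Wei's duality theorem applied to the simplex-type duals, followed by the same ceiling bookkeeping. The only difference is that the paper cites the dual weights $q^m-q^{m-r}$ and $(3^m-3^{m-r})/2$ from the literature, whereas you derive them directly from the common-kernel count. One small step in part (2) should be stated: $3^k\ge 2s+1$ follows from $3^k>2s-1$ because $3^k$ is odd.
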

\begin{proof}
(1) According to \cite[Corollary 7]{YLFL15}, the generalized Hamming weights of the dual code of $\cC_1^{m,q}$ is $d_r((\cC_1^{m,q})^\perp)=q^m-q^{m-r}$, where $0 \le r \le m$. By \cite[Theorem 3]{Wei91}, the generalized Hamming weights of $\cC_1^{m,q}$ follows. Specifically,  
$$
\left\{d_r(\cC_1^{m,q}) \mid 1 \le r \le q^m-1-m\right\}=\left\{ 1 \le i \le q^m-1 \mid i \notin \{ 1,q,q^2,\ldots,q^{m-1}\} \right\}.
$$
Since $d_r(\cC_1^{m,q})$ precisely skips the powers of $q$ within $[q^m-1]$, we have $d_r(\cC_1^{m,q})=s$ with $r=s-\lc \log_q s \rc$. Since $s=r+\lc \log_q s \rc=r+\lc \log_q (r+\lc \log_q s \rc) \rc$ and $2 \le s \le q^m-1$, therefore,
$$
r+\lc \log_q (r+1) \rc \le d_r(\cC_{1}^{m,q})=s \le r+\lc \log_q (r+m) \rc.
$$

(2) Note that $\wt{\cC_1^{m,3}}$ is the ternary Hamming code, which is the dual code of the ternary projective first order Reed-Muller code. According to \cite[Proposition 7.1 b)]{TV95}, the generalized Hamming weight $d_r((\wt{\cC_{1}^{m,3}})^\perp)=\frac{3^m-3^{m-r}}{2}$ with $1 \le r \le m$.  By \cite[Theorem 3]{Wei91}, the generalized Hamming weights of $\wt{\cC_{1}^{m,3}}$ follows. Specifically, 
$$
\{d_r(\wt{\cC_1^{m,3}}) \mid 1 \le r \le \frac{3^m-1}{2}-m\}=\{ 1 \le i \le \frac{3^m-1}{2} \mid i \notin \{ \frac{3^j+1}{2} \mid 0 \le j \le m-1 \}\}.
$$
The lower and upper bounds on $d_r(\wt{\cC_{1}^{m,3}})$ follow from a straightforward computation.  
\end{proof}

The \emph{Melas code}, denoted by $M(m,q)$, is a cyclic code over $\Fq$ with generator polynomial 
$$
g_{m,q}(x)=\lcm \left(m_{\al}(x),m_{\al^{-1}}(x)\right).
$$ 
Set 
$$
H_{1,-1}^{m,q}=
\begin{bmatrix}
1 & \al & \al^2 & \cdots & \al^{Q-2} \\
1 & \al^{-1} & \al^{-2} & \cdots & \al^{-(Q-2)}
\end{bmatrix}.
$$
Then $M(m,q)$ has parity-check matrix
$$
\phi(H_{1,-1}^{m,q})=
\begin{bmatrix}
\phi(1) & \phi(\al) & \phi(\al^2) & \cdots & \phi(\al^{Q-2}) \\
\phi(1) & \phi(\al^{-1}) & \phi(\al^{-2}) & \cdots & \phi(\al^{-(Q-2)})
\end{bmatrix}.
$$
Clearly, $\cC_1^{m,q}$ is a supercode of $M(m,q)$. Moreover, define 
$$
\wt{H_{1,-1}^{m,3}}=
\begin{bmatrix}
1 & \al & \al^2 & \cdots & \al^{\frac{Q-3}{2}} \\
1 & \al^{-1} & \al^{-2} & \cdots & \al^{-\frac{Q-3}{2}}
\end{bmatrix}.
$$
and let $\wt{M(m,3)}$ be the linear code with parity-check matrix 
$$
\phi(\wt{H_{1,-1}^{m,3}})=
\begin{bmatrix}
\phi(1) & \phi(\al) & \phi(\al^2) & \cdots & \phi(\al^{\frac{Q-3}{2}}) \\
\phi(1) & \phi(\al^{-1}) & \phi(\al^{-2}) & \cdots & \phi(\al^{-\frac{Q-3}{2}})
\end{bmatrix}.
$$
Therefore, $\wt{M(m,3)}$ is a $[\frac{3^m-1}{2},\frac{3^m-1}{2}-2m]_3$ code and $\wt{\cC_1^{m,3}}$ is a supercode of $\wt{M(m,3)}$. To compute the generalized cover radius, it is more convenient to study the $\Fq$-span of the columns of $H_{1,-1}^{m,q}$ and $\wt{H_{1,-1}^{m,3}}$, rather than working directly with the parity-check matrices $\phi(H_{1,-1}^{m,q})$ and $\phi(\wt{H_{1,-1}^{m,3}})$.

The covering radii of Melas codes have been thoroughly studied in a series of literature, for which we refer to \cite{SHOS22} for a comprehensive account and the references therein. The covering radii of Melas code $M(m,q)$ have been completely determined from all $(m,q)$ pairs. 

\begin{result}
\label{res-cr}
Let $m \ge 1$ and $q$ be a prime power. Then the following holds
$$
\rho_1(M(m,q))=
\begin{cases}
1 & \mbox{if $q=2$, $m \in \{1,2\}$} \\
3 & \mbox{if $q=2$, $m \ge 3$} \\
1 & \mbox{if $q=3$, $m = 1$} \\
4 & \mbox{if $q=3$, $m = 2$} \\
3 & \mbox{if $q=3$, $m \ge 3$} \\
2 & \mbox{if $q \ge 4$, $m \ge 1$} 
\end{cases}
$$
\end{result}

\begin{remark}\label{rem-Melas}
\begin{enumerate}[label=(\arabic*)]
\item Unless $(m,q) \in \{(2,2),(1,2),(1,3)\}$, $m_{\al}(x) \ne m_{\al^{-1}}(x)$ and therefore, the Melas code $M(m,q)$ has the generator polynomial $g_{m,q}(x)=m_{\al}(x)m_{\al^{-1}}(x)$. Thus, the Melas code $M(m,q)$ always has parameters 
$$
\begin{cases}
[q^m-1,q^m-1-m]_q & \mbox{if $(m,q) \in \{ (1,2), (1,3), (2,2) \}$} \\
[q^m-1,q^m-1-2m]_q & \mbox{otherwise}
\end{cases}
$$
Specifically, $M(1,2)$ is a degenerate $[1,0]_2$ code with parity-check matrix 
$$
\phi(H_{1,-1}^{1,2})=
\begin{bmatrix}
1 \\
\end{bmatrix}
$$
and consisting of the zero vector as the only codeword. Thus, $\rho_1(M(1,2))=1$. Moreover, in view of Remark \ref{rem-gcr}(2), $\rho_2(M(1,2))=1$.

$M(1,3)$ is a $[2,1,2]_3$ code with parity-check matrix 
$$
\phi(H_{1,-1}^{1,3})=
\begin{bmatrix}
2 & 1 \\
\end{bmatrix}.
$$
Thus, $\rho_1(M(1,3))=1$. Moreover, in view of Remark \ref{rem-gcr}(2), $\rho_2(M(1,3))=1$.

$M(2,2)$ is a  $[3,1,3]_2$ code with parity-check matrix 
$$
\phi(H_{1,-1}^{2,2})=
\begin{bmatrix}
0 & 1 & 1 \\
1 & 0 & 1 \\
\end{bmatrix}.
$$
Thus, $\rho_1(M(2,2))=1$ and $\rho_2(M(2,2))=2$.

We have so far obtained $\rho_2(M(m,q))$ for $(m,q) \in \{ (1,2), (1,3), (2,2) \}$ in Equation \eqref{eqn-seccr}. Below, when considering the second generalized covering radii, we will exclude the known cases of $(m,q) \in \{ (1,2), (1,3), (2,2) \}$ for the sake of simplicity.
\item In the literature, for instance \cite[p. 4354]{SHOS22}, it is mentioned that $\rho_1(M(2,2))=3$. On the other hand, in view of the parity-check matrix $\phi(H_{1,-1}^{2,2})$, its three columns consists of all nonzero vectors in $\Ft^2$, thus the covering radius $\rho_1(M(2,2))$ should be equal to $1$.
\end{enumerate}
\end{remark}

\section{Lower and upper bounds on the generalized covering radii of Melas codes} \label{sec-bound}
\subsection{Lower bounds}

In this subsection, we derive several lower bounds on $\rho_t(M(m,q))$. 

Employing the Generalized Supercode Lemma in Proposition \ref{prop-supercode} and the generalized Hamming weight of irreducible primitive cyclic codes and related codes in Proposition \ref{prop-GHW}, we obtain the following lower bound on $\rho_t(M(m,q))$.

\begin{theorem}
\label{thm-GHWlb}
\begin{enumerate}[label=(\arabic*)]
\item Let $m \ge 2$ with $(m,q) \ne (2,2)$, and let 
$$1 \le t \le \min\{m,q^m-1-m\}.$$ 
Then 
$$\rho_t(M(m,q)) \ge d_t(\cC_1^{m,q}) \ge t+\lc \log_q (t+1) \rc.$$
\item Let $m \ge 2$ and $1 \le t \le m$. Then 
$$\rho_t(M(m,3)) \ge d_t(\wt{\cC_1^{m,3}}) \ge t+\lc \log_3 (2t+3) \rc.$$
\end{enumerate}
\end{theorem}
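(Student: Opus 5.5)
Both parts follow by feeding the supercode relations of the previous subsection into the Generalized Supercode Lemma (Proposition \ref{prop-supercode}) and then using the bounds of Proposition \ref{prop-GHW}.

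\textbf{Part (1).} Take $\cC=M(m,q)$ and $\cC'=\cC_1^{m,q}$. Since $m_{\al}(x)\mid g_{m,q}(x)$, we have $\cC\subseteq\cC'$. Because $m\ge 2$ and $(m,q)\ne(2,2)$, Remark \ref{rem-Melas}(1) gives
$$
\dim M(m,q)=q^m-1-2m, \qquad \dim \cC_1^{m,q}=q^m-1-m .
$$
So the containment is strict and $\dim\cC'-\dim\cC=m\ge t$. Proposition \ref{prop-supercode} therefore yields $\rho_t(M(m,q))\ge d_t(\cC_1^{m,q})$. The condition $t\le q^m-1-m$ guarantees that $d_t(\cC_1^{m,q})$ is defined. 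The final inequality $d_t(\cC_1^{m,q})\ge t+\lc\log_q(t+1)\rc$ is exactly the lower bound in Proposition \ref{prop-GHW}(1).

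\textbf{Part (2), set-up.} Here the Supercode Lemma cannot be applied to $M(m,3)$ directly, since $\wt{\cC_1^{m,3}}$ has length $\frac{3^m-1}{2}$. The plan has two steps.

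First, apply Proposition \ref{prop-supercode} to $\wt{M(m,3)}\subsetneq \wt{\cC_1^{m,3}}$. The parity-check matrix of $\wt{M(m,3)}$ contains that of $\wt{\cC_1^{m,3}}$ as its top block, which gives the containment. For the dimension gap, we need $\phi(\wt{H_{1,-1}^{m,3}})$ to have full rank $2m$, consistent with the stated parameters of $\wt{M(m,3)}$. This holds whenever $M(m,3)$ has redundancy $2m$, because of the scaling argument below. It also gives $t\le m\le \frac{3^m-1}{2}-m$ for $m\ge2$. The result is $\rho_t(\wt{M(m,3)})\ge d_t(\wt{\cC_1^{m,3}})$.

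Second, show $\rho_t(M(m,3))=\rho_t(\wt{M(m,3)})$ using Definition \ref{def-gcr1} on column spans. Since $\al^{(Q-1)/2}=-1$, each column $(\al^{j+(Q-1)/2},\al^{-j-(Q-1)/2})^T$ of $H_{1,-1}^{m,3}$ equals $-(\al^j,\al^{-j})^T$. The columns of $H_{1,-1}^{m,3}$ are therefore exactly the columns of $\wt{H_{1,-1}^{m,3}}$ together with their negatives. The two matrices have the same row space, hence the same $\Fq$-span; in particular $\phi(\wt{H_{1,-1}^{m,3}})$ also has rank $2m$. Any span of $r$ columns of one matrix equals a span of at most $r$ columns of the other, since we can replace a column by its negative and drop duplicates. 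The two radii are therefore equal.

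\textbf{Part (2), conclusion.} Combining the two steps gives $\rho_t(M(m,3))\ge d_t(\wt{\cC_1^{m,3}})$. The bound $d_t(\wt{\cC_1^{m,3}})\ge t+\lc\log_3(2t+3)\rc$ is the lower bound in Proposition \ref{prop-GHW}(2).

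The main point needing care is this column-span identification and rank check for the ternary case. The remaining steps are direct applications of the earlier results.
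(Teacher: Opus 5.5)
Your proposal is correct and follows the paper's proof essentially step for step. For part (1) you use the Generalized Supercode Lemma with $\cC_1^{m,q}\supsetneq M(m,q)$ plus Proposition~\ref{prop-GHW}(1); for part (2) you use the Supercode Lemma on $\wt{M(m,3)}\subsetneq\wt{\cC_1^{m,3}}$ together with the observation that columns $j$ and $j+\frac{Q-1}{2}$ of $H_{1,-1}^{m,3}$ are negatives of each other, which gives $\rho_t(M(m,3))=\rho_t(\wt{M(m,3)})$. Your explicit checks of the rank $2m$ and of $t\le \frac{3^m-1}{2}-m$ are details the paper leaves implicit. One wording slip: the two parity-check matrices have the same column space, not the same row space, and that is what gives them equal rank.
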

\begin{proof}
(1) As $m \ge 2$ with $(m,q) \ne (2,2)$, by Remark \ref{rem-Melas}(1), we have $\dim_{\Fq}(\cC_1^{m,q})-\dim_{\Fq}(M(m,q))=m$. Then the result follows directly from Proposition \ref{prop-GHW}(1).

\noindent (2) First recall that $\wt{\cC_1^{m,3}}$ is a supercode of $\wt{M(m,3)}$. As $\dim_{\F_3}(\wt{\cC_1^{m,3}})-\dim_{\F_3}(\wt{M(m,3)})=m$, applying Proposition \ref{prop-supercode} and Proposition \ref{prop-GHW}(2), we have
$\rho_t(\wt{M(m,3)}) \ge d_t(\wt{\cC_1^{m,3}}) \ge t+\lc \log_3 (2t+3) \rc$. Moreover, note that for $0 \le j \le \frac{Q-3}{2}$,
we have
$$
\begin{bmatrix}
\al^{j+\frac{Q-1}{2}} \\
\al^{-j-\frac{Q-1}{2}}
\end{bmatrix}
=
-\begin{bmatrix}
\al^{j} \\
\al^{-j}
\end{bmatrix}
$$
Therefore, the $j$-th column and the $(j+\frac{Q-1}{2})$-th column of $M(m,3)$ generate the same one-dimensional subspace over $\F_3$. Thus, either column may be used in forming an $\F_3$-span without changing the resulting subspace. It follows from Definition~\ref{def-gcr1} that, for every $t$, the $t$-th generalized covering radius of $M(m,3)$ is equal to that of $\wt{M(m,3)}$, namely, $\rho_t(M(m,3))=\rho_t(\wt{M(m,3)})$. This completes the proof. 
\end{proof}

Below, we observe that the combinatorial argument in \cite[Theorem IV.1]{XY26} can be applied to give the following stronger lower bound on $\rho_t(M(m,q))$ when $m$ is sufficiently large compared with $t$, which justifies Theorem \ref{thm-asymlb}.

\begin{theorem}
Let $q$ be a prime power and $m \ge 1$. Suppose $(m,q) \notin \{(2,2), (1,2), (1,3)\}$. For $1 \le t \le m$, if 
$$	
q^m \ge \frac{q^{t(2t-1)}}{(2t-1)!},
$$
then $\rho_t(M(m,q)) \geq 2t$. In particular, if $m \ge t(2t-1)$ and $(m,q) \notin \{(2,2), (1,2), (1,3)\}$, then $\rho_t(M(m,q)) \geq 2t$.
\end{theorem}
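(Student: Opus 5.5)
We argue by counting: we exhibit $t$ syndromes that cannot all be covered by $2t-1$ columns. Work with the $\FQ$-version of the parity-check matrix. For $x\in\FQ^*$ write $\bh(x)=(x,x^{-1})^T\in\FQ^2$, so the columns of $H_{1,-1}^{m,q}$ are the vectors $\bh(x)$, $x\in\FQ^*$. Syndromes are vectors $\bs\in\FQ^2$, and the $\Fq$-dimension of the syndrome space is $2m$, since we are outside the exceptional pairs (Remark \ref{rem-Melas}(1)). Suppose, for contradiction, that $\rho_t(M(m,q))\le 2t-1$. Then every $t$-tuple of syndromes lies in the $\Fq$-span of at most $2t-1$ columns. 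Padding with repeated or arbitrary columns, we may assume the span uses exactly $2t-1$ columns, that is, a $(2t-1)$-subset $I\subseteq\FQ^*$; the case $Q-1<2t-1$ would need separate handling but is ruled out by the hypothesis.

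The count has three steps.
\begin{enumerate}[label=(\arabic*)]
\item Bound the number of $t$-tuples of syndromes covered by a single subset $I$. Every such tuple lies in $V_I=\langle \bh(x)\mid x\in I\rangle_{\Fq}$, which has $\Fq$-dimension at most $2t-1$. Hence $I$ covers at most $|V_I|^t\le q^{t(2t-1)}$ tuples.
\item Count the subsets. There are $\binom{Q-1}{2t-1}\le \frac{(Q-1)^{2t-1}}{(2t-1)!}<\frac{Q^{2t-1}}{(2t-1)!}$ of them.
\item Compare with the total. There are $Q^{2t}$ tuples $(\bs_1,\ldots,\bs_t)$ in total, since each $\bs_i\in\FQ^2$ and there are $(Q^2)^t$ choices.
\end{enumerate}

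Covering everything therefore requires
\[
Q^{2t}\le \frac{Q^{2t-1}}{(2t-1)!}\,q^{t(2t-1)}\quad\text{(strictly less)},
\]
that is, $Q<\frac{q^{t(2t-1)}}{(2t-1)!}$. This contradicts the hypothesis $q^m\ge q^{t(2t-1)}/(2t-1)!$, so $\rho_t(M(m,q))\ge 2t$.

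The main obstacle is making this counting tight enough to obtain a strict inequality at equality in the hypothesis. The strictness comes from $(Q-1)^{2t-1}<Q^{2t-1}$. One can also gain from the fact that the zero tuple, and tuples spanning lower-dimensional spaces, are counted by many $I$. The exceptional $(m,q)$ pairs are excluded precisely so that the syndrome space has full size $Q^2$. I also need to check the edge case $2t-1>Q-1$, where no $(2t-1)$-subset exists. In that case the bound $\rho_t\le Q-1<2t$ has to be contrasted with the hypothesis. But $t\le m$ gives $2t-1\le 2m-1<q^m-1$ except for tiny cases, namely $q=2$ with $m\le 2$, which are excluded.

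The "in particular" statement is immediate. If $m\ge t(2t-1)$, then $q^m\ge q^{t(2t-1)}\ge q^{t(2t-1)}/(2t-1)!$.
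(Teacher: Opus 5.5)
Your proposal is correct and is essentially the paper's proof. Assuming $\rho_t(M(m,q))\le 2t-1$, the paper likewise covers $(\FQ^2)^t$ by the sets $\left(\spa_{\Fq}\{\bv(x_1),\ldots,\bv(x_{2t-1})\}\right)^t$ and derives $q^{2tm}\le\binom{q^m-1}{2t-1}q^{t(2t-1)}<\frac{(q^m)^{2t-1}}{(2t-1)!}q^{t(2t-1)}$, which contradicts the hypothesis. The paper leaves two details implicit that you check explicitly: the excluded pairs are exactly those where the columns fail to span all of $\FQ^2$, and $2t-1\le Q-1$ holds, so padding to exactly $2t-1$ distinct columns is possible.
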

\begin{proof}
For each $x \in \F_{q^m}^*$, define a vector $\bv(x)=\qbinom{x}{x^{-1}} \in \Fqm^2$. Suppose $\rho_t(M(m,q)) \le 2t-1$, then for any $\left(\ba_1,\ba_2,\cdots,\ba_t\right) \in \left(\Fqm^2\right)^t$, there exist $x_1,x_2,\cdots,x_{2t-1} \in \F_{q^m}^*$ such that 
$$
(\ba_1,\ba_2,\cdots,\ba_t) \in \left(\spa_{\Fq}\{ \bv(x_1), \bv(x_2), \ldots, \bv(x_{2t-1})\}\right)^t.
$$	
Consequently,
$$
\left(\Fqm^2\right)^t \subseteq \bigcup_{\{x_1,x_2,\cdots,x_{2t-1}\} \subseteq \Fqm^*} \left(\spa_{\Fq} \{\bv(x_1), \bv(x_2),\cdots, \bv(x_{2t-1})\}\right)^t.
$$
By counting the size of the above two subsets, we have
$$
q^{2tm} \le \binom{q^m-1}{2t-1} \left(q^{2t-1}\right)^t < \frac{(q^m)^{2t-1}}{(2t-1)!}q^{t(2t-1)},
$$
which implies that 
$$
q^m < \frac{q^{t(2t-1)}}{(2t-1)!}.
$$
Consequently, if
$$
q^m \ge \frac{q^{t(2t-1)}}{(2t-1)!},
$$
then
$$
\rho_t(M(m,q)) \ge 2t.
$$
\end{proof}

For the second generalized covering radius, we have the following lower bound that holds for all Melas codes $M(m,q)$, except for $(m,q) \in \{ (1,2), (1,3), (2,2) \}$, which matches the lower bound in Theorem \ref{thm-asymlb} without the condition on $m$.  

\begin{theorem}\label{thm-gcv2lb}
Let $q$ be a prime power and $m \ge 2$. Then $\rho_2(M(m,q)) \ge 4$, except for $(m,q) =(2,2)$.
\end{theorem}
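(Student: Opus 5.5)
The plan is a direct construction. I will exhibit two syndromes that no three columns of $H_{1,-1}^{m,q}$ can cover simultaneously. As the paper suggests, I work with $\Fq$-spans of the columns $\bv(x)=(x,x^{-1})^{T}\in\FQ^2$, $x\in\FQ^*$, rather than with $\phi(H_{1,-1}^{m,q})$.

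First I justify this reformulation. By Remark \ref{rem-Melas}(1), for $m\ge 2$ and $(m,q)\ne(2,2)$ the code $M(m,q)$ has dimension $q^m-1-2m$. Hence $\phi(H_{1,-1}^{m,q})$ has full rank $2m$, so the $\Fq$-span of the columns $\bv(x)$ is all of $\FQ^2$. The map $\phi\times\phi$ is an $\Fq$-isomorphism $\FQ^2\to\Fq^{2m}$, so by Definition \ref{def-gcr1} and Remark \ref{rem-gcr}(1) it suffices to find $\bs_1,\bs_2\in\FQ^2$ with the following property: for every set $\{x_1,x_2,x_3\}\subseteq\FQ^*$ of at most three elements,
$$\{\bs_1,\bs_2\}\not\subseteq \spa_{\Fq}\{\bv(x_1),\bv(x_2),\bv(x_3)\}.$$
(The case $(m,q)=(2,2)$ is exactly where this fails: there the second row is determined by the first and the column space is only $m$-dimensional.)

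I choose $\bs_1=(1,0)^T$ and $\bs_2=(\ga,0)^T$ with $\ga\in\FQ\sm\Fq$, which exists since $m\ge2$. Let $U=\spa_{\Fq}\{\bs_1,\bs_2\}$; it is $2$-dimensional over $\Fq$ and lies in $\FQ\times\{0\}$. Suppose $U\subseteq W:=\spa_{\Fq}\{\bv(x_1),\bv(x_2),\bv(x_3)\}$, allowing repeated or fewer points. Consider the $\Fq$-linear map $\pi_2:W\to\FQ$ given by projection to the second coordinate. Then $U\subseteq\ker\pi_2$, so $\dim\ker\pi_2\ge2$. Since $\dim W\le3$, this forces $\dim\pi_2(W)\le1$. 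But $\pi_2(\bv(x_1))=x_1^{-1}\ne0$, so $\pi_2(W)=\Fq\, x_1^{-1}$.

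Next I look at the first coordinate. Every $x_i^{-1}$ lies in $\Fq x_1^{-1}$, so $x_i^{-1}=a_ix_1^{-1}$ with $a_i\in\Fq^*$, and therefore $x_i=a_i^{-1}x_1\in\Fq x_1$. Hence the first-coordinate projection $\pi_1(W)$ lies in the $1$-dimensional $\Fq$-space $\Fq x_1$. On the other hand $\pi_1(U)=\spa_{\Fq}\{1,\ga\}$ is $2$-dimensional because $\ga\notin\Fq$, and $\pi_1(U)\subseteq\pi_1(W)$. This is a contradiction. So no three columns cover both $\bs_1$ and $\bs_2$, and $\rho_2(M(m,q))\ge4$.

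I expect no serious obstacle. The only delicate points are:
\begin{itemize}
\item checking that the $\FQ^2$ reformulation is legitimate, which uses the full rank from Remark \ref{rem-Melas}(1) and explains the exclusion of $(2,2)$;
\item handling degenerate configurations where $\dim W<3$ or the $x_i$ repeat, which the dimension count above already covers.
\end{itemize}
As a sanity check, for $q=2$ and $q=3$ the bound also follows from Theorem \ref{thm-GHWlb}: $d_2(\cC_1^{m,2})=5$ when $m\ge3$, and $2+\lc\log_3 7\rc=4$. For $q\ge4$, however, $d_2(\cC_1^{m,q})=3$, so the direct argument above is genuinely needed.
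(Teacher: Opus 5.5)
Your proof is correct, and it takes a genuinely different and shorter route than the paper's. The paper uses essentially the same pair of syndromes, up to the coordinate swap $x\mapsto x^{-1}$: it takes $(0,\be_1)^T,(0,\be_2)^T$ with $\be_2/\be_1\notin\Fq^*$. Its proof then works with codewords rather than with spans.
\begin{itemize}
\item The coefficient vectors $\bx_1,\bx_2$ lie in $\cC_1^{m,q}$.
\item The minimum distance of $\cC_1^{m,q}$ ($3$ for $q=2$, $2$ for $q>2$) forces $\wth(\bx_i)\in\{2,3\}$.
\item A case analysis on $|\supp(\bx_1)\cup\supp(\bx_2)|\in\{2,3\}$, with $q=2$ handled separately, shows that all columns involved are $\Fq^*$-multiples of a single $\al^{\ell}$.
\item Hence $\be_2/\be_1\in\Fq^*$, a contradiction.
\end{itemize}

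Your rank--nullity argument on $\pi_2|_W$ reaches the same key fact---all $x_i$ lie on one line $\Fq x_1$---in a single step. It is uniform in $q$, needs neither $d(\cC_1^{m,q})$ nor a case split, and absorbs degenerate configurations such as repeated or dependent columns automatically. You also make explicit why the $\FQ^2$ reformulation is legitimate: $\phi(H_{1,-1}^{m,q})$ has full rank $2m$ by Remark~\ref{rem-Melas}(1). This is exactly what fails for $(2,2)$, and the paper uses it only implicitly.

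The paper's support-based route fits the supercode viewpoint used elsewhere in the paper. For this particular statement, however, it gains nothing over your dimension count.
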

\begin{proof}
Let $\be_1$ and $\be_2$ be distinct elements of $\Fqm^*$ such that $\frac{\be_2}{\be_1} \notin \Fq^*$. Assume otherwise that $\rho_2(M(m,q)) \le 3$. Then there exist distinct $\bx_1, \bx_2 \in \Fq^{Q-1} \sm \{\bz\}$, regarded as column vectors, such that for $i \in \{1,2\}$,
\begin{equation}
\label{eqn-pc}
H_{1,-1}^{m,q} \bx_i=\begin{bmatrix}
                         0 \\
                         \be_i
                     \end{bmatrix}
\end{equation}
and $|\supp(\bx_1) \cup \supp(\bx_2)| \le 3$. In view of Equation \eqref{eqn-pc}, for $i \in \{1,2\}$, we can regard $\supp(\bx_i)$ as a subset of $\{1,\al,\al^2,\cdots,\al^{Q-2}\}$. Since $H_1^{m,q}\bx_i=0$ for $i \in \{1,2\}$, we have $\bx_1, \bx_2 \in \cC_1^{m,q}$. Note that 
$$
d(\cC_1^{m,q})=\begin{cases}
                 3 & \mbox{if $q=2$,} \\
                 2 & \mbox{if $q>2$.}
               \end{cases}
$$
Thus, for $i \in \{1,2\}$,
$$
\wth(\bx_i) \ge \begin{cases}
                 3 & \mbox{if $q=2$,} \\
                 2 & \mbox{if $q>2$.}
               \end{cases}
$$
Since $|\supp(\bx_1) \cup \supp(\bx_2)| \le 3$, then $\wth(\bx_i) \le 3$ for $i \in \{1,2\}$. 

If $q=2$, then $\wth(\bx_i)=3$ for $i \in \{1,2\}$ and $\supp(\bx_1)=\supp(\bx_2)$. Thus, $\bx_1=\bx_2$, which is impossible in view of Equation \eqref{eqn-pc} as $\be_1 \ne \be_2$. 

If $q>2$, then $2 \le \wth(\bx_i) \le 3$ for $i \in \{1,2\}$. Since $|\supp(\bx_1) \cup \supp(\bx_2)| \le 3$, then $|\supp(\bx_1) \cup \supp(\bx_2)| \in \{ 2,3 \}$. If $|\supp(\bx_1) \cup \supp(\bx_2)|=2$, then $\wth(\bx_1)=\wth(\bx_2)=2$ and $\supp(\bx_1)=\supp(\bx_2)$. Assume that $\supp(\bx_1)=\supp(\bx_2)=\{ \al^j, \al^\ell \}$. Then there exist $a_{1j}, a_{1\ell}, a_{2j}, a_{2\ell} \in \Fq^*$ such that for $i \in \{1,2\}$,
$$
\begin{cases}
a_{ij} \al^j+a_{i\ell}\al^{\ell}=0  \\
a_{ij} \al^{-j}+a_{i\ell}\al^{-\ell}=\be_i 
\end{cases}
$$
Consequently, $\al^{j-\ell}=-\frac{a_{i\ell}}{a_{ij}} \in \Fq^*$ and 
$$
\be_i=\al^{-j}\left(a_{ij}+a_{i\ell}\al^{j-\ell}\right)
     =\al^{-j}\left(a_{ij}-\frac{a_{i\ell}^2}{a_{ij}}\right).
$$
Note that $a_{ij}-\frac{a_{i\ell}^2}{a_{ij}}=\frac{(a_{ij}-a_{i\ell})(a_{ij}+a_{i\ell})}{a_{ij}} \in \Fq$. We claim that $a_{ij}-\frac{a_{i\ell}^2}{a_{ij}}$ is nonzero. If $a_{ij}=-a_{i\ell}$, then $\alpha^{j-\ell}=-a_{i\ell}/a_{ij}=1$, so $j=\ell$, contradiction. If $a_{ij}=a_{i\ell}$, then $\alpha^{j-\ell}=-1$ and $\beta_i = a_{ij}\alpha^{-j}+a_{i\ell}\alpha^{-\ell}= a_{ij}\alpha^{-j}\bigl(1+\alpha^{\,j-\ell}\bigr)=0$,
contradicting $\beta_i\ne0$. Hence, $\be_i \al^{-j}(a_{ij}-\frac{a_{i\ell}^2}{a_{ij}}) \in \al^{-j}\Fq^*$ for $i \in \{1,2\}$. This forces $\frac{\be_2}{\be_1} \in \Fq^*$, which is impossible. If $|\supp(\bx_1) \cup \supp(\bx_2)|=3$, then $2 \le \wth(\bx_1), \wth(\bx_2) \le 3$ and $\supp(\bx_1),\supp(\bx_2) \subseteq \{ \al^j, \al^{\ell}, \al^s \}$. Since $\supp(\bx_1) \cap \supp(\bx_2) \ne \es$, without loss of generality, we can assume that $\al^s \in \supp(\bx_1) \cap \supp(\bx_2)$, $\al^j \in \supp(\bx_1)$, and $\al^\ell \in \supp(\bx_2)$. Then there exist $a_{1j}, a_{1s}, a_{2\ell}, a_{2s} \in \Fq^*$ and $a_{1\ell}, a_{2j} \in \Fq$ such that for $i \in \{1,2\}$,
$$
\begin{cases}
a_{ij} \al^j+a_{i\ell}\al^{\ell}+a_{is}\al^s=0  \\
a_{ij} \al^{-j}+a_{i\ell}\al^{-\ell}+a_{is}\al^{-s}=\be_i 
\end{cases}
$$
Note that there exists $\la \in \Fq^*$ such that $a_{2s}=\la a_{1s}$ and
$$
(a_{2j}-\la a_{1j})\al^j+(a_{2\ell}-\la a_{1\ell})\al^\ell=0.
$$
If $a_{2\ell}=\la a_{1\ell}$, then $a_{2j}=\la a_{1j}$. This implies $\be_2=\la \be_1$, which is impossible. If $a_{2\ell} \ne \la a_{1\ell}$, then $a_{2j} \ne \la a_{1j}$. We have
\begin{align*}
\al^{j-\ell}&=-\frac{a_{2\ell}-\la a_{1\ell}}{a_{2j}-\la a_{1j}} \in \Fq^* \\
\al^{s-\ell}&=a_{1s}^{-1}(a_{1j}\frac{a_{2\ell}-\la a_{1\ell}}{a_{2j}-\la a_{1j}}-a_{1\ell}) \in \Fq^*
\end{align*}
Therefore,
$$
\be_i=a_{ij} \al^{-j}+a_{i\ell}\al^{-\ell}+a_{is}\al^{-s}=\al^{-\ell}(a_{ij}\al^{\ell-j}+a_{i\ell}+a_{is}\al^{\ell-s}) \in \al^{-\ell}\Fq^*.
$$
This forces $\frac{\be_2}{\be_1} \in \Fq^*$, which is impossible. In summary, the assumption that $\rho_2(M(m,q)) \le 3$ always leads to contradiction and we must have $\rho_2(M(m,q)) \ge 4$.
\end{proof}

\subsection{Upper bounds}

In this subsection, we will prove Theorem \ref{thm-asymub}. We first describe the following auxiliary lemma that narrows the range of vectors we need to consider to establish the upper bound $\rho_t(M(m,q)) \le 2t+1$.  

\begin{lemma}
\label{lem-normalization}
Let $q$ be a prime power, $1 \le t \le 2m$, and $Q=q^m$. Suppose that for every $\bs_1,\dots,\bs_{t}\in\FQ^{2}$ with $\bs_i=\qbinom{a_i}{b_i}$ satisfying
\begin{enumerate}[label=\rm(\arabic*)]
\item $\bs_1,\dots,\bs_{t}$ are linearly independent over $\Fq$,
\item $b_i\ne0$ for every $1\le i\le t$,
\end{enumerate}
there exists $X\subseteq\FQ^{*}$ with $|X|\le2t+1$ and $\{\bs_1,\dots,\bs_{t} \} \subseteq \spa_{\Fq}\{\bv(x) \mid x\in X\}$, where $\bv(x)=\qbinom{x}{x^{-1}}$. Then
$$
\rho_t\big(M(m,q)\big) \le 2t+1 .
$$
\end{lemma}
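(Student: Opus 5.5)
The plan is to reduce an arbitrary $t$-tuple of syndromes to the normalized situation described in conditions (1) and (2). First, as remarked at the end of the subsection on Melas codes, $\phi$ is an $\Fq$-linear isomorphism. So by Definition \ref{def-gcr1} (and Remark \ref{rem-gcr}(1)), $\rho_t(M(m,q))\le 2t+1$ is equivalent to the following statement in terms of the columns of $H_{1,-1}^{m,q}$, which are exactly the vectors $\bv(x)$ for $x$ ranging over $\FQ^*=\{1,\al,\dots,\al^{Q-2}\}$: for every $\bs_1,\dots,\bs_t\in\FQ^2$ there is $X\subseteq\FQ^*$ with $|X|\le 2t+1$ and $\{\bs_1,\dots,\bs_t\}\subseteq\spa_{\Fq}\{\bv(x)\mid x\in X\}$.

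Fix arbitrary $\bs_1,\dots,\bs_t$ and let $W=\spa_{\Fq}\{\bs_1,\dots,\bs_t\}$, of dimension $r\le t$. It suffices to cover a basis of $W$, or any set of vectors whose span contains $W$. If $W=\{\bz\}$, take $X=\es$. Otherwise consider the $\Fq$-linear maps $\pi_1,\pi_2: W\to\FQ$ given by $\pi_1(a,b)=a$ and $\pi_2(a,b)=b$; they cannot both vanish on $W$. If $\pi_2\equiv 0$ on $W$, I would use the coordinate swap $\sigma(a,b)=(b,a)$. This is an $\Fq$-linear automorphism of $\FQ^2$ with $\sigma(\bv(x))=\bv(x^{-1})$, so it permutes the set of columns. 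Hence covering $\sigma(W)$ by $\{\bv(x)\mid x\in X\}$ is equivalent to covering $W$ by $\{\bv(x)\mid x\in X^{-1}\}$, with $|X^{-1}|=|X|$. So we may assume $K=\ker(\pi_2|_W)$ is a proper subspace of $W$.

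Next I will choose a basis of $W$ lying entirely in $W\sm K$. This is possible because the complement of a proper subspace spans the whole space: given any $w_0\in W\sm K$ and any $k\in K$, we have $k=(k+w_0)-w_0$ with both $k+w_0$ and $w_0$ outside $K$. Selecting a maximal independent subset of $W\sm K$ then gives a basis $\bs'_1,\dots,\bs'_r$ with every second coordinate nonzero.

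If $r<t$, I extend this basis to $t$ vectors, preserving both independence and nonzero second coordinates. Inductively, having $\bs'_1,\dots,\bs'_j$ with $j<t\le 2m$, pick $\bs'_{j+1}\in\FQ^2$ outside both $\spa_{\Fq}\{\bs'_1,\dots,\bs'_j\}$ and $\{(a,0)\mid a\in\FQ\}$. Both are proper $\Fq$-subspaces of the $2m$-dimensional space $\FQ^2$, the first having dimension $j<2m$ and the second dimension $m<2m$. A vector space is never the union of two proper subspaces, so such a choice exists. The resulting $\bs'_1,\dots,\bs'_t$ satisfy (1) and (2), so the hypothesis yields $X$ with $|X|\le 2t+1$ whose span contains all $\bs'_i$, hence contains $W\ni\bs_1,\dots,\bs_t$.

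The only delicate points are the degenerate case $\pi_2\equiv0$, handled by the swap symmetry $x\mapsto x^{-1}$, and the extension step, where the bound $t\le 2m$ is needed. Everything else is routine linear algebra.
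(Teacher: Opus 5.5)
Your proof is correct and follows essentially the same route as the paper: pass to the span $W$, apply the coordinate swap (which sends $\bv(x)$ to $\bv(x^{-1})$) when $W\subseteq U=\{(a,0)\mid a\in\FQ\}$, choose a basis of $W$ avoiding $U$, and extend it to $t$ independent vectors with nonzero second coordinates before invoking the hypothesis. The differences are only in the linear-algebra details: the paper gets the basis and the extension by the explicit replacement $\bw_i\mapsto\bw_1+\bw_i$ for vectors lying in $U$, while you use that the complement of a proper subspace spans and that a space is never a union of two proper subspaces, which also makes explicit where $t\le 2m$ is used.
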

\begin{proof}
Write $U=\left\{ \qbinom{a}{0} \;\middle|\; a \in \FQ \right\}$, which is an $\Fq$-subspace of $\FQ^2$ with $\dim_{\Fq}U=m$. Let $\tau:\FQ^2 \to \FQ^2$ be the coordinate swap $\tau\big(\qbinom{a}{b}\big)=\qbinom{b}{a}$, an $\Fq$-linear involution.

Let $\bw_1, \bw_2, \ldots,\bw_t$ be arbitrary $t$ vectors in $\FQ^2$. Set $W=\spa_{\Fq}\{\bw_1, \bw_2, \ldots, \bw_t\}$ and $\ell=\dim_{\Fq}W \le t$. Since the span of a set of columns is an $\Fq$-subspace, it suffices to generate $X \subseteq \FQ^*$ satisfying $|X| \le 2t+1$ and $W \subseteq \spa_{\Fq}\{\bv(x):x \in X\}$. If $\ell=0$, then $W=\{ \bf{0} \}$ and the conclusion is trivially true. Thus, we can assume $\ell>0$ and $W \ne \{ \bf{0} \}$. 

We may assume $W \not\subseteq U$. Suppose $W \subseteq U$. Then $\tau(W) \subseteq \tau(U)=\left\{ \qbinom{0}{b} \mid b \in \FQ \right\}$. Thus $\tau(W) \cap U=\{\bz\}$ and in particular $\tau(W) \not\subseteq U$. Note that if $\tau(W) \subseteq \spa_{\Fq}\{\bv(x) : x \in X\}$, then $W \subseteq \spa_{\Fq}\{\bv(x) : x \in \{ y^{-1} \mid y \in X \} \}$. Therefore we can replace $W$ by $\tau(W)$ if needed and therefore may assume that $W \not\subseteq U$.

We claim that $W$ has an $\Fq$-basis $\{\bw_1,\dots,\bw_\ell\}$ with $\bw_k \notin U$ for every $1 \le k \le \ell$. As $W \not\subseteq U$, without loss of generality, we can assume $\bw_1 \notin U$. For $1 \le i \le \ell$, define $\bw_1'=\bw_1$ and
$$
\bw_i'=\begin{cases}
              \bw_1+\bw_i & \mbox{if $\bw_i \in U$} \\
              \bw_i             & \mbox{if $\bw_i \notin U$}
           \end{cases}
$$
Therefore, $\{ \bw_i' \mid 1 \le i \le \ell \}$ is an $\Fq$-basis of $W$ such that $\bw_i' \notin U$ for each $1 \le i \le t$. We can extend the basis $\{\bw_1',\dots,\bw_\ell'\}$ to a linearly independent set of size $t$ written as $\{\bs_1, \bs_2, \ldots,\bs_t\}$. Following the same idea, we can assume $\bs_i \notin U$ for every $1 \le i \le t$. Thus, by the condition of the lemma, there exists $X \subseteq \FQ^*$ with $|X| \le 2t+1$ such that  $\{\bs_1, \bs_2, \ldots, \bs_t\} \subseteq \spa_{\Fq}\{\bv(x):x \in X\}$. Consequently, $W \subseteq \spa_{\Fq}\{\bs_1, \bs_2, \ldots, \bs_t\} \subseteq \spa_{\Fq}\{\bv(x):x \in X\}$. Therefore, $\rho_t\big(M(m,q)\big) \le 2t+1$.
\end{proof}

In order to show $\rho_t\big(M(m,q)\big) \le 2t+1$, by Lemma \ref{lem-normalization}, it suffices to show that for an arbitrary independent set of vectors $\{ \bs_1, \bs_2, \ldots, \bs_t \}$ with $\bs_i=\qbinom{a_i}{b_i} \in \FQ^2$ and $b_i \ne 0$, $1\le i\le t$, there exists a set $X \subset \FQ^*$, such that $|X| \le 2t+1$ and $\{ \bs_1, \bs_2, \ldots, \bs_t \} \subseteq \spa_{\Fq}\{ \bv(u) \mid u \in X\}$. Moreover, considering the subset $X=\{x\} \cup \{ y_i \mid 1 \le i \le t\} \cup \{ z_i \mid 1 \le i \le t\} \subset \FQ^*$, it suffices to show that for each $1 \le i \le t$, the following system holds
\begin{equation}
\label{eqn-sys}
\begin{cases} 
a_i=x+y_i+z_i \\
b_i=x^{-1}+y_i^{-1}+z_i^{-1}
\end{cases}
\end{equation}
We choose $x \in \FQ^*$ such that $x \ne a_i$ and $x \ne b_i^{-1}$ for each $1 \le i \le t$. Set
\begin{equation}
\label{eqn-ABE}
A=\{a_i \mid 1\le i\le t\},\qquad B=\{b_i^{-1} \mid 1\le i\le t\},\qquad E=\{0\}\cup A\cup B
\end{equation}
Then $x \notin E$ and $|E| \le 2t+1$. Depending on the parity of $q$, we handle the $q$ even and $q$ odd cases separately below. The following proposition establishes Theorem \ref{thm-asymub} for $q$ even.

\begin{proposition}
\label{prop-even}
Let $q$ be an even prime power, $1 \le t \le m$, and $Q=q^m$. Suppose $\sqrt Q \ge t2^{t+1}$. Then $\rho_t(M(m,q)) \le 2t+1$. 
\end {proposition}
\begin{proof}
We use the notation in the paragraph containing System \eqref{eqn-sys} and Equation \eqref{eqn-ABE}. Set $\Tr=\TrQt$. For $x\in\FQ\sm E$, System \eqref{eqn-sys} is solvable with $y_i,z_i\in\FQ^{*}$ if and only if
\begin{equation}
\label{eqn-quad-even}
y_i^{2}+(x+a_i)\,y_i+\frac{x(x+a_i)}{b_ix+1}=0
\end{equation}
has an $\FQ$-solution $y_i$. Indeed, as $x \notin \{0,a_i,b_i^{-1}\}$, the term $\frac{x(x+a_i)}{b_ix+1}$ is nonzero. Thus, $y_i \notin \{ 0,x+a_i\}$ and $z_i=x+y_i+a_i \ne 0$.  

By Lemma \ref{lem-quadeqn}(1), Equation \eqref{eqn-quad-even} has an $\FQ$-solution if and only if
$$
\Tr\Big(\frac{x}{(b_ix+1)(x+a_i)}\Big)=0.
$$
Define 
$$
N=|\{x \in\FQ\sm E \mid \Tr\Big(\frac{x}{(b_ix+1)(x+a_i)}\Big)=0 \; \mbox{for each $1\le i \le t$} \}|.
$$
By Proposition~\ref{prop-evenproof} in Appendix A, we have $N>0$ and therefore, the proof is complete.
\end{proof}

The following proposition establishes Theorem \ref{thm-asymub} for $q$ odd and $t \ge 2$. 

\begin{proposition}
\label{prop-odd}
Let $q$ be an odd prime power, $2 \le t \le m$, and $Q=q^m$. Suppose $\sqrt Q \ge t2^{t+1}$. Then $\rho_t(M(m,q)) \le 2t+1$. 
\end {proposition}
\begin{proof}
We use the notation in the paragraph containing System \eqref{eqn-sys} and Equation \eqref{eqn-ABE}. Let $\eta$ denote the quadratic character of $\FQ$, extended to all of $\FQ$ by $\eta(0)=0$. For $x\in\FQ\sm E$, System \eqref{eqn-sys} is solvable with $y_i,z_i\in\FQ^{*}$ if and only if 
\begin{equation}
\label{eqn-quad-odd}
y_i^{2}+(x-a_i)\,y_i+\frac{x(a_i-x)}{b_ix-1}=0
\end{equation}
has an $\FQ$-solution $y_i$. Indeed, as $x \notin \{0,a_i,b_i^{-1}\}$, the term $\frac{x(a_i-x)}{b_ix-1}$ is nonzero. Thus, $y_i \notin \{ 0,a_i-x\}$ and $z_i=a_i-x-y_i \ne 0$.  

By Lemma \ref{lem-quadeqn}(2), Equation \eqref{eqn-quad-odd} has $1+\eta(\Delta_i)$ $\FQ$-solutions, where 
$$
\Delta_i=(x-a_i)^{2}-4\frac{x(a_i-x)}{(b_ix-1)}=\frac{(x-a_i)(b_ix-1)}{(b_ix-1)^2}((x-a_i)(b_ix-1)+4x) 
$$
Thus, Equation~\eqref{eqn-quad-odd} has an $\FQ$-solution if and only if 
$$
\eta((x-a_i)(b_ix-1)((x-a_i)(b_ix-1)+4x)) \ne -1
$$
Set
$$
N=|\big\{x\in\FQ \sm E \mid \eta\big((x-a_i)(b_ix-1)((x-a_i)(b_ix-1)+4x)\big) \ne -1 \text{ for each }1\le i\le t\big\}|.
$$
By Proposition~\ref{prop-oddproof} in Appendix A, we have $N>0$ and therefore, the proof is complete.
\end{proof}

Now we are ready to prove Theorem \ref{thm-asymub}.

\begin{proof}[Proof of Theorem \ref{thm-asymub}]
Note that $m \ge 2(t+1) \log_{q}2+2\log_qt$ is equivalent to $\sqrt{Q} \ge t2^{t+1}$. If $q$ is even, Theorem \ref{thm-asymub} follows from Propositions \ref{prop-even}. If $q$ is odd and $t \ge 2$, Theorem \ref{thm-asymub} follows from Propositions \ref{prop-odd}. If $q$ is odd and $t=1$, then $\sqrt{Q} \ge t2^{t+1}=4$ forces $Q \ge 16$, which excludes the pair $(m,q)=(2,3)$ with $\rho_1(M(2,3))=4$. By Result \ref{res-cr}, $\rho_1(M(m,q)) \le 3$ for each $(m,q) \ne (2,3)$.  
\end{proof}

\section{Generalized covering radius of $M(m,q)$ with $q \ge 4$} \label{sec-qge4}
When $q \ge 4$, the generalized covering radius of $\rho_t(M(m,q))$ can be determined when $m$ is sufficiently large compared with $t$. 

\begin{theorem}
\label{thm-Melasqge4}
Let $q \ge 4$ be a prime power. Let $m \ge 3$ and $3 \le t \le m$. Then 
$$
\rho_t(M(m,q))
\begin{cases}
\in [t+\lc \log_q (t+1) \rc,2t] & \mbox{if $3 \le m < t(2t-1)$ } \\
= 2t & \mbox{if $m \ge t(2t-1)$}
\end{cases}
$$
Moreover, 
$$
\rho_2(M(m,q))=\begin{cases}
                                2 & \mbox{if $m=1$,} \\
                                4 & \mbox{if $m\ge 2$.}
                          \end{cases}
$$ 
\end{theorem}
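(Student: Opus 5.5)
Lower bounds come from earlier results. For $m\ge 3$ and $3\le t\le m$ we have $t\le\min\{m,q^m-1-m\}$, so Theorem~\ref{thm-GHWlb}(1) gives $\rho_t(M(m,q))\ge t+\lc\log_q(t+1)\rc$. If moreover $m\ge t(2t-1)$, Theorem~\ref{thm-asymlb} gives $\rho_t(M(m,q))\ge 2t$. For $t=2$ and $m\ge2$, Theorem~\ref{thm-gcv2lb} gives $\rho_2(M(m,q))\ge 4$; the pair $(2,2)$ is excluded since $q\ge4$. For $m=1$ and $q\ge4$, the code has redundancy $2m=2$, so $\rho_2=n-k=2$ by Remark~\ref{rem-gcr}(2).

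All the upper bounds will come from the subadditivity~\eqref{eqn-additive} together with Result~\ref{res-cr}, which says $\rho_1(M(m,q))=2$ for every $q\ge4$ and $m\ge1$. Applying~\eqref{eqn-additive} repeatedly (valid while $t\le 2m=n-k$) yields
$$\rho_t(M(m,q))\le t\,\rho_1(M(m,q))=2t.$$
In particular $\rho_2(M(m,q))\le 4$, which together with the lower bound gives $\rho_2=4$ for $m\ge2$. Likewise, for $m\ge t(2t-1)$ we get $\rho_t=2t$, and for $3\le m<t(2t-1)$ we get the interval $[t+\lc\log_q(t+1)\rc,2t]$.

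There is no genuine obstacle. The only points to check are bookkeeping conditions:
\begin{itemize}
\item the dimension condition $t\le\min\{m,q^m-1-m\}$ needed for Theorem~\ref{thm-GHWlb}(1), which holds since $q^m-1-m\ge m$ when $q\ge4$;
\item the range $t_1+t_2\le n-k=2m$ required in~\eqref{eqn-additive}, which holds since $t\le m$;
\item the exceptional pairs in Theorem~\ref{thm-asymlb}, none of which has $q\ge4$.
\end{itemize}
Notably, neither the upper bound of Theorem~\ref{thm-asymub} nor any character-sum estimate is needed here: because $\rho_1=2$ is already small, subadditivity alone gives the matching upper bound.
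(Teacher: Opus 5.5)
Your proposal is correct and follows essentially the same route as the paper. The lower bounds come from Theorems~\ref{thm-GHWlb}(1), \ref{thm-asymlb} and \ref{thm-gcv2lb}, with $m=1$ handled via $\rho_2=n-k=2$; every upper bound comes from subadditivity~\eqref{eqn-additive} together with $\rho_1(M(m,q))=2$ from Result~\ref{res-cr}, and you additionally spell out the side conditions ($t\le q^m-1-m$, $t\le 2m$, excluded pairs) that the paper leaves implicit.
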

\begin{proof}
By Result \ref{res-cr}, $\rho_1(M(m,q))=2$ for each $q \ge 4$ and $m \ge 1$. Combining Remark \ref{rem-gcr}(3) and Theorems \ref{thm-GHWlb}(1), \ref{thm-asymlb}, we derive the range of $\rho_t(M(m,q))$ for $t \ge 3$.

For $q \ge 4$ and $m=1$, in view of Remarks \ref{rem-gcr}(2) and \ref{rem-Melas}(1), we have $\rho_2(M(1,q))=2$. Moreover, for $q \ge 4$ and $m \ge 2$, $\rho_2(M(m,q)) \ge 4$ follows from Theorem \ref{thm-gcv2lb}. By Remark \ref{rem-gcr}(3), $\rho_2(M(m,q)) \le 2\rho_1(M(m,q))=4$ Thus, $\rho_2(M(m,q))=4$ for $q \ge 4$ and $m \ge 2$.
\end{proof}

\section{Generalized covering radius of $M(m,3)$} \label{sec-qeq3}
In this section, we consider generalized covering radius of $M(m,3)$. For $q$ being an odd prime power, we use $\square$ (resp. $\boxslash$) to denote the set of all nonzero squares (resp. all nonsquares) in $\Fq$. We use $\eta$ to denote the multiplicative quadratic character over $\Fq$. As a preparation, we have the following lemma concerning character sums of quadratic character.

\begin{lemma}
\label{lem-charsum}
Let $q$ be an odd prime power. Then the following hold true.
\begin{enumerate}[label=(\arabic*)]
\item For each $a \in \Fq^*$, $\sum_{x\in\Fq}\eta(x^2+a)=-1$ and $\sum_{x\in\Fq}\eta(x^2+ax)=-1$.\\
\item $$
\sum_{x\in\Fq}\eta(x^2(x^2-1))=
\begin{cases}
-2 & \mbox{if $q\equiv 1\pmod 4$,} \\
0 & \mbox{if $q\equiv 3\pmod 4$.}
\end{cases}
$$
\item $\sum_{x\in\F_{3^m}}\eta\bigl((x^2+x)(x^2+1)\bigr)=-1-(-1+\sqrt{-2})^m-(-1-\sqrt{-2})^m$.
\item $\sum_{x\in\F_{3^m}}\eta\bigl((x^2-x)(x^2+1)\bigr)=-1-(-1+\sqrt{-2})^m-(-1-\sqrt{-2})^m$.
\item $$
\sum_{x\in\F_{3^m}}\eta(x^2(x^4-1))=
\begin{cases}
0  & \mbox{if $m$ odd,}\\
-2-2(-3)^{\frac{m}{2}} & \mbox{if $m$ even}.
\end{cases}
$$
\end{enumerate}
\end{lemma}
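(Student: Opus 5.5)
Parts (1) and (2) are classical and follow by direct counting. For $\sum_{x}\eta(x^2+a)$ with $a\ne0$, I will count solutions of $y^2=x^2+a$. The number of pairs $(x,y)\in\Fq^2$ is $\sum_x(1+\eta(x^2+a))=q+\sum_x\eta(x^2+a)$. The substitution $(y-x)(y+x)=a$ is a bijection onto pairs $(u,v)$ with $uv=a$, so there are $q-1$ solutions, giving the sum $-1$. For $x^2+ax$, complete the square: $x^2+ax=(x+a/2)^2-a^2/4$, which reduces to the first case.

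For (2), note $\eta(x^2(x^2-1))=\eta(x^2-1)$ for $x\ne0$ and vanishes at $x=0$. Hence the sum is $\sum_{x}\eta(x^2-1)-\eta(-1)=-1-\eta(-1)$. Since $\eta(-1)=1$ exactly when $q\equiv1\pmod 4$, this gives $-2$ or $0$.

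For (3)--(5), over $\F_{3^m}$ I will relate each sum to point counts on a genus-one curve and use Lemma \ref{lem-elliptic}. For (3), the curve $y^2=(x^2+x)(x^2+1)$ is a quartic with four distinct roots $0,-1,\pm i$ over the algebraic closure. Sending the root $x=0$ to infinity via $x\mapsto 1/x$ and rescaling $y$ gives $y^2=(1+x)(1+x^2)$, up to the factor $x^4$, which is a square. Concretely, for $x\ne0$,
$$\eta\bigl((x^2+x)(x^2+1)\bigr)=\eta\bigl((1+u)(1+u^2)\bigr)\quad\text{with } u=1/x.$$
Thus $S_m:=\sum_x\eta((x^2+x)(x^2+1))=\sum_{u\ne0}\eta((1+u)(1+u^2))$, where the $x=0$ term is $0$. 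This equals $\sum_{u}\eta(f(u))-\eta(f(0))=\sum_u\eta(f(u))-1$ with $f(u)=(u+1)(u^2+1)=u^3+u^2+u+1$. The cubic $f$ has distinct roots $-1,\pm i$ and is nonsingular in characteristic $3$, so $E:y^2=f(u)$ is an elliptic curve with $\#E(\F_{3^m})=3^m+1+\sum_u\eta(f(u))$. Over $\F_3$, I compute $f(0)=1$, $f(1)=4=1$, $f(2)=15=0$, so $\sum_u\eta(f(u))=2$ and $\#E(\F_3)=6=3+1-a$, giving $a=-2$. The roots of $X^2+2X+3$ are $-1\pm\sqrt{-2}$. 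Lemma \ref{lem-elliptic} then gives $\sum_{u\in\F_{3^m}}\eta(f(u))=-(\al^m+\be^m)$, and therefore $S_m=-1-(-1+\sqrt{-2})^m-(-1-\sqrt{-2})^m$. Part (4) follows from (3) by the substitution $x\mapsto -x$, since $(x^2-x)(x^2+1)$ becomes $(x^2+x)(x^2+1)$.

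For (5), for $x\ne0$ we have $\eta(x^2(x^4-1))=\eta(x^4-1)$, so the sum is $\sum_x\eta(x^4-1)-\eta(-1)$. If $m$ is odd, then $-1$ is a nonsquare. The map $x\mapsto x^2$ is then two-to-one from $\F^*$ onto the squares, and $x^4-1=(x^2)^2-1$. I will handle this with the pairing $x\mapsto$ a suitable twist. An easier route is to write $\sum_x\eta(x^4-1)=\sum_{w}(1+\eta(w))\eta(w^2-1)$, summing over all $w$, which splits as $\sum_w\eta(w^2-1)+\sum_w\eta(w^3-w)$. The first term is $-1$ by (1). The second is an elliptic-curve sum for $E':y^2=w^3-w$. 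For $m$ odd it vanishes because $w\mapsto -w$ negates the value while $\eta(-1)=-1$; this gives total $-1-(-1)=0$. For $m$ even, I count over $\F_3$: $w^3-w=0$ for every $w$, so $\#E'(\F_3)=4$, $a=0$, and the Frobenius roots are $\pm\sqrt{-3}$. Hence $\sum_w\eta(w^3-w)=-2(-3)^{m/2}$, and the total is $-1-2(-3)^{m/2}-1$, matching $-2-2(-3)^{m/2}$.

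The main bookkeeping risk is the $\F_3$ base counts and the sign conventions in Lemma \ref{lem-elliptic}; these I will double-check directly for $m=1$ and $m=2$.
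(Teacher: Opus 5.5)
Your proof is correct. For Parts (2) and (3) it is essentially the paper's argument: the inversion $u=1/x$, the curve $y^2=u^3+u^2+u+1$ with $\#E(\F_3)=6$, hence $a=-2$ and Frobenius roots $-1\pm\sqrt{-2}$.

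Elsewhere you take different routes.
\begin{itemize}
\item In Part (1), the paper substitutes $y=x^2$ with weight $1+\eta(y)$ and evaluates $\sum_{y\ne0}\eta(1+a/y)=-1$. You instead count points on the conic $y^2=x^2+a$ via $(y-x)(y+x)=a$, and reduce the second identity to the first by completing the square. Both are equally elementary.
\item In Part (4), the paper only says ``analogous.'' Your substitution $x\mapsto -x$ shows the two sums are literally equal, which is cleaner.
\item Part (5) is the main difference. The paper uses the characteristic-$3$ identity $(1+z)^4-1=z+z^3+z^4$ with $x=1+1/y$ to send a root of $x^4-1$ to infinity, and lands on $y^2=y^3+y+1$. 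You substitute $w=x^2$ to get $\sum_x\eta(x^4-1)=\sum_w\eta(w^2-1)+\sum_w\eta(w^3-w)=-1+\sum_w\eta(w^3-w)$.
\end{itemize}

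Your reduction in Part (5) has three advantages. It reuses Part (1) and holds over any odd $q$, not just in characteristic $3$. It lands on $y^2=w^3-w$, whose $\F_3$-count is immediate because $w^3-w$ vanishes identically on $\F_3$. And for odd $m$ you get the vanishing from the symmetry $w\mapsto -w$ with $\eta(-1)=-1$, with no Frobenius-root computation.

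Both curves have trace $a=0$ over $\F_3$, so the final formulas agree. Your bookkeeping with $\eta(-1)=\pm1$ according to the parity of $m$ checks out. You can drop the unfinished ``suitable twist'' sentence, since the $w=x^2$ route makes it unnecessary.
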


\begin{proof}
\noindent (1) For $a\in\Fq^*$,
\begin{align*}
\sum_{x\in\Fq}\eta(x^2+a)=&\sum_{y\in\Fq}\bigl(1+\eta(y)\bigr)\eta(y+a)=\sum_{y\in\Fq}\eta(y+a)+\sum_{y\in\Fq}\eta(y(y+a))\\
=&\sum_{y\in\Fq^*}\eta(1+\frac{a}{y})=-\eta(1)=-1.
\end{align*}
For $a\in\Fq^*$,
$$
\sum_{x\in\Fq}\eta(x^2+ax)=\sum_{x\in\Fq^*}\eta(x(x+a))=\sum_{x\in\Fq^*}\eta(1+\frac{a}{x})=-\eta(1)=-1.
$$

\noindent (2) By Part (1), we know that $\sum_{x\in\Fq}\eta((x^2-1))=-1$. Moreover,
\begin{align*}
\sum_{x\in\Fq}\eta(x^2(x^2-1))=&\sum_{x\in\Fq^*}\eta(x^2-1)=\sum_{x\in\Fq}\eta(x^2-1)-\eta(-1)=-1-\eta(-1) \\
=&\begin{cases}
-2 & \mbox{if $q\equiv1\pmod 4$,}\\
0 & \mbox{if $q\equiv3\pmod 4$.}
\end{cases}
\end{align*}

\noindent (3) For $x \in \F_{3^m}^*$, substituting $x=\frac{1}{y}$, we have
\begin{align*}
\sum_{x\in\F_{3^m}}\eta\bigl((x^2+x)(x^2+1)\bigr)=&\sum_{x\in\F_{3^m}}\eta(x^4+x^3+x^2+x)=\sum_{y\in\F_{3^m}^*}\eta\left(\frac{1}{y^4}+\frac{1}{y^3}+\frac{1}{y^2}+\frac{1}{y}\right)\\
=&\sum_{y\in\F_{3^m}^*}\eta\left(\frac{y^3+y^2+y+1}{y^4}\right)=\sum_{y\in\F_{3^m}^*}\eta(y^3+y^2+y+1)\\
=&\sum_{y\in\F_{3^m}}\eta(y^3+y^2+y+1)-1.
\end{align*}
Let $N_{3^m}$ be the number of pairs $(x,y) \in \F_{3^m}^2$ satisfying $x^2=y^3+y^2+y+1$. Then
$$
N_{3^m}=\sum_{y\in\F_{3^m}}\bigl(1+\eta(y^3+y^2+y+1)\bigr)
=3^m+\sum_{y\in\F_{3^m}}\eta(y^3+y^2+y+1).
$$
For the elliptic curve $E_1: x^2=y^3+y^2+y+1$, which is nonsingular over $\F_3$, we use $\#E_1(\F_{3^m})$ to denote the number of $\F_{3^m}$-rational points on $E_1$. Therefore, $N_{3^m}=\#E_1(\F_{3^m})-1$ and $\sum_{x\in\F_{3^m}}\eta(x^4+x^3+x^2+x)=\#E_1(\F_{3^m})-3^m-2$. A direct computation shows that $\#E_1(\F_3)=6$. Thus,
$$
a=3+1-\#E_1(\F_3)=-2,
$$
and $\alpha=-1+\sqrt{-2}$, $\beta=-1-\sqrt{-2}$ are the two roots of the polynomial
$$
X^2-aX+3=X^2+2X+3.
$$
By Lemma \ref{lem-elliptic}, we have
$$
\#E_1(\F_{3^m})=3^m+1-(\alpha^m+\beta^m),
$$
which implies
$$
\sum_{x\in\F_{3^m}}\eta(x^4+x^3+x^2+x)=-1-(\alpha^m+\beta^m)=-1-(-1+\sqrt{-2})^m-(-1-\sqrt{-2})^m.
$$

\noindent (4) The proof of Part (4) is analogous to that of Part (3).

\noindent (5) Note that
\begin{align*}
\sum_{x\in\F_{3^m}}\eta(x^2(x^4-1))=&\sum_{x\in\F_{3^m}^*}\eta(x^4-1)=\sum_{x\in\F_{3^m}}\eta(x^4-1)-\eta(-1)\\
=&\sum_{x\in\F_{3^m} \sm \{1\}}\eta(x^4-1)-\eta(-1).
\end{align*}
Given that
\begin{align*}
\sum_{x\in\F_{3^m}\sm \{1\}}\eta(x^4-1)=&\sum_{y\in\F_{3^m}^*}\eta\left(\big(1+\frac{1}{y}\big)^4-1\right)=\sum_{y\in\F_{3^m}^*}\eta\left(\frac{1}{y}+\frac{1}{y^3}+\frac{1}{y^4}\right)\\
=&\sum_{y\in\F_{3^m}^*}\eta\left(\frac{y^3+y+1}{y^4}\right)=\sum_{y\in\F_{3^m}}\eta(y^3+y+1)-1.
\end{align*}
Hence, we have
$$
\sum_{x\in\F_{3^m}}\eta(x^2(x^4-1))=\sum_{y\in\F_{3^m}}\eta(y^3+y+1)-1-\eta(-1)
$$
Let $N_{3^m}$ be the number of pairs $(x,y) \in \F_{3^m}^2$ satisfying $x^2=y^3+y+1$. Then
$$
N_{3^m}=\sum_{y\in\F_{3^m}}\bigl(1+\eta(y^3+y+1)\bigr)=3^m+\sum_{y\in\F_{3^m}}\eta(y^3+y+1).
$$
For the elliptic curve $E_2: x^2=y^3+y+1$, which is nonsingular over $\F_3$, we use $\#E_2(\F_{3^m})$ to denote the number of $\F_{3^m}$-rational points on $E_2$. Therefore, $N_{3^m}=\#E_2(\F_{3^m})-1$ and $\sum_{x\in\F_{3^m}}\eta(x^2(x^4-1))=\#E_2(\F_{3^m})-3^m-2-\eta(-1)$. A direct check shows that $\#E_2(\F_3)=4$. Thus,
$$
a=3+1-\#E_2(\F_3)=0.
$$
The polynomial $T^2-aT+3=T^2+3$ has two roots $\sqrt{-3}$ and $-\sqrt{-3}$. By Lemma \ref{lem-elliptic}, we have
$$
\#E_2(\F_{3^m})=3^m+1-(\sqrt{-3})^m-(-\sqrt{-3})^m
=\begin{cases}
3^m+1, & \mbox{if $m$ odd,}\\
3^m+1-2(-3)^{m/2}, & \mbox{if $m$ even.}
\end{cases}
$$
Consequently,
$$
\sum_{x\in\F_{3^m}}\eta(x^2(x^4-1))
=\begin{cases}
0 & \mbox{if $m$ odd,}\\
-2-2(-3)^{\frac{m}{2}} & \mbox{if $m$ even.}
\end{cases}
$$
\end{proof}

Now we are ready to prove the following auxiliary proposition. 

\begin{proposition}
\label{prop-gamma}
Let $\square$ be the set of all nonzero squares in $\F_{3^m}$. Let $\boxslash$ be the set of all nonsquares in $\F_{3^m}$. Set 
$$
\Gamma_m=\left\{x\in\F_{3^m}\setminus\F_3 \mid x(x+1)\in\boxslash,\ x(x-1)\in\boxslash,\ x^2+1\in\boxslash\right\}.
$$
For each $m\geq 4$, we have $|\Gamma_m| \ge 1$.
\end{proposition}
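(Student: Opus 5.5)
We count $\Gamma_m$ with the standard indicator sum. For $x\in\F_{3^m}\sm\F_3$, each of $x(x+1)$, $x(x-1)$ and $x^2+1$ is nonzero. For $x^2+1$ this holds when $\eta(-1)=-1$, i.e.\ $m$ odd; for $m$ even, $-1$ is a square in $\F_{3^m}$, and the two roots $\pm\sqrt{-1}\notin\F_3$ must be excluded separately. The indicator that a nonzero $u$ is a nonsquare is $\frac{1-\eta(u)}{2}$, so
$$
8|\Gamma_m|=\sum_{x\in\F_{3^m}\sm(\F_3\cup Z)}\bigl(1-\eta(x^2+x)\bigr)\bigl(1-\eta(x^2-x)\bigr)\bigl(1-\eta(x^2+1)\bigr),
$$
where $Z$ is the set of roots of $x^2+1$ in $\F_{3^m}$. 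Expanding the product gives eight terms:
\begin{itemize}
\item the constant term $3^m-3-|Z|$;
\item three single sums $\eta(x^2+x)$, $\eta(x^2-x)$, $\eta(x^2+1)$;
\item three pairwise products $\eta((x^2+x)(x^2-x))=\eta(x^2(x^2-1))$, $\eta((x^2+x)(x^2+1))$ and $\eta((x^2-x)(x^2+1))$;
\item the triple product $\eta(x^2(x^2-1)(x^2+1))=\eta(x^2(x^4-1))$.
\end{itemize}

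All seven nonconstant sums over the full field are given exactly by Lemma~\ref{lem-charsum}. The single sums each equal $-1$ by Part~(1). The product $\eta(x^2(x^2-1))$ is handled by Part~(2), with $3^m\equiv 1$ or $3\pmod 4$ according as $m$ is even or odd. The two mixed products are Parts~(3) and~(4). The triple product is Part~(5). We then subtract the contributions of the finitely many excluded points $x\in\F_3\cup Z$; each term is bounded by $1$ in absolute value, so this costs at most $8\cdot 5$ in total, and it can in fact be computed exactly. Writing $\alpha=-1+\sqrt{-2}$ and $\beta=-1-\sqrt{-2}$, we have $|\alpha|=|\beta|=\sqrt3$, which gives
$$
8|\Gamma_m|\ \ge\ 3^m - 4\cdot 3^{m/2} - 2\cdot 3^{m/2} - C,
$$
with $C$ an explicit small constant. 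More precisely, the two elliptic-curve sums contribute $+2(\alpha^m+\beta^m)$ in total, bounded by $4\cdot 3^{m/2}$. The term $-\sum\eta(x^2(x^4-1))$ contributes at most $2+2\cdot 3^{m/2}$, and this only for $m$ even.

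The main work is keeping the signs and the excluded-point corrections straight. After that, the result follows by showing $3^m>6\cdot3^{m/2}+C$, i.e.\ $3^{m/2}>6+o(1)$. This holds once $m\ge 4$, because $3^{2}=9$ leaves room for $C$ provided $C$ is at most about $27$. If the crude bound fails at $m=4$ or $m=5$, we evaluate exactly instead. At $m=4$ we have $\alpha^4+\beta^4=14$ (from $\alpha^2+\beta^2=-2$ and $\alpha\beta=3$), and $(-3)^{2}=9$, so the exact count is explicit and we only need to check that it is positive. At $m=5$, $\alpha^5+\beta^5$ can likewise be computed from the recurrence $s_k=-2s_{k-1}-3s_{k-2}$. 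So the main obstacle is only the small cases $m=4,5$, which we settle by these explicit evaluations; for $m\ge6$ the Weil-type estimate is comfortably positive.
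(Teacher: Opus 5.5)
Your proposal is correct and is essentially the paper's proof. It uses the same eight-term expansion of the nonsquare indicator, the same full-field evaluations from Lemma~\ref{lem-charsum}, a correction for the points of $\F_3\cup Z$, and the bound $|\alpha^m+\beta^m|\le 2\cdot 3^{m/2}$. The only differences are minor. The paper shows the excluded points contribute at most $1$ to the count, rather than bounding termwise. Your $-2\cdot 3^{m/2}$ allowance for the term $2(-3)^{m/2}$ when $m\equiv 2\pmod 4$ is more careful than the paper's final displayed inequality, which omits it.

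There are two harmless sign slips to fix. The two mixed sums together contribute $-2-2(\alpha^m+\beta^m)$, not $+2(\alpha^m+\beta^m)$. Also $\alpha^4+\beta^4=-14$, not $14$. With the correct values the full-field expansion at $m=4$ equals $81+1+18+28=128$, so the count is comfortably positive there.
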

\begin{proof}
Suppose an element $x \in\F_{3^m}$ satisfies $x(x+1)\in\boxslash$, $x(x-1)\in\boxslash$, and $x^2+1\in\boxslash$. Clearly, we have $x\in\F_{3^m} \sm \F_3$. Define $f: \F_{3^m} \rightarrow \Q$ by
\begin{align*}
f(x)=&\frac{1-\eta(x(x+1))}{2}\cdot\frac{1-\eta(x(x-1))}{2}\cdot\frac{1-\eta(x^2+1)}{2} \\
     =&\frac{1}{8} \big(1-\eta(x(x+1))-\eta(x(x-1))-\eta(x^2+1)+\eta(x^2(x^2-1))+\eta((x^2+x)(x^2+1))\\
       &+\eta((x^2-x)(x^2+1))-\eta(x^2(x^4-1)) \big)
\end{align*}
When none of $x(x+1)$, $x(x-1)$, $x^2+1$ is zero, $f(x) \in \{0,1\}$. Specifically,
$$
f(x)=\begin{cases}
           0 & \mbox{at least one of $x(x+1)$, $x(x-1)$, $x^2+1$ is a nonzero square} \\
           1 & \mbox{all of $x(x+1)$, $x(x-1)$, $x^2+1$ are nonsquares}
       \end{cases}
$$
If at least one of $x(x+1)$, $x(x-1)$, $x^2+1$ is zero, define $\Theta_m=\{ x \in \F_{3^m} \mid x^2+1=0 \}$. Then
$$
\Theta_m=\begin{cases}
                  \es & \mbox{if $m$ odd,} \\
                  \{\theta,-\theta\}, & \mbox{if $m$ even.}
                  \end{cases} 
$$
Note that $x(x+1) \cdot x(x-1) \cdot (x^2+1)=0$ if and only if $x \in \F_3 \cup \Theta_m$. A routine calculation shows
$$
f(x)=\begin{cases}
       0 & \mbox{if $x=0$, $m$ odd; or $x \in \F_3$, $m$ even;} \\
          & \mbox{or one of $\theta+1$ and $\theta-1$ is a nonzero square, $m$ even} \\
       \frac{1}{2} & \mbox{if $x=1,2$, $m$ odd;} \\
                        & \mbox{or $\theta+1$ and $\theta-1$ are nonsquares, $m$ even}
       \end{cases}
$$
Therefore, $\sum_{x \in \F_3 \cup \Theta_m} f(x) \le 1$. Consequently,
\begin{align*}
|\Gamma_m|=&\left|\left\{x\in\F_{3^m}\setminus\F_3 \mid x(x+1)\in\boxslash, x(x-1)\in\boxslash,x^2+1\in\boxslash\right\}\right|\\
                     \ge&\sum_{x\in\F_{3^m}} f(x)- \sum_{x \in \F_3} f(x)- \sum_{x \in \Theta_m} f(x) \\
                     =&\sum_{x\in\F_{3^m}} f(x)- \sum_{x \in \F_3 \cup \Theta_m} f(x) \\
                     \ge& \sum_{x\in\F_{3^m}} f(x)-1
\end{align*}
Using Lemma \ref{lem-charsum}, we have
\begin{align*}
&\sum_{x\in\F_{3^m}}f(x) \\
=&\frac{1}{8}\sum_{x\in\F_{3^m}}\big(1-\eta(x(x+1))-\eta(x(x-1))-\eta(x^2+1)+\eta(x^2(x^2-1))\\
&\qquad\qquad+\eta((x^2+x)(x^2+1))+\eta((x^2-x)(x^2+1))-\eta(x^2(x^4-1))\big)\\
=&\frac{1}{8}\big(3^m-\sum_{x\in\F_{3^m}}\eta(x(x+1))-\sum_{x\in\F_{3^m}}\eta(x(x-1))-\sum_{x\in\F_{3^m}}\eta(x^2+1)+\sum_{x\in\F_{3^m}}\eta(x^2(x^2-1)) \\
&\qquad+\sum_{x\in\F_{3^m}}\eta((x^2+x)(x^2+1))+\sum_{x\in\F_{3^m}}\eta((x^2-x)(x^2+1))-\sum_{x\in\F_{3^m}}\eta(x^2(x^4-1))\big) \\
=&\begin{cases}
\frac18\left(3^m+1-2(-1+\sqrt{-2})^m-2(-1-\sqrt{-2})^m\right), & \mbox{if $m$ odd,}\\
\frac18\left(3^m+1+2(-3)^{\frac{m}{2}}-2(-1+\sqrt{-2})^m-2(-1-\sqrt{-2})^m \right), & \mbox{if $m$ even.}
\end{cases}
\end{align*}
Set $\om_m=(-1+\sqrt{-2})^m$, then $\ol{\om_m}=(-1-\sqrt{-2})^m$. Then $|\om_m+\ol{\om_m}|=2 \cdot 3^{\frac{m}{2}}$. Consequently, together with $m \ge 4$,
$$
|\Gamma_m| \ge \sum_{x\in\F_{3^m}} f(x)-1\ge \frac{1}{8}(3^m-4\cdot3^{m/2}-7)>0
$$
Since $|\Gamma_m|$ is an integer, then $|\Gamma_m| \ge 1$.
\end{proof}

Next, we study a linear system that is closely related to the second generalized covering radius of $M(m,3)$. 

\begin{lemma}
\label{lem-simsys}
Let $\alpha_1, \alpha_2\in\F_3$ and $r_1, r_2 \in \F_{3^m}$ for some $m \ge 1$. The system
$$
\begin{cases}
\alpha_1x_1+\alpha_2x_2=r_1 \\
\alpha_1x_1^{-1}+\alpha_2x_2^{-1}=r_2
\end{cases}
$$
is solvable for $x_1,x_2 \in\F_{3^m}^*$ if and only if one of the following holds true.
\begin{enumerate}[label=(\arabic*)]
\item If $\alpha_1=\alpha_2=0$ and $r_1=r_2=0$, then $x_1,x_2\in\F_{3^m}^*$ are arbitrary.
\item If $\alpha_1=0$, $\alpha_2\ne0$, and $\alpha_2^2=r_1r_2$, then $x_1\in\F_{3^m}^*$ is arbitrary and $x_2=\frac{r_1}{\alpha_2}=\frac{\alpha_2}{r_2}$.
\item If $\alpha_1\ne0$, $\alpha_2=0$, and $\alpha_1^2=r_1r_2$, then $x_1=\frac{r_1}{\alpha_1}=\frac{\alpha_1}{r_2}$ and $x_2\in\F_{3^m}^*$ is arbitrary.
\item If $\alpha_1\alpha_2\ne0$, then the following holds true.
\begin{enumerate}
\item[(4a)] If $r_1=r_2=0$, then $x_1\in\F_{3^m}^*$ is arbitrary and $x_2=-\frac{\alpha_1}{\alpha_2}x_1$.
\item[(4b)] If $r_1,r_2\in\F_{3^m}^*$ and $r_2=r_1^{-1}$, then $r_1r_2(r_1r_2-1)=0$,  $x_1=\frac{r_1}{2\alpha_1}$ and $x_2=\frac{r_1}{2\alpha_2}$.
\item[(4c)] If $r_1,r_2\in\F_{3^m}^*$ with $r_1r_2(r_1r_2-1)\in\square$, set $r_1r_2(r_1r_2-1)=u^2$ for some $u\in\F_{3^m}^*$. Then
$$
x_1=\frac{1}{2}(\frac{r_1}{\alpha_1}\pm\frac{u}{r_2}), \quad x_2=\frac{1}{2}(\frac{r_1}{\alpha_2}\mp\frac{\alpha_1u}{\alpha_2r_2}).
$$
\end{enumerate}
\end{enumerate}
\end{lemma}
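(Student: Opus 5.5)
Since $\alpha_1,\alpha_2\in\F_3$, I will split according to how many of $\alpha_1,\alpha_2$ vanish, as in the statement. In each case I reduce to a single unknown and read off solvability directly.

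\emph{Degenerate cases.} If $\alpha_1=\alpha_2=0$, the left-hand sides vanish identically. The system is solvable exactly when $r_1=r_2=0$, and then $x_1,x_2$ are free; this is case (1). If $\alpha_1=0\ne\alpha_2$, the system becomes $\alpha_2x_2=r_1$ and $\alpha_2x_2^{-1}=r_2$. A nonzero solution $x_2$ forces $r_1,r_2\neq 0$, and multiplying the two equations gives $\alpha_2^2=r_1r_2$. Conversely, if $\alpha_2^2=r_1r_2$, then $x_2=r_1/\alpha_2=\alpha_2/r_2$ works and $x_1$ is free; this is case (2). Case (3) is the same argument with the indices swapped.

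\emph{Main case $\alpha_1\alpha_2\ne0$.} I use $\alpha_i^2=1$, so $\alpha_i^{-1}=\alpha_i$.
\begin{enumerate}
\item If $r_1=0$, then $x_2=-\frac{\alpha_1}{\alpha_2}x_1$. Substituting into the second equation gives $\alpha_1x_1^{-1}-\frac{\alpha_2^2}{\alpha_1}x_1^{-1}=0$. This holds automatically, and it forces $r_2=0$; this yields (4a).
\item If $r_1\ne0$, set $x_2=(r_1-\alpha_1x_1)/\alpha_2$ and clear denominators in $\alpha_1x_1^{-1}+\alpha_2x_2^{-1}=r_2$. Using $\alpha_i^2=1$, this gives the quadratic
$$
\alpha_1r_2x_1^2-r_1r_2x_1+r_1=0 .
$$
If $r_2=0$, this becomes $r_1=0$, a contradiction. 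So $r_2\neq0$, and the quadratic has discriminant $r_1^2r_2^2-4\alpha_1r_1r_2=r_1r_2(r_1r_2-\alpha_1)$, since $4=1$ in $\F_3$.
\item By Lemma \ref{lem-quadeqn}(2), roots exist iff this discriminant is zero or a square. With a square root $u$, the quadratic formula gives $x_1=\frac{r_1r_2\pm u}{2\alpha_1r_2}$, and then $x_2=(r_1-\alpha_1x_1)/\alpha_2$. This reproduces the formulas of (4c), after noting that $\alpha_1^{-1}=\alpha_1$.
\item The zero-discriminant case $r_1r_2=\alpha_1$ gives the double root $x_1=\frac{r_1}{2\alpha_1}$, and then $x_2=\frac{r_1}{2\alpha_2}$; this is (4b).
\item Finally, I check that $x_1,x_2\ne0$. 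Here $x_1\ne0$ because the constant term $r_1$ is nonzero. Also $x_2=0$ would force $r_1=\alpha_1x_1$, and then the second equation would be undefined; so this is excluded by requiring $x_2\in\F_{3^m}^*$ in the derivation.
\end{enumerate}

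\emph{Main obstacle.} The hard part is matching the discriminant $r_1r_2(r_1r_2-\alpha_1)$ with the expression $r_1r_2(r_1r_2-1)$ in the statement. The statement appears to normalize $\alpha_1=1$ in (4b) and (4c); the case $\alpha_1=-1$ reduces to it by the sign change $(\alpha_1,\alpha_2,r_1,r_2)\mapsto(-\alpha_1,-\alpha_2,-r_1,-r_2)$, which leaves $r_1r_2$ unchanged. I will make this normalization explicit first, then carry out the quadratic-formula computation and verify the $\pm$ signs in the formula for $x_2$.
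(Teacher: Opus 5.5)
Your case split and the elimination of $x_2$ are essentially the paper's approach. However, the central computation in the case $\alpha_1\alpha_2\ne0$ is wrong, and the ``main obstacle'' you describe comes from that error. Substituting $x_2=\alpha_2^{-1}(r_1-\alpha_1x_1)$ and clearing denominators gives $\alpha_1(r_1-\alpha_1x_1)+\alpha_2^2x_1=r_2x_1(r_1-\alpha_1x_1)$, that is,
\[
\alpha_1r_2x_1^2-r_1r_2x_1+\alpha_1r_1=0 ,
\]
as in the paper; you dropped the factor $\alpha_1$ in the constant term. With the correct constant term the discriminant is $r_1^2r_2^2-4\alpha_1^2r_1r_2=r_1r_2(r_1r_2-1)$ for \emph{both} signs of $\alpha_1$, because $4\alpha_1^2=1$ in $\F_3$. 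Your discriminant $r_1r_2(r_1r_2-\alpha_1)$ and your double-root condition $r_1r_2=\alpha_1$ are therefore false for $\alpha_1=-1$. For example, take $\alpha_1=-1$, $\alpha_2=1$ and $r_1r_2=1$. Then $x_1=r_1$, $x_2=-r_1$ solves the system as a double root. Your criterion instead declares this case unsolvable or a two-root case, depending on whether $-1$ is a square in $\F_{3^m}$. The statement does not implicitly normalize $\alpha_1=1$; (4b) and (4c) hold as written for $\alpha_1=\pm1$.

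Your proposed remedy also conflicts with your own formula. The substitution $(\alpha_1,\alpha_2,r_1,r_2)\mapsto(-\alpha_1,-\alpha_2,-r_1,-r_2)$ preserves both the solution set and $r_1r_2$, so no correct criterion can depend on the sign of $\alpha_1$, yet $r_1r_2(r_1r_2-\alpha_1)$ does. Normalizing to $\alpha_1=1$ \emph{before} computing would rescue the argument, since at $\alpha_1=1$ your quadratic agrees with the correct one, but then the general-$\alpha_1$ discriminant must be discarded.

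With the corrected quadratic the rest of your plan works. The roots are $x_1=\frac{r_1r_2\pm u}{2\alpha_1r_2}=\frac12\bigl(\frac{r_1}{\alpha_1}\pm\frac{u}{r_2}\bigr)$, where the factor $\alpha_1^{-1}=\pm1$ on $u$ is absorbed into $\pm$. Then $x_2=\alpha_2^{-1}(r_1-\alpha_1x_1)$ gives the stated formula.

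One more point: you need to verify $x_2\ne0$, not ``exclude it by requiring'' it, or the converse direction (especially the double-root case) is not established. The only value giving $x_2=0$ is $x_1=\alpha_1r_1$. Substituting it into the corrected quadratic gives $\alpha_1r_1\ne0$, so no root of the quadratic yields $x_2=0$.
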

\begin{proof}
\noindent (1) If $\alpha_1=\alpha_2=0$, then we must have $r_1=r_2=0$ and $x_1,x_2\in\F_{3^m}^*$ are arbitrary.

\noindent (2) If $\alpha_1=0$ and $\alpha_2\ne0$, then
$$
\begin{cases}
\alpha_2x_2=r_1 \\
\alpha_2x_2^{-1}=r_2
\end{cases}
$$
Thus, we must have $r_1r_2=\alpha_2^2$. For each pair $(r_1,r_2)\in(\F_{3^m}^*)^2$ satisfying $\alpha_2^2=r_1r_2$, we have that $x_1\in\F_{3^m}^*$ is arbitrary and
$x_2=\frac{r_1}{\alpha_2}=\frac{\alpha_2}{r_2}$.

\noindent (3) The proof of Part (3) is analogous to that of Part (2).

\noindent (4) If $\alpha_1\alpha_2\ne0$, then $x_2=\alpha_2^{-1}(r_1-\alpha_1x_1)$. Substituting $x_2$ into the second equation leads to
$$
\alpha_1r_2x_1^2-r_1r_2x_1+\alpha_1r_1=0.
$$
If $r_2=0$, then $r_1=0$ and $x_2=-\frac{\alpha_1}{\alpha_2}x_1$. Hence, $x_1\in\F_{3^m}^*$ is arbitrary and $x_2=-\frac{\alpha_1}{\alpha_2}x_1$, yielding Part (4a).

\noindent If $r_2\ne0$, we have
\begin{equation}
\label{eqn-simsys-quad}
x_1^2-\frac{r_1}{\alpha_1}x_1+\frac{r_1}{r_2}=0.
\end{equation}
In view of Lemma \ref{lem-quadeqn}(2), we compute $\frac{r_1^2}{\al_1^2}-4\frac{r_1}{r_2}=r_1^2-\frac{r_1}{r_2}=\frac{1}{r_2^2}r_1r_2(r_1r_2-1)$. If $r_1=0$, then $\frac{r_1^2}{\al_1^2}-4\frac{r_1}{r_2}=0$, $x_1=\frac{r_1}{2\al_1}=0$, contradicting $x_1 \in \F_{3^m}^*$. If $r_1 \ne 0$, then there are three subcases. If $r_1r_2(r_1r_2-1) \in \boxslash$, then Equation \eqref{eqn-simsys-quad} has no $\F_{3^m}$  solutions. If $r_1r_2(r_1r_2-1)=0$, namely, $r_2=r_1^{-1}$, we have $x_1=\frac{r_1}{2\alpha_1}$ and $x_2=\frac{r_1}{2\alpha_2}$, yielding Part (4b). If $r_1r_2(r_1r_2-1)\in \square$, set $r_1r_2(r_1r_2-1)=u^2$ for some $u \in \F_{3^m}^*$. Then $x_1=\frac{1}{2}(\frac{r_1}{\alpha_1}\pm\frac{u}{r_2})$ and $x_2=\frac{1}{2}(\frac{r_1}{\alpha_2}\mp\frac{\alpha_1u}{\alpha_2r_2})$ with $x_1x_2=\frac{r_1}{r_2} \ne 0$, yielding Part (4c). 
\end{proof}

As an application of Lemma \ref{lem-simsys}, we have the following corollary.

\begin{corollary}
\label{cor-simsys}
For $1 \le i \le 4$, let $\alpha_i, \beta_i \in \F_3$.
\begin{enumerate}[label=(\arabic*)]
\item For $\theta \in \F_{3^m}^*$, the system 
$$
\begin{cases}
\alpha_1x_1+\alpha_2x_2+\alpha_3x_3+\alpha_4x_4=0 \\
\alpha_1x_1^{-1}+\alpha_2x_2^{-1}+\alpha_3x_3^{-1}+\alpha_4x_4^{-1}=\theta
\end{cases}
$$
is unsolvable for $x_i \in \F_{3^m}^*$ if at least two of $\al_i$'s are zero.
\item For $\theta \in \F_{3^m}^*$, the system 
$$
\begin{cases}
\beta_1x_1+\beta_2x_2+\beta_3x_3+\beta_4x_4=\theta \\
\beta_1x_1^{-1}+\beta_2x_2^{-1}+\beta_3x_3^{-1}+\beta_4x_4^{-1}=0
\end{cases}
$$
is unsolvable for $x_i \in \F_{3^m}^*$ if at least two of $\be_i$'s are zero.
\item For $m \ge 4$, let $\ga \in \F_{3^m} \sm \F_3$, such that $\ga(\ga+1) \in \boxslash$ and $\ga(\ga-1) \in \boxslash$. Consider the following two systems
$$
\begin{cases}
\alpha_1x_1+\alpha_2x_2+\alpha_3x_3+\alpha_4x_4=0 \\
\alpha_1x_1^{-1}+\alpha_2x_2^{-1}+\alpha_3x_3^{-1}+\alpha_4x_4^{-1}=1
\end{cases}
$$
and
$$
\begin{cases}
\beta_1x_1+\beta_2x_2+\beta_3x_3+\beta_4x_4=\gamma \\
\beta_1x_1^{-1}+\beta_2x_2^{-1}+\beta_3x_3^{-1}+\beta_4x_4^{-1}=0
\end{cases}
$$
They are simultaneously solvable for $x_i \in \F_{3^m}^*$ only if the following hold true
\begin{enumerate}
\item[(3a)] Exactly one of $\al_i$, $1 \le i \le 4$, is zero and exactly one of $\be_i$, $1 \le i \le 4$, is zero.
\item[(3b)] Suppose $\al_i=\be_j=0$ for $1 \le i,j \le 4$, then $i \ne j$. 
\end{enumerate}
\end{enumerate}
\end{corollary}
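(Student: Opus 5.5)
The plan is to treat parts (1) and (2) as direct consequences of Lemma \ref{lem-simsys}, and then derive part (3) from (1), (2) and Lemma \ref{lem-simsys} by taking $\F_3$-linear combinations of the two systems.

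\textbf{Part (1).} If at least two of the $\al_i$ are zero, relabel so that $\al_3=\al_4=0$. The system then becomes the two-term system of Lemma \ref{lem-simsys} with $r_1=0$ and $r_2=\theta\ne 0$. I go through the cases of that lemma.
\begin{itemize}
\item Case (1) needs $r_2=0$, so it fails.
\item Cases (2) and (3) need $\al^2=r_1r_2=0$ with $\al\ne0$, so they fail.
\item In case (4), with $r_2\ne0$ and $r_1=0$, the proof of Lemma \ref{lem-simsys} forces $x_1=0$, contradicting $x_1\in\F_{3^m}^*$.
\end{itemize}
Hence the system is unsolvable.

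\textbf{Part (2).} Substitute $y_i=x_i^{-1}$. This swaps the two equations, turning the system into that of part (1) with $\theta$ in the second equation, so part (2) follows from part (1).

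\textbf{Part (3).} Fix simultaneous solutions $x_1,\dots,x_4$ common to both systems. By parts (1) and (2) (applied with $\theta=1$ and $\theta=\ga$), at most one $\al_i$ and at most one $\be_i$ is zero. The key device is as follows. For $\la\in\F_3$, subtract $\la$ times the first system from the second. This gives the system with coefficients $\be_i-\la\al_i\in\F_3$ and right-hand sides $r_1=\ga$, $r_2=-\la$.

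Suppose some $\la\ne0$ makes at least two of the coefficients $\be_i-\la\al_i$ vanish. Then the resulting system has at most two nonzero terms, and Lemma \ref{lem-simsys} rules it out:
\begin{itemize}
\item Zero terms would need $r_1=\ga=0$, which is false.
\item One term $c$ would need $c^2=1=r_1r_2=-\la\ga$, i.e.\ $\ga=\mp1\in\F_3$, which is false.
\item Two terms would need $r_1r_2(r_1r_2-1)\in\square\cup\{0\}$. With $r_1r_2=-\la\ga$ this quantity equals $\la\ga(\la\ga+1)$, which is $\ga(\ga+1)$ for $\la=1$ and $\ga(\ga-1)$ for $\la=-1$. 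Both are nonsquares by hypothesis, and nonzero since $\ga\notin\F_3$.
\end{itemize}
So it remains to produce such a $\la$ in each forbidden configuration. This is done by pigeonhole on the ratios $\be_i/\al_i\in\{\pm1\}$, taken over indices where both coefficients are nonzero.
\begin{itemize}
\item If all $\al_i$ and all $\be_i$ are nonzero, there are four ratios in $\{\pm1\}$, so some value $\la$ occurs at least twice.
\item If exactly one of the two families has a zero (say $\al_j=0$ and all $\be_i\neq 0$, or vice versa), the remaining three indices give three ratios, so again some $\la\in\{\pm1\}$ is repeated, giving at least two zeros of $\be_i-\la\al_i$.
\end{itemize}
Together these give (3a). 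For (3b), suppose $\al_i=\be_i=0$ at the same index $i$. The other three indices have both coefficients nonzero, so some ratio repeats, and $\be-\la\al$ has at least three zeros (the index $i$ and the two repeated indices).

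The main obstacle is checking carefully that every case of Lemma \ref{lem-simsys} is excluded for the combined system, in particular identifying the discriminant $r_1r_2(r_1r_2-1)$ with $\ga(\ga\pm1)$. This is exactly where the hypotheses on $\ga$ enter.
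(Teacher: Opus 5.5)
Your proposal is correct and follows the paper's route. Parts (1) and (2) reduce to Lemma \ref{lem-simsys}, and your substitution $y_i=x_i^{-1}$ makes the paper's ``analogous'' explicit. Part (3) uses the same combination $\beta_i\pm\alpha_i$ of the two systems with the same pigeonhole on signs, then applies Lemma \ref{lem-simsys} with $r_1r_2(r_1r_2-1)=\gamma(\gamma\pm1)\in\boxslash$; your single parameter $\lambda$ just packages the paper's four subcases into one argument.
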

\begin{proof}
(1) Without loss of generality, we can assume $\al_1=\al_2=0$. Then the system reduces to
$$
\begin{cases}
\alpha_3x_3+\alpha_4x_4=0 \\
\alpha_3x_3^{-1}+\alpha_4x_4^{-1}=\theta
\end{cases}
$$
which is unsolvable for $x_3, x_4 \in \F_{3^m}^*$ by Lemma \ref{lem-simsys}.

\noindent (2) The proof of Part (2) is analogous to that of Part (1).

\noindent (3) For $m \ge 4$, the existence of $\ga \in \F_{3^m} \sm \F_3$, such that $\ga(\ga+1) \in \boxslash$ and $\ga(\ga-1) \in \boxslash$, follows from Proposition \ref{prop-gamma}. Combining Parts (1) and (2), we know that the two systems are simultaneously solvable only if at most one of $\al_i$, $1 \le i \le 4$, is zero and at most one of $\be_i$, $1 \le i \le 4$, is zero. Without loss of generality, it suffices to establish the two systems are not simultaneously solvable in the following four subcases.

\noindent The first subcase concerns $\al_1\al_2\al_3\al_4\be_1\be_2\be_3\be_4 \ne 0$. Combining the two systems, we obtain
$$
\begin{cases}
(\beta_1 \pm \alpha_1)x_1+(\beta_2 \pm \alpha_2)x_2+(\beta_3 \pm \alpha_3)x_3+(\beta_4 \pm \alpha_4)x_4=\gamma \\
(\beta_1 \pm \alpha_1)x_1^{-1}+(\beta_2 \pm \alpha_2)x_2^{-1}+(\beta_3 \pm \alpha_3)x_3^{-1}+(\beta_4 \pm \alpha_4)x_4^{-1}=\pm 1
\end{cases}
$$
Note that $\al_i, \be_i \in \F_3^*$ for each $1 \le i \le 4$. Among the four pairs $(\al_i,\beta_i)$, $1 \le i \le 4$, if $k$ pairs satisfy $\be_i=\al_i$, then $4-k$ pairs satisfy $\be_i=-\al_i$. Therefore, by choosing between the plus sign and the minus sign appropriately, we can make the multiset $\{\{ \beta_1 \pm \alpha_1, \beta_2 \pm \alpha_2, \beta_3 \pm \alpha_3, \beta_4 \pm \alpha_4 \}\}$ contain either $2$ or $3$ or $4$ zeroes. If the multiset contains exactly $4$ zeroes, this contradicts $(\beta_1 \pm \alpha_1)x_1+(\beta_2 \pm \alpha_2)x_2+(\beta_3 \pm \alpha_3)x_3+(\beta_4 \pm \alpha_4)x_4=\gamma$. 
If the multiset contains exactly $3$ zeroes, without loss of generality, assume $\beta_1 \pm \alpha_1 \ne 0$ and $\beta_2 \pm \alpha_2=\beta_3 \pm \alpha_3=\beta_4 \pm \alpha_4=0$. Then the system reduces to 
$$
\begin{cases}
(\beta_1 \pm \alpha_1)x_1=\gamma \\
(\beta_1 \pm \alpha_1)x_1^{-1}=\pm 1
\end{cases}
$$
which leads to $\pm \gamma=1$, contradicting $\ga \in \F_{3^m} \sm \F_3$. If the multiset contains exactly $2$ zeroes, without loss of generality, assume $\beta_1 \pm \alpha_1 \ne 0$, $\beta_2 \pm \alpha_2 \ne 0$ and $\beta_3 \pm \alpha_3=\beta_4 \pm \alpha_4=0$. Then the system reduces to 
$$
\begin{cases}
(\beta_1 \pm \alpha_1)x_1+(\beta_2 \pm \alpha_2)x_2=\gamma \\
(\beta_1 \pm \alpha_1)x_1^{-1}+(\beta_2 \pm \alpha_2)x_2^{-1}=\pm 1
\end{cases}
$$
By Lemma \ref{lem-simsys}, since $\ga(\ga+1) \in \boxslash$ and $\ga(\ga-1) \in \boxslash$, the above system is unsolvable for $x_1, x_2 \in \F_{3^m}^*$.

\noindent The second subcase concerns $\al_1=0$ and $\al_2\al_3\al_4\be_1\be_2\be_3\be_4 \ne 0$. The third subcase concerns $\be_1=0$ and $\al_1\al_2\al_3\al_4\be_2\be_3\be_4 \ne 0$. Both of them can be proved similarly as the first subcase.

\noindent The fourth subcase concerns $\al_1=\be_1=0$ and $\al_2\al_3\al_4\be_2\be_3\be_4 \ne 0$. Combining the two systems, we obtain
$$
\begin{cases}
(\beta_2 \pm \alpha_2)x_2+(\beta_3 \pm \alpha_3)x_3+(\beta_4 \pm \alpha_4)x_4=\gamma \\
(\beta_2 \pm \alpha_2)x_2^{-1}+(\beta_3 \pm \alpha_3)x_3^{-1}+(\beta_4 \pm \alpha_4)x_4^{-1}=\pm 1
\end{cases}
$$
Likewise, by choosing between the plus sign and minus sign appropriately, we can make the multiset $\{\{ \beta_2 \pm \alpha_2, \beta_3 \pm \alpha_3, \beta_4 \pm \alpha_4 \}\}$ contain either $2$ or $3$ zeroes. In both cases, the above system is unsolvable for $x_i \in \F_{3^m}^*$.

After excluding the above four subcases, we have exactly two subcases corresponding to (3a) and (3b) left. 
\end{proof}

The following proposition leads to a lower bound on $\rho_2(M(m,3))$.

\begin{proposition}
\label{prop-q3}
Let $m\geq4$. Let $\gamma\in\F_{3^m}\setminus\F_3$ such that
$$
\begin{cases}
\gamma(\gamma+1) \in\boxslash \\
\gamma(\gamma-1)  \in\boxslash \\
\gamma^2+1\in\boxslash
\end{cases}
$$
Then the following two systems are not simultaneously solvable for distinct $x_i\in\F_{3^m}^*$, $1\leq i\leq4$,
$$
\begin{cases}
\alpha_1x_1+\alpha_2x_2+\alpha_3x_3+\alpha_4x_4=0 \\
\alpha_1x_1^{-1}+\alpha_2x_2^{-1}+\alpha_3x_3^{-1}+\alpha_4x_4^{-1}=1
\end{cases}
$$
and
$$
\begin{cases}
\beta_1x_1+\beta_2x_2+\beta_3x_3+\beta_4x_4=\gamma \\
\beta_1x_1^{-1}+\beta_2x_2^{-1}+\beta_3x_3^{-1}+\beta_4x_4^{-1}=0
\end{cases}
$$
where $\alpha_i,\beta_i\in\F_3$ for each $1\leq i\leq4$. In particular, for $m \ge 4$, we have $\rho_2(M(m,3)) \ge 5$.
\end{proposition}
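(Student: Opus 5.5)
By Corollary \ref{cor-simsys}(3), the two systems can be simultaneously solvable only in two configurations: exactly one $\alpha_i$ is zero, exactly one $\beta_j$ is zero, and $i\ne j$. The plan is to rule out this remaining configuration directly, using the extra hypothesis $\gamma^2+1\in\boxslash$, which Corollary \ref{cor-simsys} did not use.

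After relabeling, assume $\alpha_4=0$, $\beta_3=0$, and all other coefficients lie in $\F_3^*$. For $i\in\{1,2\}$ we have $\alpha_i,\beta_i\in\F_3^*$, so $\beta_i=\epsilon_i\alpha_i$ with $\epsilon_i=\pm1$. As in the first subcase of Corollary \ref{cor-simsys}(3), take the combination (second system)$+\lambda\cdot$(first system) with $\lambda\in\F_3^*$. This gives
$$
\begin{cases}
(\beta_1+\lambda\alpha_1)x_1+(\beta_2+\lambda\alpha_2)x_2+\lambda\alpha_3x_3+\beta_4x_4=\gamma \\
(\beta_1+\lambda\alpha_1)x_1^{-1}+(\beta_2+\lambda\alpha_2)x_2^{-1}+\lambda\alpha_3x_3^{-1}+\beta_4x_4^{-1}=\lambda.
\end{cases}
$$
There are two cases for the signs.

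If $\epsilon_1=\epsilon_2=\epsilon$, choose $\lambda=-\epsilon$. This kills both the $x_1$ and $x_2$ terms, leaving a two-variable system in $x_3,x_4$ with nonzero coefficients $\lambda\alpha_3,\beta_4$ and right-hand sides $(\gamma,\lambda)$. By Lemma \ref{lem-simsys}(4), it is solvable only if $r_1r_2(r_1r_2-1)$ is a square or zero, where $r_1r_2=\pm\gamma$. This quantity is $\pm\gamma(\pm\gamma-1)$, i.e.\ $\gamma(\gamma-1)$ or $-\gamma(-\gamma-1)=\gamma(\gamma+1)$. Both are nonsquares by hypothesis, so this case is impossible.

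If $\epsilon_1\ne\epsilon_2$, no single $\lambda$ kills both terms. Instead, with either choice of $\lambda$ we kill one of the $x_1,x_2$ terms, leaving three variables. Rather than force two zeros, I would solve the original systems more directly. From the first system, $x_4$ does not appear, so $(x_1,x_2,x_3)$ satisfy a relation with coefficients $\alpha_1,\alpha_2,\alpha_3$. From the second system, $x_3$ does not appear, giving a relation among $x_1,x_2,x_4$. Take the two combinations $\lambda=\pm1$ and add or subtract them to isolate $\alpha_1x_1$ and $\alpha_2x_2$ pieces. Concretely, $\beta_1+\lambda\alpha_1=0$ for one $\lambda$ and $\beta_2+\lambda'\alpha_2=0$ for the other. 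Each combination becomes a three-term system of the shape $c_1u+c_2v+c_3w=\gamma$, $c_1u^{-1}+c_2v^{-1}+c_3w^{-1}=\pm1$, and the pair shares the variables $x_3,x_4$.

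I expect this mixed-sign case to be the main obstacle, with the hypothesis $\gamma^2+1\in\boxslash$ being exactly what closes it. The natural route is to eliminate $x_1$ between the two combinations, since $x_1$ has coefficient $\beta_1-\lambda\alpha_1=2\beta_1=-\beta_1$ in one and $0$ in the other. Similarly eliminate $x_2$. This should reduce everything to a quadratic in one variable whose discriminant is, up to squares, $\gamma^2+1$ (equivalently $(\gamma+\sqrt{-1})(\gamma-\sqrt{-1})$, arising because the sum of $\pm1$ right-hand sides interacts with $\gamma$). I would first try to directly express $x_3,x_4$ in terms of $\gamma$ by adding the two $\lambda=\pm1$ systems, which recovers $2\beta_4x_4+\dots$. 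I would then check that the discriminant computed via Lemma \ref{lem-quadeqn}(2) is $\eta(\gamma^2+1)=-1$. The requirement that the $x_i$ be distinct is used to exclude the degenerate zero-discriminant branch analogous to Lemma \ref{lem-simsys}(4b).

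Finally, the ``in particular'' part follows by contradiction, mirroring the argument of Theorem \ref{thm-gcv2lb}. Proposition \ref{prop-gamma} supplies such a $\gamma$ for $m\ge4$. If $\rho_2(M(m,3))\le4$, then the two syndromes $\qbinom{0}{1}$ and $\qbinom{\gamma}{0}$ would both lie in the $\F_3$-span of at most four columns $\bv(x_1),\dots,\bv(x_4)$ with distinct $x_i\in\F_{3^m}^*$, padding with unused columns if fewer are needed. That is precisely simultaneous solvability of the two systems, which is impossible. Hence $\rho_2(M(m,3))\ge5$.
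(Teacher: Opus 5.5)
\textbf{There is a genuine gap: the mixed-sign case is not proved.}

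Your reduction to the configuration of Corollary~\ref{cor-simsys}(3) is correct, and so are your equal-sign case and the deduction of $\rho_2(M(m,3))\ge 5$. In fact, choosing $\lambda=-\epsilon$ handles in one stroke what the paper splits into two cases: exactly two indices with $\alpha_i=\beta_i$ (using $\gamma(\gamma+1)\in\boxslash$), and no such index (using $\gamma(\gamma-1)\in\boxslash$).

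The gap is the mixed-sign case $\epsilon_1\ne\epsilon_2$, i.e.\ exactly one index with $\alpha_i=\beta_i$. It is the only case that needs $\gamma^2+1\in\boxslash$ and the distinctness of the $x_i$, and you only predict that a quadratic with discriminant $\gamma^2+1$ ``should'' appear. The route you sketch cannot produce it:
\begin{itemize}
\item In this case every nonzero combination $\mu\cdot(\text{first})+\nu\cdot(\text{second})$ has exactly three active variables.
\item Adding your $\lambda=1$ and $\lambda=-1$ combinations just returns twice the second system, so combining systems gives no new information.
\item Removing the third variable requires solving its linear equation and substituting into the reciprocal equation, which gives a nonlinear relation. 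You are then left with two plane cubics in the two shared variables, and must show they have no common point with nonzero, distinct coordinates. That is the real content of the proposition, and it is missing.
\end{itemize}
Also, the degenerate branch is not a ``zero-discriminant'' one: $\gamma^2+1$ is a nonsquare, hence nonzero.

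\textbf{How the paper closes this case (its case (4)).} In your labeling take $\alpha_4=\beta_3=0$, $\beta_1=\alpha_1$, $\beta_2=-\alpha_2$. Eliminate $x_3$ from the first system and $x_4$ from the second by substitution. Using $\alpha_i^2=\beta_i^2=1$ and dividing by $\alpha_1\alpha_2x_1x_2$, you get
\[
u+\tfrac{1}{\alpha_1\alpha_2}=\tfrac{x_1}{\alpha_2}+\tfrac{x_2}{\alpha_1},\qquad
u-\tfrac{1}{\alpha_1\alpha_2}=-\gamma\Big(\tfrac{1}{\alpha_2x_1}-\tfrac{1}{\alpha_1x_2}\Big),\qquad
u=\tfrac{x_1}{x_2}+\tfrac{x_2}{x_1}.
\]
The missing idea is to multiply these two relations. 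In characteristic $3$ the left-hand sides give $u^2-1=w^2$, where $w=\frac{x_1}{x_2}-\frac{x_2}{x_1}$. The right-hand sides collapse to $\frac{\gamma}{\alpha_1\alpha_2}w$. So one of two things happens:
\begin{itemize}
\item $w=\frac{\gamma}{\alpha_1\alpha_2}$. Then $x_1/x_2$ is a root of $z^2-\frac{\gamma}{\alpha_1\alpha_2}z-1$, whose discriminant $\gamma^2+4=\gamma^2+1$ is a nonsquare, so there is no root.
\item $w=0$, i.e.\ $x_2=\pm x_1$. Here $x_2=x_1$ is excluded by distinctness, and $x_2=-x_1$ must be ruled out by substituting back into the two relations, separately for $\alpha_2=\alpha_1$ and $\alpha_2=-\alpha_1$.
\end{itemize}
Without this computation, or an equivalent one, your argument leaves the proposition unproved in its hardest case.
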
 
\begin{proof}
For $m \ge 4$, the existence of $\ga \in \F_{3^m} \sm \F_3$ satisfying $\gamma(\gamma+1) \in\boxslash$, $\gamma(\gamma-1) \in\boxslash$, and $\gamma^2+1\in\boxslash$ follows from Proposition \ref{prop-gamma}. Note that if $\rho_2(M(m,3)) \le 4$, there exist distinct $x_i \in \F_{3^m}^*$, $1 \le i \le 4$, such that the two systems are simultaneously solvable, where $\al_i, \be_i \in \F_3$ are the coefficients forming $\F_3$-linear combination of at most four columns within the parity-check matrix of $M(m,3)$. Hence, in order to show $\rho_2(M(m,3)) \ge 5$ for $m \ge 4$, it suffices to show that the two systems are not simultaneously solvable whenever $m \ge 4$.

\noindent In view of Corollary \ref{cor-simsys}(3), the two systems are simultaneously solvable for $x_i \in \F_{3^m}^*$ only if exactly one of $\al_i$, $1 \le i \le 4$, is zero and exactly one of $\be_i$, $1 \le i \le 4$, is zero.

\noindent (1) If $\alpha_i=\beta_i$ for each $1\leq i \leq4$, then $\ga=0$, contradicting $\ga \in \F_{3^m} \sm \F_3$.

\noindent (2) If $\alpha_i=\beta_i$ for exactly three distinct $i$'s, without loss of generality, we can assume $\alpha_i=\beta_i$ for exactly $1\leq i\leq3$, then
$$
\begin{cases}
(\alpha_4-\beta_4)x_4=-\ga \\
(\alpha_4-\beta_4)x_4^{-1}=1
\end{cases}
$$
which implies $-\ga=(\alpha_4-\beta_4)^2 \in\F_3$, contradicting $\ga \in \F_{3^m} \sm \F_3$.

\noindent (3) If $\alpha_i=\beta_i$ for exactly two distinct $i$'s, without loss of generality, we can assume $\alpha_i=\beta_i$ for exactly $1\leq i\leq2$, then
$$
\begin{cases}
(\alpha_3-\beta_3)x_3+(\alpha_4-\beta_4)x_4=-\gamma,\\
(\alpha_3-\beta_3)x_3^{-1}+(\alpha_4-\beta_4)x_4^{-1}=1.
\end{cases}
$$
Since $-\gamma(-\gamma-1)=\gamma(\gamma+1)\in\boxslash$, by Lemma \ref{lem-simsys}, the above system is unsolvable for $x_3,x_4\in\F_{3^m}^*$.

\noindent (4) If $\alpha_i=\beta_i$ for exactly one $i$, without loss of generality, we can assume $\alpha_1=\beta_1$, and therefore,
$$
\begin{cases}
\alpha_1x_1+\alpha_2x_2+\alpha_3x_3+\alpha_4x_4=0 \\
\alpha_1x_1^{-1}+\alpha_2x_2^{-1}+\alpha_3x_3^{-1}+\alpha_4x_4^{-1}=1
\end{cases}
$$
and
$$
\begin{cases}
\alpha_1x_1+\beta_2x_2+\beta_3x_3+\beta_4x_4=\gamma \\
\alpha_1x_1^{-1}+\beta_2x_2^{-1}+\beta_3x_3^{-1}+\beta_4x_4^{-1}=0
\end{cases}
$$
In view of Corollary \ref{cor-simsys}(3), the two above systems are simultaneously solvable for $x_i \in \F_{3^m}^*$ only if exactly one of $\al_i$, $2 \le i \le 4$, is zero and exactly one of $\be_i$, $2 \le i \le 4$, is zero. Moreover, if $\al_i=\be_j=0$ for some $2 \le i,j \le 4$, then $i \ne j$. Without loss of generality, we can assume $\alpha_2=\beta_3=0$ and $\beta_4=2\alpha_4$. Therefore, we can rephrase the systems into the following two:
\begin{equation}
\label{eqn-sysA}
\begin{cases}
\alpha_1x_1+\alpha_3x_3+\alpha_4x_4=0\\
\alpha_1x_1^{-1}+\alpha_3x_3^{-1}+\alpha_4x_4^{-1}=1
\end{cases}
\end{equation}
and
\begin{equation}
\label{eqn-sysB}
\begin{cases}
2\alpha_1x_1+2\beta_2x_2+\alpha_4x_4=-\gamma,\\
2\alpha_1x_1^{-1}+2\beta_2x_2^{-1}+\alpha_4x_4^{-1}=0.
\end{cases}
\end{equation}
The system \eqref{eqn-sysA} can be rewritten as
\begin{numcases}{}
  x_3 = \frac{-\alpha_1 x_1 - \alpha_4 x_4}{\alpha_3} \nonumber \\
  \alpha_1 x_1^2 x_4 + \alpha_4 x_1 x_4^2 - \alpha_1\alpha_4(x_1^2+x_4^2) - x_1 x_4 = 0 \label{eqn-A2}
\end{numcases}
If $x_3=0$, then $x_1=-\frac{\alpha_4}{\alpha_1}x_4$. Substituting $x_1$ into \eqref{eqn-A2} gives $2\alpha_1\alpha_4x_4^2=0$, which implies $x_1=x_4=0$. Hence, 
each $(x_1,x_4)$ solution to \eqref{eqn-A2} with $x_1x_4\ne0$ leads to a solution $(x_1,x_3,x_4)$ to \eqref{eqn-sysA} with $x_1x_3x_4\ne0$.

\noindent The system \eqref{eqn-sysB} can be rewritten as
\begin{numcases}{}
  x_2=\frac{\alpha_1x_1-\alpha_4x_4-\gamma}{2\beta_2} \nonumber \\
  \alpha_1\alpha_4(x_1^2+x_4^2)+\gamma(\alpha_1x_4-\alpha_4x_1)-x_1x_4=0 \label{eqn-B2}
\end{numcases}
If $x_2=0$, then $x_1=\frac{\al_4x_4+\ga}{\al_1}$. Substituting $x_1$ into \eqref{eqn-B2} gives $x_4=-\al_4\ga$ and $x_1=0$. Hence, each $(x_1,x_4)$ solution to \eqref{eqn-B2} with $x_1x_4\ne0$ leads to a solution $(x_1,x_2,x_4)$ to \eqref{eqn-sysB} with $x_1x_2x_4\ne0$.

\noindent Consequently, in order to show that systems \eqref{eqn-sysA} and \eqref{eqn-sysB} are not simultaneously solvable for $x_1,x_2,x_3,x_4\in\F_{3^m}^*$, it suffices to show that the following system formed by \eqref{eqn-A2} and \eqref{eqn-B2} is not solvable for $x_1,x_4\in\F_{3^m}^*$:
$$
\begin{cases}
\alpha_1 x_1^2 x_4 + \alpha_4 x_1 x_4^2 - \alpha_1\alpha_4(x_1^2+x_4^2) - x_1 x_4 = 0 \\
\alpha_1\alpha_4(x_1^2+x_4^2)+\gamma(\alpha_1x_4-\alpha_4x_1)-x_1x_4=0
\end{cases}
$$
If $\al_1=0$, then the second equation of the above system simplifies to $\ga\al_4^2=-1$, which contradicts $\ga \in \F_{3^m} \sm \F_3$. Thus, we have $\alpha_1\alpha_4x_1x_4 \ne 0$. By dividing $\alpha_1\alpha_4x_1x_4$ on both sides, the system can be rephrased as
\begin{equation}
\label{eqn-sysC}
\begin{cases}
 \frac{x_1}{x_4}+\frac{x_4}{x_1}+\frac{1}{\alpha_1\alpha_4}=\frac{x_1}{\al_4}+\frac{x_4}{\al_1} \\
 \frac{x_1}{x_4}+\frac{x_4}{x_1}-\frac{1}{\alpha_1\alpha_4}=-\ga(\frac{1}{\al_4x_1}-\frac{1}{\al_1x_4})
\end{cases}
\end{equation}
Multiplying the left-hand side of \eqref{eqn-sysC}, we have
\begin{equation}
\label{eqn-sysCL}
\begin{split}
\left(\frac{x_1}{x_4}+\frac{x_4}{x_1}+\frac{1}{\alpha_1\alpha_4}\right)\left(\frac{x_1}{x_4}+\frac{x_4}{x_1}-\frac{1}{\alpha_1\alpha_4}\right)=&\left(\frac{x_1}{x_4}+\frac{x_4}{x_1}+1\right)\left(\frac{x_1}{x_4}+\frac{x_4}{x_1}-1\right) \\
=&\left( \frac{x_1}{x_4}-\frac{x_4}{x_1}\right)^2
\end{split}
\end{equation}
Multiplying the right-hand side of \eqref{eqn-sysC}, we have
\begin{equation}
\label{eqn-sysCR}
-\gamma\left(\frac{x_1}{\alpha_4}+\frac{x_4}{\alpha_1}\right)\left(\frac{1}{\alpha_4x_1}-\frac{1}{\alpha_1x_4}\right)=\frac{\ga}{\al_1\al_4}\left(\frac{x_1}{x_4}-\frac{x_4}{x_1}\right)
\end{equation}
In order to show that the system \eqref{eqn-sysC} is unsolvable for $x_1, x_4 \in \F_{3^m}^*$, it boils down to examine when the right hand sides of Equations \eqref{eqn-sysCL} and \eqref{eqn-sysCR} agree.

\noindent If $\frac{x_1}{x_4}=\frac{x_4}{x_1}$, then $\frac{x_1}{x_4}=\pm1$. Since $x_1$ and $x_4$ are distinct, we have $x_4=-x_1$. The system \eqref{eqn-sysC} simplifies to
$$
\begin{cases}
 (\frac{1}{\al_4}-\frac{1}{\al_1})x_1=\frac{1}{\al_1\al_4}-2 \\
 -\ga(\frac{1}{\al_1}+\frac{1}{\al_4})x_1^{-1}=-\frac{1}{\al_1\al_4}-2
\end{cases}
$$
Recall that $\al_1\al_4 \ne 0$. In both cases of $\al_1=\al_4$ and $\al_1 \ne \al_4$, it is straightforward to check the above system is unsolvable for $x_1, x_4 \in \F_{3^m}^*$.
\noindent If $\frac{x_1}{x_4} \ne \frac{x_4}{x_1}$, equating the right hand sides of Equations \eqref{eqn-sysCL} and \eqref{eqn-sysCR} leads to
$$
\frac{x_1}{x_4}-\frac{x_4}{x_1}=\frac{\gamma}{\alpha_1\alpha_4}
$$
which is equivalent to
$$
\left(\frac{x_1}{x_4}\right)^2-\frac{\gamma}{\alpha_1\alpha_4}\frac{x_1}{x_4}-1=0.
$$
In view of Lemma \ref{lem-quadeqn}(2), note that $(\frac{\gamma}{\alpha_1\alpha_4})^2+4=\gamma^2+1\in\boxslash$.
Then there exists no $x_1, x_4\in\F_{3^m}^*$ such that the right hand sides of Equations \eqref{eqn-sysCL} and \eqref{eqn-sysCR} agree. Consequently, the system \eqref{eqn-sysC} is unsolvable for $x_1, x_4\in\F_{3^m}^*$.

\noindent (5) If $\alpha_i\ne\beta_i$ for each $1\leq i\leq4$, in view of Corollary \ref{cor-simsys}(3), the two systems are simultaneously solvable for $x_i \in \F_{3^m}^*$ only if exactly one of $\al_i$, $1 \le i \le 4$, is zero and exactly one of $\be_i$, $1 \le i \le 4$, is zero. Moreover, if $\al_i=\be_j=0$ for some $1 \le i,j \le 4$, then $i \ne j$. Without loss of generality, we can assume $\alpha_1=\beta_2=0$,  $\beta_3=2\alpha_3$, $\beta_4=2\alpha_4$. Therefore,

$$
\begin{cases}
\alpha_2x_2+\alpha_3x_3+\alpha_4x_4=0 \\
\alpha_2x_2^{-1}+\alpha_3x_3^{-1}+\alpha_4x_4^{-1}=1
\end{cases}
$$
and
$$
\begin{cases}
\be_1x_1+2\al_3x_3+2\al_4x_4=\gamma \\
\be_1x_1^{-1}+2\al_3x_3^{-1}+2\al_4x_4^{-1}=0
\end{cases}
$$
Consequently,
$$
\begin{cases}
\beta_1x_1+\alpha_2x_2=\gamma,\\
\beta_1x_1^{-1}+\alpha_2x_2^{-1}=1.
\end{cases}
$$
Since $\gamma(\gamma-1)\in\boxslash$, by Lemma \ref{lem-simsys}, the above system is unsolvable for $x_1,x_2\in\F_{3^m}^*$.
\end{proof}

To sum up, we have the following result.

\begin{theorem}
\label{thm-Melasqeq3}
Let $m \ge 3$ and $3 \le t \le m$. Then 
$$
\rho_t(M(m,3)) \in 
\begin{cases}
[t+\lc \log_3 (2t+3) \rc,\min\{3t,2m\}] & \mbox{if $3 \le m < 2(t+1)\log_3 2+2\log_3 t$ } \\
[t+\lc \log_3 (2t+3) \rc,2t+1] & \mbox{if $2(t+1)\log_3 2+2\log_3 t \le m < t(2t-1)$} \\
[2t,2t+1] & \mbox{if $m \ge t(2t-1)$}
\end{cases}
$$
Moreover,
$$
\rho_2(M(m,3))=
\begin{cases}
                     1 & \mbox{if $m=1$} \\
                     4  & \mbox{if $m=2$} \\
                     5 & \mbox{if $m \ge 3$} 
\end{cases}
$$
\end{theorem}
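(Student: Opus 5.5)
The plan is to assemble Theorem~\ref{thm-Melasqeq3} almost entirely from results already established. The general-$t$ ranges come from combining the lower bounds of Theorems~\ref{thm-GHWlb} and~\ref{thm-asymlb} with upper bounds from Result~\ref{res-cr}, Remark~\ref{rem-gcr} and Theorem~\ref{thm-asymub}. The value $\rho_2(M(m,3))$ needs extra work only for a few small $m$.

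\emph{Ranges for $t \ge 3$.} For $m \ge 3$ and $3 \le t \le m$, Theorem~\ref{thm-GHWlb}(2) gives $\rho_t(M(m,3)) \ge t+\lc \log_3(2t+3)\rc$.

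For the first upper bound, Result~\ref{res-cr} gives $\rho_1(M(m,3))=3$ for $m\ge 3$. Remark~\ref{rem-gcr}(3) with $s=1$ then gives
$$\rho_t(M(m,3)) \le \min\{3t,\,n-k\}=\min\{3t,2m\},$$
using Remark~\ref{rem-Melas}(1) for $n-k=2m$.

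When $m \ge 2(t+1)\log_3 2+2\log_3 t$, Theorem~\ref{thm-asymub} improves the upper bound to $2t+1$. When $m \ge t(2t-1)$, Theorem~\ref{thm-asymlb} raises the lower bound to $2t$. Since $t(2t-1)$ exceeds the threshold of Theorem~\ref{thm-asymub} for $t\ge3$, the third range is $[2t,2t+1]$. Only the monotone bookkeeping between the three regimes needs checking.

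\emph{The case $t=2$ with $m\le 2$.} For $m=1$ the value is given in Remark~\ref{rem-Melas}(1).

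For $m=2$, Result~\ref{res-cr} gives $\rho_1(M(2,3))=4$. Remark~\ref{rem-gcr}(2) gives $\rho_1 \le \rho_2 \le n-k = 4$, so $\rho_2(M(2,3))=4$.

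\emph{Lower bound $\rho_2 \ge 5$ for $m \ge 3$.} For $m \ge 4$, Proposition~\ref{prop-q3} gives $\rho_2(M(m,3))\ge 5$. For $m=3$ Proposition~\ref{prop-gamma} is unavailable, so I would first look for an explicit $\gamma\in\F_{27}\sm\F_3$ satisfying the three nonsquare conditions, since the remaining argument of Proposition~\ref{prop-q3} works for any such $\gamma$. If none exists, I would exhibit by direct (computer) search a pair of syndromes in $\F_{27}^2$ not covered by any four columns.

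\emph{Upper bound $\rho_2 \le 5$ for $m \ge 3$.} This is the main obstacle. Theorem~\ref{thm-asymub} with $t=2$ requires $\sqrt{3^m}\ge 16$, i.e.\ $m\ge 6$. For $m \in\{3,4,5\}$, subadditivity only gives $\rho_2 \le 6$.

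For these $m$ I would revisit the counting of Proposition~\ref{prop-odd}, again using $X=\{x,y_1,z_1,y_2,z_2\}$. One needs some $x\notin E$ with both $\eta\big((x-a_i)(b_ix-1)((x-a_i)(b_ix-1)+4x)\big)\neq -1$. I would try to replace the crude Weil estimate by a sharper evaluation of the sum with only two factors, or use the freedom of choosing which of $\bs_1,\bs_2$ (or $\tau$-swapped versions, or linear combinations $\bs_1\pm\bs_2$ spanning the same space, as in Lemma~\ref{lem-normalization}) play the role of the system. Should that still fall short for $m=3,4$, I would settle these finitely many cases ($Q=27,81,243$) by exhaustive computer verification over normalized pairs of syndromes.
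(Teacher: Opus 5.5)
This is essentially the paper's proof: the $t\ge 3$ ranges come from the same ingredients, and for $t=2$ the paper likewise combines Theorem~\ref{thm-asymub} with Proposition~\ref{prop-q3} for $m\ge 6$ and cites a numerical experiment for $m\in\{2,3,4,5\}$, so your derivation of $\rho_2(M(2,3))=4$ from $\rho_1(M(2,3))=4=n-k$ is cleaner than the paper's, and redoing the Weil count of Proposition~\ref{prop-oddproof} exactly for $t=2$ (which gives $N'\ge Q-20-13\sqrt{Q}$) settles the upper bound for $m=5$, though not for $m=3,4$. The one step that cannot work is your first idea for $m=3$: no admissible $\gamma$ exists in $\F_{27}$, because for $m=3$ the count in Proposition~\ref{prop-gamma} gives $\sum_{x}f(x)=\frac{1}{8}(28-20)=1$, all of it from $f(1)=f(-1)=\frac12$, so you must go straight to your fallback computer search there, as the paper effectively does.
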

\begin{proof}
By Result \ref{res-cr}, $\rho_1(M(m,3))=3$ for $m \ge 3$. Combining Remark \ref{rem-gcr}(3) and Theorems \ref{thm-GHWlb}(2), \ref{thm-asymlb}, \ref{thm-asymub}, we derive the range of $\rho_t(M(m,3))$ for $t \ge 3$.

Moreover, combining Theorem \ref{thm-asymub} and Proposition \ref{prop-q3}, we have $\rho_2(M(m,3))=5$ for $m \ge 6$. In view of Remark \ref{rem-Melas}(1), we have $\rho_2(M(1,3))=1$. A numeric experiment indicates that $\rho_2(M(2,3))=4$ and $\rho_2(M(3,3))=\rho_2(M(4,3))=\rho_2(M(5,3))=5$.
\end{proof}

\section{Generalized covering radius of $M(m,2)$} \label{sec-qeq2}
In this section, we consider generalized covering radius of $M(m,2)$. We first have the following result concerning the second generalized covering radius.

\begin{theorem}
\label{thm-Melasqeq2}
Let $m \ge 3$ and $3 \le t \le m$. Then 
$$
\rho_t(M(m,2)) \in 
\begin{cases}
[t+\lc \log_2 (t+1) \rc,\min\{3t,2m\}] & \mbox{if $3 \le m < 2(t+1)+2\log_2 t$ } \\
[t+\lc \log_2 (t+1) \rc,2t+1] & \mbox{if $2(t+1)+2\log_2 t \le m < t(2t-1)$} \\
[2t,2t+1] & \mbox{if $m \ge t(2t-1)$}
\end{cases}
$$
Moreover,
$$
\rho_2(M(m,2))=
\begin{cases}
                     1 & \mbox{if $m=1$} \\
                     2 & \mbox{if $m=2$} \\
                     5 & \mbox{if $m=3$} \\
                     6 & \mbox{if $m=4$} \\ 
                     5 & \mbox{if $m \ge 5$} 
\end{cases}
$$
\end{theorem}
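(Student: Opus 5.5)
The statement has two parts, and I would treat them separately. The range of $\rho_t(M(m,2))$ for $t \ge 3$ mirrors the proofs of Theorems \ref{thm-Melasqge4} and \ref{thm-Melasqeq3}. The value of $\rho_2(M(m,2))$ needs a new binary-specific lower bound, together with computer checks for small $m$.

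\textbf{The range for $t\ge 3$.} For the lower bounds I would use Theorem \ref{thm-GHWlb}(1); since $m \ge 3$ we have $(m,2)\neq(2,2)$. This gives $\rho_t(M(m,2)) \ge t+\lc \log_2(t+1)\rc$. When $m \ge t(2t-1)$, Theorem \ref{thm-asymlb} improves this to $\rho_t \ge 2t$.

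For the upper bounds I would start from $\rho_1(M(m,2))=3$ (Result \ref{res-cr}, valid for $m\ge 3$). Remark \ref{rem-gcr}(3) then gives $\rho_t \le 3t$, and trivially $\rho_t \le n-k=2m$. When $m \ge 2(t+1)+2\log_2 t$, which is exactly the hypothesis of Theorem \ref{thm-asymub} for $q=2$, we get $\rho_t \le 2t+1$. Assembling these yields the three ranges.

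\textbf{The value of $\rho_2$.} The cases $m=1,2$ are settled in Remark \ref{rem-Melas}(1). For general $m$, the upper bound is $\rho_2 \le 2\rho_1 = 6$. For $m \ge 6$ (where $\sqrt{2^m}\ge 2\cdot 2^3=16$ requires $m\ge 8$; I would use $m\ge 8$ and push the rest to computation), Theorem \ref{thm-asymub} gives $\rho_2 \le 5$. The cited result \cite[Theorem 16]{LZM+26} also covers $m\ge 8$.

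The main new ingredient is the lower bound $\rho_2(M(m,2)) \ge 5$ for $m \ge 5$ (indeed for $m\ge 3$), strengthening Theorem \ref{thm-gcv2lb}. I would mimic Proposition \ref{prop-q3}. Pick target syndromes $\qbinom{0}{1}$ and $\qbinom{\gamma}{0}$ and assume both lie in the $\F_2$-span of $\bv(x_1),\dots,\bv(x_4)$ with distinct $x_i$. Since coefficients are $0/1$, each syndrome is a sum over a subset $S$ or $T$ of $\{x_1,\dots,x_4\}$.

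Minimum distance considerations rule out small subsets. A subset sum equal to $\qbinom{0}{*}$ with $|S|\le 2$ is impossible, since $|S|=2$ forces $x_1=x_2$. With $|S|=3$, the equations $x_1+x_2+x_3=0$ and $\sum x_i^{-1}=1$ force $(x_1x_2)/(x_1+x_2)$-type relations. I would also examine the symmetric differences $S\triangle T$, which must realize $\qbinom{\gamma}{1}$.

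This reduces the problem to a finite list of configurations ($|S|,|T|\in\{3,4\}$ and their overlaps). Each configuration reduces via $x^2+ax+b$ solvability (Lemma \ref{lem-quadeqn}(1)) to trace conditions on $\gamma$, such as $\Tr(1/\gamma)$, $\Tr(\gamma)$, or $\Tr$ of some rational function of $\gamma$. Choosing $\gamma$ to violate all of them simultaneously is possible once $2^m$ is large, by counting with Lemma \ref{lem-Weilrational} or by explicit elements. This case analysis, and guaranteeing a suitable $\gamma$ for all $m\ge 5$, is the main obstacle.

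\textbf{Small cases.} For the finitely many remaining $m\in\{3,4,5,6,7\}$ I would rely on exhaustive computation, as the paper does for $q=3$. This is needed in particular for the anomalous value $\rho_2(M(4,2))=6$, which no general argument produces.
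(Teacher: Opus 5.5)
\textbf{Gap: the lower bound $\rho_2(M(m,2))\ge 5$ is never proved.} Your handling of the ranges for $t\ge 3$ matches the paper. So do the upper bounds on $\rho_2$: subadditivity gives $6$, Theorem \ref{thm-asymub} gives $5$ for $m\ge 8$, Remark \ref{rem-Melas}(1) covers $m\le 2$, and computation covers $3\le m\le 7$.

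You call the lower bound the key new ingredient, but you leave it as ``the main obstacle.'' The finite list of configurations, the trace conditions on $\gamma$, and the existence of a suitable $\gamma$ (possibly via Lemma \ref{lem-Weilrational}) are never carried out. As written, the value $5$ for $m\ge 8$ is therefore not established.

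The missing observation is that no new argument is needed. Apply Proposition \ref{prop-supercode} with $\cC'=\cC_1^{m,2}$ and $r=2$; this is allowed because the codimension is $m\ge 3$. It gives $\rho_2(M(m,2))\ge d_2(\cC_1^{m,2})$. By Proposition \ref{prop-GHW}(1), the generalized Hamming weights of the binary Hamming code run through $[2^m-1]$ skipping exactly the powers of $2$, so $d_1=3$ and $d_2=5$. The closed-form bound $t+\lc\log_2(t+1)\rc$ in Theorem \ref{thm-GHWlb}(1) only gives $4$ at $t=2$, which is probably why you missed this. The paper uses the exact value $d_2=5$.

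\textbf{Your own route also closes quickly, with no character sums.} Suppose $\qbinom{0}{1}$ and $\qbinom{\gamma}{0}$ are subset sums $S,T$ over at most four distinct columns $\bv(x_i)$.
\begin{enumerate}
\item $|S|\ge 3$, as you note. Also $|T|\ge 3$, since $x_a^{-1}+x_b^{-1}=0$ forces $x_a=x_b$.
\item $S\ne T$, because the two targets differ. Hence $|S\triangle T|\in\{1,2\}$.
\item If $S\triangle T=\{j\}$, then $\bv(x_j)=\qbinom{\gamma}{1}$, which forces $\gamma=1$.
\item If $S\triangle T=\{a,b\}$, then $x_a+x_b=\gamma$ and $x_a^{-1}+x_b^{-1}=1$, so $x_ax_b=\gamma$. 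Thus $x_a$ is a root of $y^2+\gamma y+\gamma$, and by Lemma \ref{lem-quadeqn}(1) this requires $\Tr(\gamma^{-1})=0$.
\end{enumerate}
Now pick any $\gamma\notin\{0,1\}$ with $\Tr(\gamma^{-1})=1$, which exists for every $m\ge 2$. Both cases are then impossible, so $\rho_2(M(m,2))\ge 5$ for all $m\ge 3$, not only for large $m$.
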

\begin{proof}
By Result \ref{res-cr}, $\rho_1(M(m,2))=3$ for $m \ge 3$. Combining Remark \ref{rem-gcr}(3) and Theorems \ref{thm-GHWlb}(1), \ref{thm-asymlb}, \ref{thm-asymub}, we derive the range of $\rho_t(M(m,2))$ for $t \ge 3$.

Moreover, combining Propositions~\ref{prop-supercode}, \ref{prop-GHW}(1), Theorem \ref{thm-asymub}, we have $\rho_2(M(m,2))=5$ for $m \ge 8$. In view of Remark \ref{rem-Melas}(1), we have $\rho_2(M(1,2))=1$. A numeric experiment indicates that $\rho_2(M(2,2))=2$,  $\rho_2(M(3,2))=5$, $\rho_2(M(4,2))=6$, and $\rho_2(M(5,2))=\rho_2(M(6,2))=\rho_2(M(7,2))=5$.
\end{proof}

Combining Theorems \ref{thm-Melasqge4}, \ref{thm-Melasqeq3}, \ref{thm-Melasqeq2}, we establish Theorems \ref{thm-seccr}, \ref{thm-asymcr}.

\section{Conclusion}
\label{sec-conclusion}

In this paper, we investigate the generalized covering radii of Melas codes. For a Melas code $M(m,q)$, we establish lower and upper bounds on its generalized covering radii. Applying these bounds, for $3 \le t \le m$, whenever $m \ge t(2t-1)$, we have
$$
\rho_t(M(m,q))  \begin{cases}
                               \in  [2t,2t+1] & \mbox{if $q \in \{2,3\}$} \\
                                =   2t & \mbox{if $q \ge 4$}
                             \end{cases}
$$
Moreover, we completely determine the second generalized covering radius of Melas codes.

\section*{Appendix A}
\label{sec-appendix}

In this appendix, we establish two propositions that have been used in the proof of Propositions \ref{prop-even} and \ref{prop-odd}.

\begin{proposition}\label{prop-evenproof}
Let $q$ be even, $1 \le t \le m$, and $Q=q^m$. Suppose $\sqrt Q \ge t2^{t+1}$. Let $A$ and $B$ be two subsets of $\FQ$, such that 
$$
A=\{a_i \mid 1\le i\le t\},\qquad B=\{b_i^{-1} \mid 1\le i\le t\},
$$
where $b_i \ne 0$ for each $1 \le i \le t$. Set $E=\{0\} \cup A \cup B$. Define 
$$
N=|\{x \in\FQ\sm E \mid \Tr\Big(\frac{x}{(b_ix+1)(x+a_i)}\Big)=0 \mbox{ for each $1\le i \le t$} \}|.
$$
Then $N>0$.
\end{proposition}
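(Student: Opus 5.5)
We will count $N$ with the standard indicator expansion. For $x\in\FQ\sm E$, set $f_i(X)=\frac{X}{(b_iX+1)(X+a_i)}$. The quantity $\frac{1+\varphi(f_i(x))}{2}$ equals $1$ if $\Tr(f_i(x))=0$ and $0$ otherwise. Here $\varphi(y)=(-1)^{\Tr(y)}$, since $p=2$. Hence
$$
N=\frac{1}{2^t}\sum_{x\in\FQ\sm E}\prod_{i=1}^t\bigl(1+\varphi(f_i(x))\bigr)=\frac{1}{2^t}\sum_{J\subseteq[t]}\ \sum_{x\in\FQ\sm E}\varphi\Bigl(\sum_{i\in J}f_i(x)\Bigr).
$$
The term $J=\es$ contributes $(Q-|E|)/2^t\ge (Q-2t-1)/2^t$. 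For each nonempty $J$, write $f_J=\sum_{i\in J}f_i$. Each $f_i$ has no pole at $\infty$, since its numerator has degree $1$ and its denominator degree $2$. Its poles are at most the points $a_i$ and $b_i^{-1}$, and they are simple unless $a_i=b_i^{-1}$, when there is a double pole. So $f_J$ has no pole at $\infty$, and its poles lie in $E\sm\{0\}$, a set of size at most $2|J|$. The total pole multiplicity is at most $2|J|$, and each pole $w_j$ is linear with multiplicity $m_j\le 2|J|$ in the combined denominator.

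The key step is to apply Lemma~\ref{lem-Weilrational} to each $f_J$. First use part (2) to replace $f_J$ by $\tilde f_J$, which has only odd pole orders and satisfies $\varphi(\tilde f_J(x))=\varphi(f_J(x))$ away from the poles. Then $L=\sum_j(\tilde m_j+1)\le \sum_j 2m_j\le 4|J|$. A sharper count is available, since the poles come from the linear factors $X+a_i$ and $b_iX+1$. Each $f_i$ contributes total order $2$, so $\sum\tilde m_j\le 2|J|$ and the number of distinct poles is at most $2|J|$, which gives $L\le 4|J|$. Part (1) then yields $|\sum_{x\notin S}\varphi(f_J(x))|\le 1+(4|J|-2)\sqrt Q$. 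The sum over $x\in\FQ\sm E$ differs from the sum over $x\notin S$ by at most $|E|\le 2t+1$ terms.

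The main obstacle is the degenerate case of Remark~\ref{rem-Weilrational}, where $\tilde f_J$ is constant, i.e.\ $f_J=h^2+h+c$. We must exclude it by showing $f_J$ has a pole of odd order that survives the reduction. The first attempt is a partial-fraction computation. The residue of $f_i$ at $X=a_i$ (when $a_i\ne b_i^{-1}$) is $\frac{a_i}{a_ib_i+1}\ne 0$, a simple pole. Simple poles are untouched by the reduction, and $h^2+h$ never has simple poles, because a pole of order $s$ of $h$ gives a pole of order $2s$ of $h^2+h$. We therefore need some simple pole of $f_J$ not cancelled by the other summands. Cancellation can occur when $a_i=a_k$ or $a_i=b_k^{-1}$ for $i\ne k$. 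Here we use that the $\bs_i$ are $\Fq$-linearly independent, so that $(a_i,b_i)$ are distinct, and we check that the full principal part of $f_J$ at a point cannot be of the form "square of something plus that something". If $\tilde f_J$ is constant, it follows that all principal parts have even order with leading coefficients that are squares. Double poles arise only when $a_i=b_i^{-1}$, and their leading coefficient in characteristic $2$ is always a square. So we must argue via the Laurent coefficient of order $-1$, using $\Tr$-type obstructions. This is where most of the care goes. If necessary, we bound such $J$ crudely: for those $J$ the character sum is $\pm(Q-O(t))$, and we show its sign cannot be negative, or that such $J$ are absent.

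Granting nondegeneracy for all nonempty $J$, we have
$$
2^tN\ge Q-(2t+1)-\sum_{\es\ne J}\bigl(1+(4|J|-2)\sqrt Q+2t+1\bigr)\ge Q-2^t(2t+2)-\sqrt Q\sum_{k\ge1}\binom tk(4k-2).
$$
Since $\sum_k\binom tk(4k-2)\le 2t\,2^t$, the right side is at least $Q-2^{t+1}t\sqrt Q-2^t(2t+2)$. Under $\sqrt Q\ge t2^{t+1}$ this is $\ge \sqrt Q(\sqrt Q-t2^{t+1})-2^t(2t+2)$, which is not yet obviously positive. We therefore tighten the count using $L\le 2|J|+(\text{number of distinct poles})\le 4|J|$ together with the fact that only distinct poles in $\FQ$ are excluded. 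This recovers a margin, e.g.\ by estimating $\sum_k\binom tk(4k-2)=2^t(2t-2)+2$. The leftover positive $4\cdot 2^t\sqrt Q$ then dominates $2^t(2t+2)$, since $\sqrt Q\ge t2^{t+1}\ge 2t$. This gives $N>0$.
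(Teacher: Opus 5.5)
There is a genuine gap: the degenerate case $f_J=h^2+h+c$ is left open. Your framework is otherwise the same as the paper's: the indicator expansion, the reduction via Lemma~\ref{lem-Weilrational}(2), the Weil bound with $L\le 4|J|$, and the final count.

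Your main plan for the degenerate case, showing it never occurs, cannot work, because it does occur. If $a_i=b_i^{-1}$, then
\[
f_i(X)=\frac{X}{b_i(X+a_i)^2}=\frac{a_i}{X+a_i}+\frac{a_i^2}{(X+a_i)^2}=h^2+h,\qquad h=\frac{a_i}{X+a_i},
\]
so already $J=\{i\}$ is degenerate. The proposition has no independence hypothesis on the pairs $(a_i,b_i)$. Even in the application in Proposition~\ref{prop-even}, a vector $\bs_i=\bv(a)$ gives exactly $a_i=b_i^{-1}$. So ``granting nondegeneracy for all nonempty $J$'' is false. For such $J$ the sum equals $\varphi(c)\,(Q-|E|)$. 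If $\varphi(c)=-1$, this single term would destroy the estimate, so determining the sign is essential.

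The missing argument is short:
\begin{itemize}
\item Since $f_J$ has no pole at $\infty$, and $h^2+h$ has a pole of order $2s$ wherever $h$ has one of order $s$, the function $h$ is regular at $\infty$.
\item By the same reasoning, every pole of $h$ is a pole of $f_J$, hence lies in $E$.
\item The numerator of $f_J$ has smaller degree than its denominator, so $f_J(\infty)=0$.
\item Evaluating at $\infty$ gives $c=h(\infty)^2+h(\infty)$, and thus $\Tr(c)=0$.
\end{itemize}
Therefore $\sum_{x\in\FQ\sm E}\varphi(f_J(x))=Q-|E|\ge 0$. Degenerate $J$ can then be bounded below by the same negative quantity used for the nondegenerate ones, which is exactly how the paper handles it.

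A minor slip: the leftover term in your final count is $(2^{t+1}-2)\sqrt Q$, not $4\cdot 2^t\sqrt Q$. It still suffices, since $(2^{t+1}-2)\,t2^{t+1}\ge 2^t(2t+2)$. Also, the constant you subtracted is really $2^t(2t+2)-1$, which leaves $2^tN\ge 1$.
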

\begin{proof}
For $1 \le i \le t$, define $R_i(X)=\frac{X}{(b_iX+1)(X+a_i)} \in \FQ(X)$ and for $\es \ne I \subseteq[t]$, set $R_I=\sum_{i\in I}R_i$. We have
\begin{equation}
\label{eqn-Neven}
\begin{split}
2^{t}N&=\sum_{x\in\FQ\sm E} \prod_{i=1}^{t}\Big(1+\varphi\big(R_i(x)\big)\Big) \\
          &=\sum_{x\in\FQ\sm E} \Big(1+\sum_{\es \subsetneq I\subseteq[t]} \prod_{i\in I}\varphi\big(R_i(x)\big)\Big) \\
          &=Q-|E|+\sum_{x\in\FQ\sm E} \sum_{\es \subsetneq I\subseteq[t]} \varphi\big(\sum_{i\in I}R_i(x)\big) \\
          &=Q-|E|+\sum_{\es \subsetneq I\subseteq[t]}\sum_{x\in\FQ\sm E} \varphi\big(R_I(x)\big)
\end{split}
\end{equation}
Next, we focus on the estimate of $\sum_{x\in\FQ\sm E} \varphi\big(R_I(x)\big)$. As $b_i\ne0$, $R_i$ has its poles among $\{a_i,b_i^{-1}\}$. For $\es \ne I \subseteq [t]$, let $S_I$ be the set of poles of $R_I$, which satisfies
\begin{equation*}
S_I \subseteq\ A\cup B\ \subset E
\end{equation*}
Thus, every $R_I$ is defined at every $x\in\FQ\sm E$. Set
$$
\cR=\big\{ h^{2}+h+c\ \mid h\in\FQ(X),\ c \in \FQ\,\big\} \subseteq \FQ(X)
$$

If $R_I \in \cR$, then $R_I=h^{2}+h+c$ with $h\in\FQ(X)$ and $c\in\FQ$. For each $\es \ne I \subseteq [t]$, as $R_I(\infty)=0$, $\infty$ is not a pole of $R_I$ and $\infty$ is not a pole of $h$. Evaluating $R_I$ at $\infty$, we have $c=R_I(\infty)+h^2(\infty)+h(\infty)=h^2(\infty)+h(\infty)$. Therefore, $\Tr(c)=0$. Note that every pole of $h$ is a pole of $R_I$, hence $h$ is defined on $\FQ \sm E$. Consequently,
\begin{equation}
\label{eqn-RinR}
 \sum_{x\in\FQ\sm E}\varphi\big(R_I(x)\big)=\sum_{x\in\FQ\sm E}\varphi\big(h^{2}(x)+h(x)+c\big)=\sum_{x\in\FQ\sm E}\varphi\big(c\big)=Q-|E| \ge 0
\end{equation}

If $R_I \notin \cR$, by the definition of $R_I$, we can write
$$
R_I=c_0+\frac{v(X)}{w(X)},\qquad c_0\in\FQ,\quad \deg v<\deg w,\quad \gcd(v,w)=1,
$$
where $w(X)=\prod_{j=1}^{k}w_j(X)^{\,m_j}$ with $w_1(X),\dots,w_k(X)\in\FQ[X]$ being pairwise distinct, monic, and irreducible over $\FQ$. By the definition of $R_i$, we have $\deg w=\sum_j m_j\deg w_j\le 2|I|$. Applying Lemma~\ref{lem-Weilrational}(2), there exists $h_I \in \FQ(x)$, such that $\tilde R_I=R_I-(h_I^2+h_I)$, where $\tilde R_I$ has the set of poles $\tilde{S_I} \subseteq S_I$ and $\varphi(R_I(x))=\varphi(\tilde{R_I}(x))$ for each $x \in (\FQ \sm E) \subset (\FQ \sm S_I)$. Moreover,  
$$
\td{R_I}=\td{c_0}+\frac{\td{v}(X)}{\td{w}(X)}, \qquad \td{c_0} \in\FQ, \quad \deg \td{v}<\deg \td{w},\quad \gcd(\td{v},\td{w})=1,
$$
where $\tilde w(X)=\prod_{j=1}^{k}w_j(X)^{\tilde m_j}$ with $\td{m_j} \le m_j$ being zero or odd for each $1 \le j \le k$. Note that $R_I \notin \cR$ guarantees $\td{R_I}$ is nonconstant. Applying Lemma~\ref{lem-Weilrational}(1) to $\tilde R_I$, we have
$$
\Big|\sum_{x\in \FQ \sm \tilde S_I} \varphi\big(\tilde R_I(x)\big)\Big|\le 1+(\td{L_I}-2)\sqrt Q,
$$
where $\td{L_I}=\sum_j(\tilde m_j+1)\deg w_j \le \sum_j m_j\deg w_j+\sum_j \deg w_j \le 2|I|+2|I|=4|I|$. Note that $|E| \le 2t+1$, we have
\begin{equation}
\label{eqn-RnotinR}
\begin{split}
\Big|\sum_{x\in\FQ\sm E}\varphi\big(R_I(x)\big)\Big|=&\Big|\sum_{x\in\FQ\sm\tilde S_I}\varphi\big(\tilde R_I(x)\big)-\sum_{x\in E\sm\tilde S_I}\varphi\big(\tilde R_I(x)\big)\Big| \\
                                                                                 \le&\Big|\sum_{x\in\FQ\sm\tilde S_I}\varphi\big(\tilde R_I(x)\big)\Big|+\Big|\sum_{x\in E\sm\tilde S_I}\varphi\big(\tilde R_I(x)\big)\Big| \\
                                                                                 \le& 1+(\td{L_I}-2)\sqrt Q+|E\sm\tilde S_I| \\
                                                                                 \le& 1+(4|I|-2)\sqrt Q+2t+1 \\
                                                                                 =& (4|I|-2)\sqrt Q+2t+2
\end{split}                                                                                 
\end{equation}

Combining Equations \eqref{eqn-Neven}, \eqref{eqn-RinR}, \eqref{eqn-RnotinR}, together with $\sum_{\es\ne I\subseteq[t]}|I|=\sum_{i=1}^{t}\binom{t}{i}i=t2^{t-1}$ and $\sqrt{Q} \ge t2^{t+1}$, we have
\begin{align*}
2^{t}N&=Q-|E|+\sum_{\substack{\es \subsetneq I\subseteq[t] \\ R_I \in \cR} } \sum_{x\in\FQ\sm E} \varphi\big(R_I(x)\big)+\sum_{\substack{\es \subsetneq I\subseteq[t] \\ R_I \not\in \cR} } \sum_{x\in\FQ\sm E} \varphi\big(R_I(x)\big) \\
          &\ge Q-2t-1-\sum_{\es \subsetneq I\subseteq[t] } \big((4|I|-2)\sqrt Q+2t+2\big) \\
          &= Q-2t-1-\Big(4\,t2^{t-1}-2(2^{t}-1)\Big)\sqrt Q-(2t+2)(2^{t}-1) \\
          &=Q-t2^{t+1}\sqrt Q+2(2^{t}-1)\sqrt Q-(2t+2)(2^{t}-1)-(2t+1)\\
          &=\sqrt Q\Big(\sqrt Q-t2^{t+1}\Big)+(2^{t}-1)\Big(2\sqrt Q-2t-2\Big)-(2t+1) \\
          &\ge (2^{t}-1)\Big(2^{t+2}t-2t-2\Big)-(2t+1) \\
          &>0
\end{align*}
Consequently, $N>0$. 
\end{proof}

\begin{proposition}\label{prop-oddproof}
Let $q$ be odd, $2 \le t \le m$, and $Q=q^m$. Suppose $\sqrt Q \ge t2^{t+1}$. Let $A$ and $B$ be two subsets of $\FQ$, such that 
$$
A=\{a_i \mid 1\le i\le t\},\qquad B=\{b_i^{-1} \mid 1\le i\le t\},
$$
where $b_i \ne 0$ for each $1 \le i \le t$. Set $E=\{0\} \cup A \cup B$. Define 
$$
N=|\big\{x\in\FQ \sm E \mid \eta\big((x-a_i)(b_ix-1)((x-a_i)(b_ix-1)+4x)\big) \ne -1 \text{ for each }1\le i\le t\big\}|.
$$
Then $N>0$.
\end{proposition}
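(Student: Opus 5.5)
We mirror the argument of Proposition \ref{prop-evenproof}, replacing additive characters by the quadratic character $\eta$ and Lemma \ref{lem-Weilrational} by Lemma \ref{lem-Weilquad}. Put $P_i(X)=(X-a_i)(b_iX-1)\big((X-a_i)(b_iX-1)+4X\big)$, a polynomial of degree $4$ since $b_i\ne0$. On $\FQ\sm E$ the first two factors of $P_i(x)$ are nonzero, so $P_i(x)=0$ iff $g_i(x):=(x-a_i)(b_ix-1)+4x=0$. Instead of counting exactly, we lower-bound $N$ by the number $N'$ of $x\in\FQ\sm E$ with $\eta(P_i(x))=1$ for all $i$, together with the points where some $P_i$ vanishes. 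The cleanest route is
$$
N\ \ge\ \frac{1}{2^t}\sum_{x\in\FQ\sm E}\prod_{i=1}^t\big(1+\eta(P_i(x))\big),
$$
which holds because each summand is $0$ when some $\eta(P_i(x))=-1$, and otherwise lies in $[0,2^t]$ and is positive, so it is at most $2^t$ times the indicator of the set counted by $N$.

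Expanding the product gives
$$
2^tN\ \ge\ Q-|E|+\sum_{\es\ne I\subseteq[t]}\sum_{x\in\FQ\sm E}\eta\big(P_I(x)\big),\qquad P_I=\prod_{i\in I}P_i,
$$
using multiplicativity of $\eta$, including at zeros where both sides vanish. Each inner sum over $\FQ\sm E$ differs from the full sum over $\FQ$ by at most $|E|\le 2t+1$. For the full sum, if $P_I$ is not a constant times a square in $\ol{\FQ}[X]$, Lemma \ref{lem-Weilquad} gives the bound $(\deg P_I-1)\sqrt Q\le(4|I|-1)\sqrt Q$. Summing over $I$ and using $\sum_I|I|=t2^{t-1}$, the error is at most $(t2^{t+1}-2^t+1)\sqrt Q+(2^t-1)(2t+1)$. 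This is dominated by $Q-2t-1$ when $\sqrt Q\ge t2^{t+1}$, with the $-(2^t-1)\sqrt Q$ slack absorbing the constant terms, exactly as in the even case.

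The main obstacle is the degenerate case where $P_I$ is a square up to a constant. In that case $\eta(P_I(x))\in\{0,\pm1\}$ is constant off its zeros and could equal $-1$, which would kill the estimate. The paper's even case handled the analogous situation by noting $\Tr(c)=0$, and here we need a structural fact instead. I would first analyze a single $P_i$. The factor $(X-a_i)(b_iX-1)$ has the two roots $a_i$ and $b_i^{-1}$, and $g_i(a_i)=4a_i$, $g_i(b_i^{-1})=4b_i^{-1}$. When $a_i\ne0$ (and always at $b_i^{-1}\ne0$), these roots do not divide $g_i$, so they are simple roots of $P_i$. Thus each $P_i$ has a simple root at $b_i^{-1}$, and also at $a_i$ when $a_i\neq 0$.

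For a product $P_I$ to be a square, every root must cancel in pairs, so the root $b_i^{-1}$ of $P_i$ must appear to odd multiplicity in the other factors. This forces coincidences such as $b_j^{-1}=b_i^{-1}$ or $g_j(b_i^{-1})=0$. The key observation is that linear independence of the $\bs_i$ only forbids equal pairs $(a_i,b_i)$, so coincidences like $b_i=b_j$ with $a_i\ne a_j$ are allowed. I would therefore try to show that in any square case $P_I=c\,F^2$, the leading coefficient $c$, read off as $\prod_{i\in I} b_i^2$ times a square, is a square. Then $\eta(P_I(x))=1$ on all nonvanishing points, and the inner sum is at least $-|E|-(\text{number of zeros})\ge-(2t+1)-4|I|$, which is harmless. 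Checking that the leading coefficient of $P_i$ is $b_i^2$, hence a square, is straightforward, so $c$ is automatically a square. This settles the degenerate case and completes the bound $N>0$. The hypothesis $t\ge2$ is only inherited from Proposition \ref{prop-odd} and is not needed here.
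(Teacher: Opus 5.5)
Your route is the paper's: bound $N$ below by $2^{-t}\sum_{x\in\FQ\sm E}\prod_i(1+\eta(P_i(x)))$, expand into $\sum_I\eta(P_I(x))$, apply Lemma~\ref{lem-Weilquad} with $\deg P_I=4|I|$ when $P_I$ is not a square in $\ol{\FQ}[X]$, and handle the square case by noting that each $P_i$ has leading coefficient $b_i^2$. Your preliminary analysis of simple roots and root coincidences is not needed. It is also not quite right: $b_i^{-1}$ is a double root of $(X-a_i)(b_iX-1)$ when $a_ib_i=1$.

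Two points need repair.

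\emph{Square case.} You pass from $P_I=cF^2$ with $F\in\ol{\FQ}[X]$ to ``$\eta(P_I(x))$ is constant off its zeros''. That step requires $F\in\FQ[X]$: if $F(x)\in\ol{\FQ}\sm\FQ$, then $F(x)^2$ could be a nonsquare in $\FQ$. The paper closes this with the Frobenius $\sigma$. Take $F$ monic; then $\sigma(F)^2=F^2$ gives $\sigma(F)=\pm F$, and monicity in odd characteristic forces $\sigma(F)=F$. Equivalently, since $\FQ$ is perfect, every $\FQ$-irreducible factor of $P_I/c$ must occur to an even power. Once $F\in\FQ[X]$ and $c$ is a square, the inner sum is simply $\ge 0$.

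\emph{The hypothesis $t\ge2$.} Your closing claim that $t\ge 2$ is not needed is false for this argument. After using $\sqrt Q\ge t2^{t+1}$, the remaining margin is $(2^t-1)(t2^{t+1}-2t-1)-(2t+1)$. This is positive for $t\ge 2$ (it equals $28$ at $t=2$) but equals $-2$ at $t=1$. Concretely, for $t=1$ and $Q=17$ your estimate only yields $2N\ge Q-3\sqrt Q-6<0$. The paper invokes $t\ge 2$ exactly at this final step.
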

\begin{proof}
For $1 \le i \le t$, define $S_i(X)=(X-a_i)(b_iX-1)[(X-a_i)(b_iX-1)+4X] \in\FQ[X]$ and for $\es \ne I \subseteq[t]$, set $S_I=\prod_{i\in I}S_i$. Note that the leading coefficient of $S_i$ is a nonzero square $b_i^2$, then the leading coefficient of $S_I$ is a nonzero square in $\FQ^*$ for each $\es \ne I \subseteq[t]$. Set
$$
N'=\sum_{x\in\FQ\sm E}\ \prod_{i=1}^{t}\Big(1+\eta\big(S_i(x)\big)\Big).
$$
Then $N>0$ if and only if $N'>0$. We have
\begin{equation}
\label{eqn-Nodd}
\begin{split}
N'&=\sum_{x\in\FQ\sm E} \prod_{i=1}^{t}\Big(1+\eta\big(S_i(x)\big)\Big) \\
   &=\sum_{x\in\FQ\sm E}(1+\sum_{\es \subsetneq I\subseteq[t]} \eta\big(\prod_{ i \in I} S_i(x)\big)) \\
   &=Q-|E|+\sum_{x\in\FQ\sm E} \sum_{\es \subsetneq I\subseteq[t]} \eta\big(S_I(x)\big) \\
   &=Q-|E|+\sum_{\es \subsetneq I\subseteq[t]} \sum_{x\in\FQ\sm E} \eta\big(S_I(x)\big)
\end{split}
\end{equation}
Next, we focus on the estimate of $\sum_{x\in\FQ\sm E} \eta\big(S_I(x)\big)$. Define
$$
\cS=\big\{ f\in\FQ[X] \mid f=g^2 \text{ for some } g \in\ol{\FQ}[X] \big\}.
$$

If $S_I \in \cS$, then $S_I=a^2g^2$ with $a \in \FQ^*$ and $g \in \ol{\FQ}[X]$ is monic. We claim that $g \in \FQ[x]$. Let $\sig \in \mathrm{Gal}(\ol{\FQ}/\FQ)$ be the Frobenius automorphism such that $\sig(u)=u^{Q}$ for each $u \in \ol{\FQ}$. The automorphism $\sig$ applies to polynomials in $\ol{\FQ}[x]$ by acting on their coefficients. Since $S_I \in \FQ[X]$ and $a \in \FQ$, then
$$
\sig(g)^{2}=\sig\big(g^{2}\big)=\sig\big(S_I/a^2\big)=S_I/a^2=g^{2},
$$
which implies $\big(\sig(g)-g\big)\big(\sig(g)+g\big)=0$ in the integral domain $\ol{\FQ}[X]$. Therefore, $\sig(g)=g$ or $\sig(g)=-g$. Since $g$ is monic and the characteristic of $\FQ$ is odd, then we must have $\sig(g)=g$, implying $g \in \FQ[x]$. Therefore, $\eta(S_I(x))=\eta(a^2g(x)^2)=\eta(g(x))^2 \in \{0,1\}$. Consequently,
\begin{equation}
\label{eqn-SinS}
\sum_{x\in\FQ\sm E} \eta\big(S_I(x)\big) \ge 0
\end{equation}

If $S_I \notin \cS$, then $S_I$ is a nonzero polynomial of degree $4|I|$, which is not the square of a polynomial in $\ol{\FQ}[X]$. By Lemma \ref{lem-Weilquad}, 
$$
\Big|\sum_{x\in\FQ}\eta\big(S_I(x)\big)\Big| \le(4|I|-1)\sqrt Q .
$$
Note that $|E| \le 2t+1$, we have
\begin{equation}
\label{eqn-SnotinS}
\Big|\sum_{x\in\FQ \sm E}\eta\big(S_I(x)\big)\Big| \le(4|I|-1)\sqrt Q+2t+1
\end{equation}

Combining Equations \eqref{eqn-Nodd}, \eqref{eqn-SinS}, \eqref{eqn-SnotinS}, together with $\sum_{\es\ne I\subseteq[t]}|I|=\sum_{i=1}^{t}\binom{t}{i}i=t2^{t-1}$, $t \ge 2$, and $\sqrt{Q} \ge t2^{t+1}$, we have
\begin{align*}
N'&=Q-|E|+\sum_{\substack{\es \subsetneq I\subseteq[t] \\ S_I \in \cS} } \sum_{x\in\FQ\sm E} \eta\big(S_I(x)\big)+\sum_{\substack{\es \subsetneq I\subseteq[t] \\ S_I \not\in \cS} } \sum_{x\in\FQ\sm E} \eta\big(S_I(x)\big) \\
   &\ge Q-2t-1-\sum_{\es \subsetneq I\subseteq[t] } \big((4|I|-1)\sqrt Q+2t+1\big) \\
   &= Q-2t-1-\Big(4\,t2^{t-1}-(2^{t}-1)\Big)\sqrt Q-(2t+1)(2^{t}-1) \\
   &=Q-t2^{t+1}\sqrt Q+(2^{t}-1)\sqrt Q-2^t(2t+1) \\
   &=\sqrt Q\Big(\sqrt Q-t2^{t+1}\Big)+(2^{t}-1)\Big(\sqrt Q-2t-1\Big)-(2t+1) \\
   &\ge (2^{t}-1)\Big(2^{t+1}t-2t-1\Big)-(2t+1) \\
   &>0
\end{align*}
Consequently, $N'>0$ and thus, $N>0$. 
\end{proof}

\section*{AI Usage Disclosure}

During the preparation of this work, the authors used ChatGPT for the following purposes:
\begin{enumerate}[label=(\arabic*)]
\item surveying the recent literature on generalized covering radii and locating relevant references;
\item verifying the specialization of \cite[Theorem~1.1]{CP06} to the form stated in Lemma~\ref{lem-Weilrational}, which led to the correction of an error in that specialization in an earlier version of this manuscript;
\item identifying and repairing a gap in the proof of Theorem~\ref{thm-asymub} in an earlier version of this manuscript;
\item proofreading.
\end{enumerate}
All definitions, statements, proofs and computations in this paper have been checked in detail by the authors, who take full responsibility for the content of this work.

\end{document}